\numberwithin{equation}{section}
\newtheorem{theorem}{Theorem}[section]
\newtheorem{condition}[theorem]{Condition}
\newtheorem{corollary}{Corollary}[section]
\newtheorem{example}{Example}
\newtheorem{remark}{Remark}[section]
\newtheorem{ass}{Assumption}[section]
\newtheorem{lemma}{Lemma}
\theoremstyle{definition}
\DeclareMathOperator{\E}{\mathbb{E}}
\def\independenT#1#2{\mathrel{\rlap{$#1#2$}\mkern2mu{#1#2}}}
\newcommand{\supp}{\mathrm{supp}\,}
\newcommand{\R}{\mathbb{R}}
\begin{document}

\title{Nonparametric Identification of Random Coefficients in Endogenous and
Heterogeneous Aggregate Demand Models}
 \author{%
 \begin{tabular}{ccccc}
 Fabian Dunker\thanks{%
 School of Mathematics and Statistics, University of Canterbury, Private
 Bag 4800, Christchurch 8140, New Zealand, Email: fabian.dunker@canterbury.ac.nz} &  & Stefan Hoderlein\thanks{%
 Department of Economics, Emory University, 1602 Fishburn Dr, Atlanta, GA 30322, USA, Email: stefan.hoderlein@emory.edu. } &  & Hiroaki Kaido%
 \thanks{%
 Department of Economics, Boston University, 270 Bay State Road, Boston, MA
 02215, USA, Email: hkaido@bu.edu. We thank Steve Berry, Jeremy Fox, Amit
 Gandhi, Phil Haile, Kei Hirano, Arthur Lewbel, Marc Rysman, and Elie Tamer
 for their helpful comments. We also thank seminar and conference
 participants at Harvard, BU-BC joint mini-conference, the Demand Estimation
 and Modeling Conference 2013 and the Cowles Summer Conference 2014. Kaido gratefully acknowledges financial
 support from NSF Grant SES-1357643.} \\
 {\small University of Canterbury } &  & {\small Emory University } &  &
 {\small Boston University} \\
 &  &  &  &  \\
 &  &  &  &
 \end{tabular}%
 }
 \date{\today }
\maketitle

\begin{abstract}
This paper studies nonparametric identification in market level demand
models for differentiated products with heterogeneous consumers. We consider
a general class of models that allows for the individual specific
coefficients to vary continuously across the population and give conditions
under which the density of these coefficients, and hence also functionals
such as welfare measures, is identified. 
A key finding is that two leading models, the BLP-model
(Berry, Levinsohn, and Pakes, 1995) and the pure characteristics model
(Berry and Pakes, 2007), require considerably different conditions on the
support of the product characteristics.

\bf{Keywords:} Random Coefficients, Aggregate Demand, Nonparametric Identification
\end{abstract}


\section{Introduction}

Modeling consumer demand for products that are bought in single or discrete
units has a long and colorful history in applied economics, dating back to
at least the foundational work of McFadden (1974, 1981). While allowing for
heterogeneity, much of the earlier work on this topic, however, was not able
to deal with the fact that in particular the own price is endogenous. In a
seminal paper that provides the foundation for much of contemporaneous work
on discrete choice consumer demand, Berry, Levinsohn and Pakes (1994, BLP)
have proposed a solution to the endogeneity problem. Indeed, this work is so
appealing that it is not just applied in discrete choice demand and
empirical IO, but also increasingly in many adjacent fields, such as health,
urban or education economics, and many others. From a methodological
perspective, this line of work is quite different from traditional
multivariate choice, as it uses data on the aggregate level and integrates
out individual characteristics\footnote{%
There are extensions of the BLP framework that allow for the use of
Microdata, see Berry, Levinsohn and Pakes (2004, MicroBLP). In this paper,
we focus on the aggregate demand version of BLP, and leave an analogous work
to MicroBLP\ for future research.} to obtain a system of nonseparable
equations. This system is then inverted for unobservables for which in turn
a moment condition is then supposed to hold.

Descending in parts from the parametric work of McFadden (1974, 1981),
market-level demand models share many of its features, in particular
(parametric) distributional assumptions, but also a linear random
coefficients (RCs) structure for the latent utility. Not surprisingly, there
is increasing interest in the properties of the model, in particular which
features of the model are nonparametrically point identified, and how the
structural assumptions affect identification of the parameters of interest.
Why is the answer to these questions important? Because an empiricist
working with this model wants to understand whether the results she obtained
are a consequence of the specific parametric assumptions she invoked, or
whether they are at least qualitatively robust. In addition, nonparametric
identification provides some guidance on essential model structure and on
data requirements, in particular about instruments. Finally, understanding
the basic structure of the model makes it easier to understand how the model
can be extended. Extensions of the BLP framework that are desirable are in
particular to allow for consumption of bundles and multiple units of a product  without modeling every choice as a new separate alternative.

We are not the first to ask the nonparametric identification question for
market demand models. In a series of elegant papers, Berry and Haile (2014, BH henceforth), Berry and Haile (2020) provide important answers to many of the identification
questions. In particular, they establish conditions under which the
\textquotedblleft Berry inversion\textquotedblright , a core building block
of the BLP\ model named after Berry (1994), which allows to solve for
unobserved product characteristics, as well as the distribution of a
heterogeneous utility index are nonparametrically identified.

Our work complements this line of work in that we follow more closely the
original BLP specification and assume in addition that the utility index has
a linear random coefficients (RCs) structure. More specifically, we show how
to nonparametrically identify the distribution of random coefficients in
this framework. This result does not just close the remaining gap in the
proof of nonparametric identification of the original BLP model, but is also
important for applications because the distribution of random coefficients
allows to characterize the distribution of the changes in welfare due to a
change in observable characteristics, in particular the own price (to borrow an analogy from
the treatment effect literature, if we think of a price as a treatment, BH
recover the treatment effect on the distribution, while we recover the
distribution of treatment effects). For example, consider a change in the
characteristics of a good. The change may be due to a new regulation, an
improvement of the quality of a product, or an introduction of a new
product. Knowledge of the random coefficient density allows the researcher
to calculate the distribution of the welfare effects. This allows one to
answer various questions. For example, one may investigate whether the
change gives rise to a Pareto improvement. This is possible because, with
the distribution of the random coefficients being identified, one can track
each individual's welfare before and after the change. If a change in one of
the product characteristics is not Pareto improving, one can also calculate
the proportion of individuals who would benefit from the change and
therefore prefers the product with new characteristics.\footnote{%
Note that simultaneous changes in product characteristics and price are
allowed. Hence, one can investigate how much price change is required
to compensate for a change (e.g. downgrading of a feature) in one of the
product characteristics to let a certain fraction of individuals receive a
non-negative utility change, i.e. $P(\Delta U_{ijt}\geq 0)\geq \tau $ for
some prespecified $\tau \in \lbrack 0,1]$, where $\Delta U_{ijt}$ denotes
the utility change.} Identification of the random coefficient distribution
allows one to conduct various types of welfare analysis that are not
possible by only identifying the demand function. Our focus therefore will
be on the set of conditions under which one can uniquely identify the random
coefficient distribution from the observed demand.

Naturally, identification will depend crucially on the specific model at
hand. As it turns out, there are important differences between the classical
BLP and the pure characteristics model (see Berry and Pakes (2007), PCM\
henceforth) that stem from the presence of an alternative, individual and
market specific error, typically assumed to be logistically distributed and
hence called \textquotedblleft logit error\textquotedblright\ in the
following. A lucid discussion about the pros and cons of both approaches can
be found in Berry and Pakes (2007). One advantage of the PCM we would like
to emphasize at this point is that it is well-suited for the analysis of
welfare changes when a new product with a particular characteristic is
introduced to the market. Moreover, the pure characteristics model also
predicts a reasonable substitution pattern when the number of products is
large, while the BLP-type model may give counter-intuitive predictions. In
addition to these important economic differences, the identification
strategies including the required assumptions also differ significantly
across the two models. In particular, in the BLP model, one needs to rely on
an identification at infinity argument to isolate the unobservable for each
product. In remarkable contrast, in the PCM\ one does not require such an
argument (and therefore does not have to employ some restrictive
assumptions). Instead, in the PCM, we demonstrate that one may combine
demand on products across different markets to construct a function that
depends on the random coefficients through a single index so that we can
recover the distribution of unobserved heterogeneity without relying on
identification at infinity. We call this construction \emph{marginalization
(or aggregation) of demand}. This is possible due to the unique structure of
the PCM in which  only the product characteristics (but not the tastes for
products) determine the demand. To our knowledge, this identification 
strategy is novel.

The arguments in establishing nonparametric identification of these changes
are constructive and permit the construction of sample counterparts
estimators, using the theory in Hoderlein, Klemel\"{a} and Mammen (2010) or 
Dunker Mendoza and Reale (2021). This theory reveals that the random
coefficients density is only weakly identified, suggesting that numerical
instabilities and problems frequently reported and discussed in the BLP
literature, e.g., Dube, Fox and Su (2013), are caused or aggravated by this
feature of the model.

Another contribution in this paper is that we use the insights obtained from
the identification results to extend the market demand\ framework to cover
bundle choice (i.e., consume complementary goods together). Note that bundles can
in principle be accommodated within the BLP framework by treating them as
separate alternatives. However, this is not parsimonious as the number of
alternatives increases rapidly and with it the number of unobserved product
characteristics, making the system quickly intractable. To fix ideas,
suppose there were two goods, say good A and B. First, we allow for the
joint consumption of goods A and B, and second, we allow for the consumption
of several units of either A and/or B, without labeling it a separate
alternative. We model the utility of each bundle as a combination of the
utilities for each good and an extra utility from consuming the bundle. This
structure in turn implies that the dimension of the unobservable product
characteristic equals the number of goods $J$ instead of the number of
bundles. There are three conclusions we draw from this contribution:\ first,
depending on the type of model, the data requirements vary. In particular,
to identify all structural parts of the model, in, say, the model on bundle
choice, market shares are not the correct dependent variable any more.
Second, depending on the object of interest, the data requirements and
assumptions may vary depending on whether we want to just recover demand
elasticities, or the entire distribution of random coefficients. Third, the
parsimonious features of the structural model result in significant
overidentification of the model, which opens up the way for specification
testing, and efficient estimation. As in the classical BLP setup, in all
setups we may use the identification argument to propose a nonparametric
sample counterpart estimators.\footnote{In DHK (2018), we also use the insights obtained to
propose a parametric estimator for models where there had not been an
estimator before.}

\textbf{Related literature: }as discussed above, this paper is closely
related to both the original BLP line of work (Berry, Levinsohn and Pakes
(1994, 2004)), as well as to the recent identification analysis of Berry and
Haile (2014, 2020). Because of its generality, our approach also provides
identification analysis for the \textquotedblleft pure
characteristics\textquotedblright\ model of Berry and Pakes (2007), see also
Ackerberg, Benkard, Berry and Pakes (2007) for an overview. Other important
work in this literature that is completely or partially covered by the
identification results in this paper include Petrin (2002) and Nevo (2001).
Moreover, from a methodological perspective, we note that BLP continues a
line of work that emanates from a broader literature which in turn was
pioneered by McFadden (1974, 1981); some of our identification results
extend therefore beyond the specific market demand model at hand. Other
important recent contributions in discrete choice demand include
Gowrisankaran and Rysman (2012), Armstrong (2016) and Moon, Shum, and
Weidner (2018). Less closely related is the literature on hedonic models,
see Heckman, Matzkin and Nesheim (2010), and references therein.

In addition to this line of work, we also share some commonalities with the
work on bundle choice in IO, most notably Gentzkow (2007), and Fox and
Lazzati (2017). For some of the examples discussed in this paper, we use
Gale-Nikaido inversion results, which are related to arguments in Berry,
Gandhi and Haile (2013). Because of the endogeneity, our approach also
relates to nonparametric IV, in particular to Newey and Powell (2003),
Andrews (2017), and Dunker, Florens, Hohage, Johannes, and Mammen (2014).
Finally, our arguments are related to the literature on random coefficients
in discrete choice model, see Ichimura and Thompson (1995), Gautier and
Kitamura (2013), Fox and Gandhi (2016), Dunker, Hoderlein, Kaido, and Sherman (2018),
and Matzkin (2012). Since we use the Radon transform introduced by
Hoderlein, Klemel\"{a} and Mammen (2010, HKM) into Econometrics, 
this work is particularly close to the literature that uses the Radon
transform, in particular HKM\ and Gautier and Hoderlein (2015). Finally, the
class of models we consider is related but differs from the mixed logit
model (without endogeneity) analyzed by Fox, Kim, Ryan, and Bajari (2012)
who established the identification of the distribution of the random
coefficients from micro-level data, while maintaining the logit assumption
on the tastes for products. Our focus here is on market-level models with
endogeneity with the main goal being the identification of the distribution
of all random coefficients without any parametric assumption. As such, our
identification strategy differs significantly from theirs. Finally, after the original version of this paper, there have been recent developments on the nonparametric identification of aggregate demand models. Allen and Rehbeck (2020) study partial identification of latent complementarity in an aggregate demand model of bundles. Their focus is on what can be learned about latent complementarity when the variation of demand shifters is limited. Lu, Shi, and Tao (2019) study identification and semiparametric estimation of random coefficient logit demand models in a related but different environment, in which consumers face a growing number of products.

\textbf{Structure of the paper}: The second section lays out preliminaries
we require for our main result: We first introduce the class of models and
detail the structure of our two main setups. Still in the same section, for
completeness we quickly recapitulate the results of Berry and Haile (2014)
concerning the identification of structural demands, adapted to our setup.
The third section contains the key novel result in this paper, the
nonparametric (point-)identification of the distribution of random
coefficients in the class of discrete choice demand model with
endogeneity, which includes the BLP and PCM models. In the fourth section,
we discuss the identification in the bundles
case, including how the structural demand identification results of Berry
and Haile (2014) have to be adapted, but again focusing on the random
coefficients density. We then end with an outlook.

\section{Preliminaries}

\subsection{Model}

We begin with a setting where a consumer faces $J\in\mathbb{N}$ products and
an outside good which is labeled good 0. Throughout, we index individuals by 
$i$, products by $j$ and markets by $t$. We use upper-case letters, e.g. $%
X_{jt}$, for random variables (or vectors) that vary across markets and
lower-case letters, e.g. $x_{j}$, for particular values the random variables
(vectors) can take. In addition, we use letters without a subscript for
products e.g. $X_{t}$ to represent vectors e.g. $(X_{1t},\cdots ,X_{Jt})$.
For individual $i$ in market $t$, the (indirect) utility from consuming good 
$j$ depends on its (log) price $P_{jt}$, a vector of observable
characteristics $X_{jt}\in \mathbb{R}^{d_X}$, and an unobservable scalar
characteristic $\Xi _{jt}\in \mathbb{R}$. We model the utility from
consuming good $j$ using the linear random coefficient specification: 
\begin{equation}
U_{ijt}^{\ast }\equiv X_{jt}^{\prime }\beta _{it}+\alpha _{it}P_{jt}+\Xi
_{jt}+\sigma_\epsilon\epsilon_{ijt},~j=1,\cdots ,J~,  \label{eq:randomu}
\end{equation}%
where $(\alpha _{it},\beta _{it})^{\prime }\in \mathbb{R}^{d_X+1}$ is a
vector of random coefficients representing the tastes for the product
characteristics. For each $j$, $\epsilon_{ijt}$ represents the ``taste for
the product'' itself. Following Berry and Pakes (2007), we consider a class
of general market-level demand models that nests models with tastes for products 
($\sigma_\epsilon=1$) and without tastes for products ($\sigma_\epsilon=0$). 
The models with tastes for products include the random coefficient logit model 
used in BLP, in which case $\sigma_\epsilon=1$ and $\epsilon_{ijt},j=1,\cdots,J$ 
are i.i.d. Type-I
extreme value random variables. When $\sigma_\epsilon=0$, the model is
called the \textit{pure characteristic model} (PCM). The two models are
known to have different theoretical properties. For example, the BLP model
predicts that even with a large number of products, the mark-up remains
positive implying there is always an incentive to develop a new product. As
the number of new products grows, each individual's utility tends to
infinity.  On the other hand, in PCM, the model approaches competitive
equilibrium and the incentive to develop a new product diminishes as the
number of products increases.\footnote{%
See Berry and Pakes (2007) for more details.} As we will show below, the two
models also differ in terms of empirical contents.

Throughout, we assume that $X_{jt}$ is exogenous, while $P_{jt}$ can be
correlated with the unobserved product characteristic $\Xi _{jt}$ in an
arbitrary way. Without loss of generality, we normalize the utility from the
outside good to 0. This mirrors the setup considered in BH (2014).

We think of a large sample of individuals as $iid$ copies of this population
model. The random coefficients $\theta_{it}\equiv(\alpha_{it}
,\beta_{it},\epsilon_{i1t},\cdots,\epsilon_{iJt})^{\prime }$ vary across
individuals in any given market (or, alternatively, have a distribution in
any given market in the population), while the product characteristics vary
solely across markets. These coefficients are assumed to follow a
distribution with a density function $f_{\theta }$ with respect to Lebesgue
measure, i.e., be continuously distributed.\footnote{%
This assumption is not crucial but made for the ease of exposition. Our main
identification results (Theorems \ref{thm:fbeta} and \ref{thm:fbeta_pure}) 
hold for any Borel measure.} This density is assumed to
be common across markets, and is therefore not indexed by $t.$ As we will
show, an important aspect of our identification argument is that, once the
demand function is identified, one may recover $\Xi_{t}$ from the market
shares and other product characteristics $(X_{t},P_{t})$. Then, by creating
exogenous variations in the product characteristics and exploiting the
linear random coefficients structure, one may trace out the distribution $%
f_\theta$ of the preference that is common across markets. We note that we
can allow for the coefficients $(\alpha_{it} ,\beta_{it})$ to be alternative 
$j$ specific, and will do so in the online supplement. However, parts of the analysis
will subsequently change.

Having specified the model on the individual level, the outcomes of individual
decisions are then aggregated in every market. The econometrician observes
exactly these market level outcomes $S_{l,t}$, where $l$ belongs to some
index set denoted by $\mathbb{L}$. Below, we give two examples. The first
example is the setting of the BLP and pure characteristics models, where
individuals choose a single good out of multiple products, while the second
is about the demand for bundles.

\begin{example}[Multinomial choice]
\textrm{\textrm{\textrm{Each individual chooses the product that maximizes
her utility out of $J\in\mathbb{N}$ products. Hence, product $j$ is chosen
if 
\begin{align}
U^*_{jt}>U^*_{kt}~,~~\forall k\ne j~.
\end{align}
The demand for good $j$ in market $t$ is obtained by aggregating the
individual demand with respect to the distribution of individual
preferences. 
\begin{align} \label{eq:blp1}
\begin{split}
\varphi_j(X_t,P_t,\Xi_t)&=\int 1\{X_{jt}^{\prime }b+aP_{jt}+\sigma_\epsilon
e_{j}>-\Xi_{jt}\}\\
&\times 1\{(X_{jt}-X_{1t})^{\prime
}b+a(P_{jt}-P_{1t})+\sigma_\epsilon(e_{j}-e_{1})>-(\Xi_{jt}-\Xi_{1t})\} \\
\cdots 1\{(X_{jt}-&X_{Jt})^{\prime
}b+a(P_{jt}-P_{Jt})+\sigma_\epsilon(e_j-e_J)>-(\Xi_{jt}-\Xi_{Jt})\}f_%
\theta(b,a,e)d\theta~, 
\end{split}
\end{align}
for $j=1,\cdots,J$, while the aggregate demand for good 0 is given by 
\begin{multline*}
\varphi_0(X_t,P_t,\Xi_t)=\int 1\{X_{1t}^{\prime }b+aP_{1t}+\sigma_\epsilon
e_1<-\Xi_{1t}\}\\[-6pt]
\cdots 1\{X_{Jt}^{\prime }b+aP_{Jt}+\sigma_\epsilon
e_J<-\Xi_{Jt}\}f_\theta(b,a,e)d\theta~,
\end{multline*}
where $(b,a,e_1,\cdots,e_J)$ are placeholders for the coefficients $%
\theta_{it}=(\beta_{it},\alpha_{it},\epsilon_{i1t},\cdots,\epsilon_{iJt})$.
The researcher then observes the market shares of products $%
S_{lt}=\varphi_l(X_t,P_t,\Xi_t), l\in \mathbb{L}$, where $\mathbb{L}%
=\{0,1,\cdots, J\}.$ } } }
\end{example}

The second of example considers discrete choice, but allows for the
choice of bundles.

\begin{example}[Bundles]\label{ex:bundles}\rm
Each individual faces $J=2$
products and decides whether or not to consume a single unit of each of the
products. There are therefore four possible combinations $(Y_1,Y_2)$ of
consumption units, which we call \emph{bundles}. In addition to the utility
from consuming each good as in \eqref{eq:randomu}, the individuals gain
additional utility (or disutility) $\Delta_{it}$ if the two goods are
consumed simultaneously. Here, $\Delta_{it}$ is also allowed to vary across
individuals. The utility $U^*_{i,(Y_1,Y_2),t}$ from each bundle is therefore
specified as: 
\begin{align}
&U^*_{i,(0,0),t}=0,  \notag \\
&U^*_{i,(1,0),t}=X_{1t}^{\prime }\beta_{it}+\alpha_{it}
P_{1t}+\Xi_{1t}+\sigma_\epsilon\epsilon_{i1t},~~
U^*_{i,(0,1),t}=X_{2t}^{\prime }\beta_{it}+\alpha_{it}
P_{2t}+\Xi_{2t}+\sigma_\epsilon\epsilon_{i2t},  \notag \\
&U^*_{i,(1,1),t}=X_{1t}^{\prime }\beta_{it}+X_{2t}^{\prime
}\beta_{it}+\alpha_{it} P_{1t}+\alpha_{it} P_{2t}
+\Xi_{1t}+\Xi_{2t}+\sigma_\epsilon\epsilon_{i1t}+\sigma_\epsilon%
\epsilon_{i2t}+\Delta_{it}~,  \label{eq:bundleutil}
\end{align}
Each individual chooses a bundle that maximizes her utility. Hence, bundle $%
(y_1,y_2)$ is chosen when $U^*_{i,(y_1,y_2),t}>U^*_{i,(y^{\prime
}_1,y^{\prime }_2),t}$ for all $(y^{\prime }_1,y^{\prime }_2)\ne (y_1,y_2)$.
For example, bundle $(1,0)$ is chosen if 
\begin{align}\label{eq:10demand}
\begin{split}
&X_{1t}^{\prime }\beta_{it}+\alpha_{it}
P_{1t}+\Xi_{1t}+\sigma_\epsilon\epsilon_{i1t}> 0,\\
&X_{1t}^{\prime }\beta_{it}+\alpha_{it}
P_{1t}+\Xi_{1t}+\sigma_\epsilon\epsilon_{i1t}>X_{2t}^{\prime
}\beta_{it}+\alpha_{it} P_{2t}+\Xi_{2t}+\sigma_\epsilon\epsilon_{i2t},\\
&X_{1t}^{\prime }\beta_{it}+\alpha_{it}
P_{1t}+\Xi_{1t}+\sigma_\epsilon\epsilon_{i1t}\\
&\hspace{0.4in}>X_{1t}^{\prime
}\beta_{it}+\alpha_{it}
P_{1t}+\Xi_{1t}+\sigma_\epsilon\epsilon_{i1t}+X_{2t}^{\prime
}\beta_{it}+\alpha_{it}
P_{2t}+\Xi_{2t}+\sigma_\epsilon\epsilon_{i2t}+\Delta_{it}~.
\end{split}
\end{align}
Suppose the random coefficients $\theta_{it}=(\beta_{it}^{\prime
},\alpha_{it},\Delta_{it},\epsilon_{i1t},\epsilon_{i2t})$ have a joint
density $f_\theta$. The aggregate structural demand for $(1,0)$ can then be
obtained by integrating over the set of individuals satisfying %
\eqref{eq:10demand} with respect to the distribution of the random
coefficients: 
\begin{align}\label{eq:10agdemand}
\begin{split}
\varphi_{(1,0)}(X_t,P_t,\Xi_t)=\int 1\{&X_{1t}^{\prime }b+a
P_{1t}+\sigma_\epsilon e_1>-\Xi_{1t}\} \\
\times 1\{(X_{1t}&-X_{2t})^{\prime
}b+a(P_{1t}-P_{2t})+\sigma_\epsilon(e_1-e_2)>\Xi_{2t}-\Xi_{1t}\} \\
&\times 1\{X_{2t}^{\prime }b+a P_{2t}+\sigma_\epsilon
e_2+\Delta<-\Xi_{2t}\}f_\theta(b,a,\Delta,e)d\theta~.  
\end{split}
\end{align}
The aggregate demand on other bundles can be obtained similarly. The
econometrician then observes a vector of aggregate demand on the bundles: $%
S_{l,t}=\varphi_{l}(X_t,P_t,\Xi_t),l\in\mathbb{L}$ where $\mathbb{L}\equiv
\{(0,0),(1,0),(0,1),(1,1)\}$.
\end{example}

In Examples 2, we assume that the econometrician observes the aggregate
demand for all the respective bundles. We emphasize this point as it changes
the data requirement, and an interesting open question arises about what
happens if these requirements are not met. Examples of data sets that would
satisfy these requirements are when 1. individual observations are collected
through direct survey or scanner data on individual consumption (in every
market), 2. aggregate variables (market shares) are collected, but augmented
with a survey that asks individuals whether they consume each good
separately or as a bundle. 3. Finally, another possible data source are
producer's direct record of sales of bundles, provided each bundles are
recorded separately (e.g., when they are sold through promotional
activities). When discussing Example 2, we henceforth tacitly assume to have access to such data in
principle.

\subsection{Structural Demand}

\label{sec:inversion}

The first step toward identification of $f_\theta$ is to use a set of moment
conditions generated by instrumental variables to identify the aggregate
demand function $\varphi$. This section summarizes the identification result
obtained by BH (2014). Following BH (2014), we partition the covariates
as $X_{jt}=(X_{jt}^{(1)},X_{jt}^{(2)})\in \mathbb{R}\times\mathbb{R}%
^{d_X-1}, $ and make the following assumption.

\begin{ass}
\label{as:index} The coefficient $\beta^{(1)}_{ij}$ 
is non-random for all $j$ and is normalized to 1.
\end{ass}

Assumption \ref{as:index} requires that at least one coefficient on the
covariates is non-random. Since we may freely choose the scale of utility,
we normalize the utility by setting $\beta^{(1)}_{ij}=1$ for all $j$. Under
Assumption \ref{as:index}, the utility for product $j$ can be written as $%
U^*_{jt}=X_{jt}^{(2)^{\prime }}{\beta_{ij}}^{(2)}+\alpha_{ij}
P_{jt}+\sigma_\epsilon \epsilon_{ijt}+D_{jt}$, where $D_{jt}\equiv
X_{jt}^{(1)}+\Xi_{jt}$ is the part of the utility that is common across
individuals. Assumption \ref{as:index} (i) is arguably strong but will
provide a way to obtain valid instruments required to identify the
structural demand (see BH, 2014, Section 7 for details). Under this
assumption, $U^*_{ijt}$ is strictly increasing in $D_{jt}$ but unaffected by 
$D_{kt}$ for all $k\ne j.$ In Example 1, together with a mild regularity
condition, this is sufficient for inverting the demand system to obtain $%
\Xi_t$ as a function of the market shares $S_t$, price $P_t$, and exogenous
covariates $X_t$ (Berry, Gandhi, and Haile, 2013). In what follows, we
redefine the aggregate demand as a function of $(X_t^{(2)},P_t,D_t)$ instead
of $(X_t,P_t,\Xi_t)$ by 
\begin{equation*}
    \phi(X_t^{(2)},P_t,D_t)\equiv
\varphi(X_t,P_t,\Xi_t),
\end{equation*} where $X_t=(X_t^{(1)},X_t^{(2)})$ and $%
D_t=\Xi_t+X^{(1)}_t$ and make the following assumption

\begin{ass}
\label{as:inv1} For some subset $\tilde{\mathbb{L}}$ of $\mathbb{L}$ whose
cardinality is $J$, there exists a unique function $\psi:\mathbb{R}^{J\times
(d_X-1)}\times\mathbb{R}^J\times\mathbb{R}^{J}\to\mathbb{R}^{J}$ such that $%
D_{jt}=\psi_j(X_t^{(2)},P_t,\tilde S_t)$ for $j=1,\cdots, J$, where $\tilde
S_t$ is a subvector of $S_t$, which stacks the components of $S_t$ whose
indices belong to $\tilde{\mathbb{L}}$.
\end{ass}

Under Assumption \ref{as:inv1}, we may write 
\begin{equation}
\Xi _{jt}=\psi _{j}(X_{t}^{(2)},P_{t},\tilde{S}_{t})-X_{jt}^{(1)}.
\label{eq:invxi}
\end{equation}%
This can be used to generate moment conditions in order to identify the
aggregate demand function. 

\setcounter{example}{0}

\begin{example}[BLP, continued]
\textrm{\textrm{Let $\tilde{\mathbb{L}}=\{1,\cdots ,J\}$. In this setting,
the inversion discussed above is the standard Berry inversion. A key
condition for the inversion is that the products are \emph{connected
substitutes} (Berry, Gandhi, and Haile (2013)). The linear random
coefficient specification as in \eqref{eq:randomu} is known to satisfy this
condition. Then, Assumption \ref{as:inv1} follows. } }
\end{example}

In Example \ref{ex:bundles}, one may employ an alternative inversion
strategy to obtain $\psi$ in \eqref{eq:invxi} using only subsystems of
demand such as $\tilde{\mathbb{L}}=\{(1,0),(1,1)\}$ or $\tilde{\mathbb{L}}%
=\{(0,0),(0,1)\}$. We defer details on this case to Section \ref{sec:bundles}%
.

The inverted system in \eqref{eq:invxi}, together with the following
assumption, yields a set of moment conditions the researcher can use to
identify the structural demand.

\begin{ass}
\label{as:npiv1} There is a vector of instrumental variables $Z_{t}\in%
\mathbb{R}^{d_Z}$ such that (i) $0=E[\Xi _{jt}|Z_{t},X_{t}]$, a.s.; (ii) for
any $B:\mathbb{R}^{Jk_{2}}\times \mathbb{R}^{J}\times \mathbb{R}%
^{J}\rightarrow \mathbb{R}$ with $E[|B(X_{t}^{(2)},P_{t},\tilde{S}%
_{t})|]<\infty $, it holds that%
\begin{equation*}
E[B(X_{t}^{(2)},P_{t},\tilde{S}_{t})|Z_{t},X_{t}]=0\Longrightarrow
B(X_{t}^{(2)},P_{t},\tilde{S}_{t})=0,~a.s.
\end{equation*}
\end{ass}

Assumption \ref{as:npiv1} (i) is a mean independence assumption on $\Xi
_{jt} $ given a set of instruments $Z_{t}$, which also normalizes the
location of $\Xi _{jt}$. Assumption \ref{as:npiv1} (ii) is a completeness
condition, which is common in the nonparametric IV literature, see BH\
(2014) for a detailed discussion. However, the role it plays here is
slightly different, as the moment condition leads to an integral equation
which is different from nonparametric IV (Newey and Powell 2003), and more
resembles GMM. 
In the online supplement Section \ref{sec:fullind}, we discuss an approach based on a
strengthening of the mean independence condition to full independence. In
case such a strengthening is economically palatable, we still retain the sum 
$X_{jt}^{(1)}+\Xi _{jt}$. Where $X_{jt}^{(1)}$ is similar to a
dependent variable in nonparametric IV.

Given Assumption \ref{as:npiv1} and \eqref{eq:invxi}, the unknown function $%
\psi $ can be identified through the following conditional moment
restrictions: 
\begin{equation}
E[\psi
_{j}(X_{t}^{(2)},P_{t},S_{t})-X_{jt}^{(1)}|Z_{t},X_{t}]=0,~~j=1,\cdots ,J.
\end{equation}%
We here state this result as a theorem. It is essentially Theorem 1 of BH (2014). 

\begin{theorem}
\label{thm:npiv1} Suppose Assumptions \ref{as:index}-\ref{as:npiv1} hold.
Then, $\psi$ is identified.
\end{theorem}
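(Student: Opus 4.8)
The plan is to convert the inverted demand system into a family of conditional moment restrictions and then to invoke the completeness condition to pin the unknown function down. Assumption~\ref{as:inv1} supplies the representation \eqref{eq:invxi}, namely $\Xi_{jt}=\psi_j(X_t^{(2)},P_t,\tilde S_t)-X_{jt}^{(1)}$ with $\psi$ the unique inverse of the demand subsystem indexed by $\tilde{\mathbb L}$, and Assumption~\ref{as:index} is what makes this inversion legitimate (it renders $U_{ijt}^*$ strictly increasing in $D_{jt}=X_{jt}^{(1)}+\Xi_{jt}$ and insensitive to $D_{kt}$ for $k\neq j$, so that in Example~1 the connected-substitutes property applies). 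Combining \eqref{eq:invxi} with the mean-independence restriction of Assumption~\ref{as:npiv1}(i) gives, for each $j=1,\dots,J$,
\[
E\big[\psi_j(X_t^{(2)},P_t,\tilde S_t)-X_{jt}^{(1)}\,\big|\,Z_t,X_t\big]=0,\qquad\text{a.s.},
\]
an equation in which the only unknown is the function $\psi$, the joint law of $(S_t,X_t,P_t,Z_t)$ being observed.

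The next step is the uniqueness argument. I would take any measurable $\tilde\psi=(\tilde\psi_1,\dots,\tilde\psi_J)$ that is observationally equivalent to the true $\psi$, i.e. that is the inverse associated with some data-generating structure producing the same distribution of $(S_t,X_t,P_t,Z_t)$ and a mean-zero residual; evaluated along the observed data, $\tilde\psi$ satisfies the same displayed moment restriction. Subtracting the two moment equations for coordinate $j$ yields
\[
E\big[B_j(X_t^{(2)},P_t,\tilde S_t)\,\big|\,Z_t,X_t\big]=0,\qquad\text{a.s.},
\]
with $B_j\equiv\psi_j-\tilde\psi_j$. Since $B_j$ is a function of $(X_t^{(2)},P_t,\tilde S_t)$, the completeness condition in Assumption~\ref{as:npiv1}(ii) forces $B_j(X_t^{(2)},P_t,\tilde S_t)=0$ a.s., i.e. $\psi_j=\tilde\psi_j$ almost surely on the support of $(X_t^{(2)},P_t,\tilde S_t)$; letting $j$ range over $1,\dots,J$ gives identification of $\psi$ on that support, which is the content of the theorem.

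I do not expect a genuine obstacle here — the assumptions are doing the work, and the substantive content is exactly Theorem~1 of Berry and Haile (2014) transcribed into the present notation — so the only care needed is bookkeeping. First, one must check that $B_j$ qualifies as an admissible test function, i.e. that $E[|B_j(X_t^{(2)},P_t,\tilde S_t)|]<\infty$; this follows because $\Xi_{jt}$, being mean-independent of $(Z_t,X_t)$, is integrable and $X_{jt}^{(1)}$ enters linearly, so $\psi_j(X_t^{(2)},P_t,\tilde S_t)=\Xi_{jt}+X_{jt}^{(1)}$ (and likewise $\tilde\psi_j$) is integrable whenever $X_{jt}^{(1)}$ is. Second, one should state explicitly that identification is only on the support of $(X_t^{(2)},P_t,\tilde S_t)$ rather than pointwise on all of the ambient Euclidean space, which is all that the downstream arguments require. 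If any difficulty arises it would be purely in verifying the completeness hypothesis for concrete instrument choices, but that is assumed here and discussed in Berry and Haile (2014), so within the scope of this theorem the proof reduces to the two displays above.
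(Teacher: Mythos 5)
Your proposal is correct and follows essentially the same route as the paper: the paper's proof simply combines Assumptions \ref{as:index}--\ref{as:inv1} to obtain the representation \eqref{eq:invxi}, forms the conditional moment restriction $E[\psi_j(X_t^{(2)},P_t,\tilde S_t)-X_{jt}^{(1)}\mid Z_t,X_t]=0$, and then defers to the completeness argument of Theorem~1 in Berry and Haile (2014) --- which is precisely the subtract-and-apply-completeness step you spell out, together with the integrability bookkeeping for $B_j=\psi_j-\tilde\psi_j$. No gaps; your version is just a more explicit transcription of the argument the paper cites.
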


Once $\psi$ is identified, the structural demand $\phi$ can be identified
nonparametrically in Examples 1 and 2.

\setcounter{example}{0}

\begin{example}[Multinomial choice, continued]
\textrm{\textrm{\textrm{Recall that $\psi$ is a unique function s.t.
\begin{align}
S_{jt}=\phi_j(X^{(2)}_t,P_t,D_t),~j=1,\cdots, J~\Leftrightarrow
~\Xi_{jt}=\psi_j(X^{(2)}_t,P_t,\tilde S_t)-X^{(1)}_{jt},~j=1,\cdots, J,
\label{eq:blpeq}
\end{align}
where $\tilde S_t=(S_{1t},\cdots, S_{Jt}).$ Hence, the structural demand $%
(\phi_1,\cdots,\phi_J)$ is identified by Theorem \ref{thm:npiv1} and the
equivalence relation above. In addition, $\phi_0$ is identified through the
identity: $\phi_0=1-\sum_{j=1}^J\phi_j$. } } }
\end{example}

\begin{example}[Bundles, continued]
\textrm{\textrm{\textrm{Let $\tilde{\mathbb{L}}=\{(1,0),(1,1)\}$. Then $\psi $ is a unique function such that.
\begin{equation*}
S_{lt}=\phi _{l}(X_{t}^{(2)},P_{t},D_{t}),~l\in \tilde{\mathbb{L}}%
~~~\Leftrightarrow ~~~\Xi _{jt}=\psi _{j}(X_{t}^{(2)},P_{t},\tilde{S}%
_{t})-X_{t}^{(1)},~~j=1,2,
\end{equation*}%
where $\tilde{S}_{t}=(S_{(1,0),t},S_{(1,1),t}).$ Theorem \ref{thm:npiv1} and
the equivalence relation above then identify the demand for bundles $(1,0)$
and $(1,1)$. This, therefore, only identifies subcomponents of $\phi $.
Although these subcomponents are sufficient for recovering the random
coefficient density, one may also identify the rest of the subcomponents by
taking $\tilde{\mathbb{L}}=\{(0,0),(0,1)\}$ and applying Theorem \ref%
{thm:npiv1} again. } } }
\end{example}

\section{Identification of the Random Coefficient Density}

\label{sec:rcid}

This section contains the main innovation in this paper: We establish that
the density of random coefficients in the market-level demand models is
nonparametrically identified. Our strategy for identification of the random
coefficient density is to construct a function from the structural demand,
which is related to the density through an integral transform known as the 
\emph{Radon transform}. More precisely, we construct a function $\Phi (w,u)$
such that 
\begin{equation}
\frac{\partial \Phi (w,u)}{\partial u}=\mathcal{R}[f](w,u)~,
\label{eq:radonrep}
\end{equation}%
where $f$ is the density of interest, $w$ is a vector in $\mathbb{R}^{q}$
(with $q$ the dimension of the random coefficients), normalized to have unit
length, and $u\in \mathbb{R}$ is a scalar. In what follows, we let $\mathbb{S%
}^{q}\equiv \{v\in \mathbb{R}^{q}:\Vert v\Vert =1\}$ denote the unit sphere
in $\mathbb{R}^{q}$. $\mathcal{R}$ is the Radon transform defined pointwise
by 
\begin{equation}
\mathcal{R}[f](w,u)=\int_{P_{w,u}}f(v)d\mu _{w,u}(v).  \label{eq:radon}
\end{equation}%
where $P_{w,u}$ denotes the hyperplane $\{v\in \mathbb{R}^{q}:v^{\prime
}w=u\},$ and $\mu _{{w,u}}$ is the Lebesgue measure on $P_{w,u}$. See for
example Helgason (1999) for details on the properties of the Radon transform
including its injectivity. Our identification strategy is constructive and
will therefore suggest a natural nonparametric estimator. Applications of
the Radon transform to random coefficients models have been studied in
Beran, Feuerverger, and Hall (1996), Hoderlein, Klemel\"{a}, and Mammen
(2010), and Gautier and Hoderlein (2015).

Throughout, we maintain the following assumption.

\begin{ass}
\label{as:covariates} (i) For all $j\in\{1,\cdots, J\}$, $( X^{(2)}_{jt},
P_{jt}, D_{jt})$ are absolutely continuous with respect to Lebesgue measure
on $\mathbb{R}^{d_X-1}\times \mathbb{R}\times\mathbb{R}$; (ii) the random
coefficients $\theta$ are independent of $(X_{t},P_t,D_t)$. 
\end{ass}

Assumption \ref{as:covariates} (i) requires that $%
(X_{jt}^{(2)},P_{jt},D_{jt})$ are continuously distributed for all $j$. By
Assumption \ref{as:covariates} (ii), we assume that the covariates $%
(X_{t},P_{t},D_{t})$ are exogenous to the individual heterogeneity. These
conditions are used to invert the Radon transform.

Before proceeding further, we overview our identification strategy in
relation to the key differences between the BLP and pure characteristics
models. Heuristically, for a given $(w,u)\in\mathbb{S}^q\times\mathbb{R}$,
the Radon transform aggregates individuals whose coefficients are on the
hyperplane $P_{w,u}$. For each $(w,u)$, we relate this aggregate value to a
feature of the demand with a specific product characteristics. By varying $%
(w,u)$ and inverting the map $\mathcal{R}$ in \eqref{eq:radonrep}, we may
then recover the distribution of the random coefficients. A key step in this
identification argument is the construction of a function $\Phi$ satisfying %
\eqref{eq:radonrep}. The two demand models suggest different strategies to
construct $\Phi.$ In the BLP model, we construct $\Phi$ for each product $j$
and recover the joint distribution of the coefficients $(\beta^{(2)}_{it},%
\alpha_{it},\epsilon_{ijt})$. We take this approach because the presence of
the tastes for products requires us to isolate the demand for each product
from the rest. On the other hand, the pure characteristics model does not
require such an approach. Furthermore, both models allow the researcher to 
combine demand across different markets to construct $\Phi$.

\subsection{BLP model}

Throughout this section, we let $\sigma_\epsilon=1$.\footnote{%
Here, the scale of the taste for product $\epsilon_{ijt}$ is normalized
relative to the scale of $X^{(1)}_{jt}$ as we set the coefficient on $%
X^{(1)}_{jt}$ to 1 in Assumption \ref{as:index}.} Recall that the demand for
good $j$ with the product characteristics $(X_{t},P_{t},\Xi _{t})$ is as
given in \eqref{eq:blp1}. Since $D_{t}=X_{t}^{(1)}+\Xi _{t}$, the demand in
market $t$ with $(X_{t}^{(2)},P_{t},D_{t})=(x^{(2)},p,\delta )$ is given by: 
\begin{align}\label{eq:multnom}
\begin{split}
\phi _{j}(x^{(2)},p,\delta )=&\int 1\{x_{j}^{(2)}{}^{\prime }{b}%
^{(2)}+ap_{j}+e_j>-\delta _{j}\}\\
&\times 1\{(x_{j}^{(2)}-x_{1}^{(2)}){}^{\prime }{b}%
^{(2)}+a(p_{j}-p_{1})+(e_j-e_1)>-(\delta _{j}-\delta _{1})\} \\
\cdots 1\{(x_{j}^{(2)}-&x_{J}^{(2)})^{\prime }{b}%
^{(2)}+a(p_{j}-p_{J})+(e_j-e_J)>-(\delta _{j}-\delta _{J})\}f_{\theta
}(b^{(2)},a,e)d\theta ~.  
\end{split}
\end{align}
Suppose the vertical characteristics $\{D_{kt},k\ne j\}$ (for products other
than $j$) have a large enough support so that  $%
(X_{jt}^{(2)}-X_{kt}^{(2)})^{\prime }{\beta}^{
(2)}_{it}+\alpha_{it}(P_{jt}-P_{kt})+(\epsilon_{ijt}-%
\epsilon_{ikt})-D_{jt}>D_{kt}$ for all $k\ne j$ for some values of $%
D_{kt},k\ne j$. The demand for good $j$ for such values of $D_{kt},k\ne j$
is then 
\begin{align}  \label{eq:blp_phitilde}
\begin{split}
\tilde\Phi_j(x_{j}^{(2)},p_{j},\delta_{j})&=\lim_{\delta_1,\ldots,\delta_{j-1},%
\delta_{j+1},\ldots, \delta_J \rightarrow -\infty} \phi_j(x^{(2)},p,\delta)\\
&=\int 1\{x_{j}^{(2)}{}^{\prime }{b}^{
(2)}+ap_{j}+e_j<-\delta_{j}\}f_{\vartheta_j }(b^{(2)},a,e_j)d\vartheta_j,
\end{split}
\end{align}
where $f_{\vartheta_j }$ is the joint density of the subvector $%
\vartheta_{ijt}\equiv(\beta^{(2)}_{it},\alpha_{it},\epsilon_{ijt} )$ of the
random coefficients. Let $w\equiv (x_{j}^{(2)},p_{j},1)/\Vert
(x_{j}^{(2)},p_{j},1)\Vert $ and $u\equiv \delta_j/\Vert
(x_{j}^{(2)},p_{j},1)\Vert $. Define 
\begin{align}\label{eq:Phi}
\begin{split}
\Phi(w,u)&\equiv\tilde\Phi_j\Bigg(\frac{x_j^{(2)}}{\Vert
(x_{j}^{(2)},p_{j},1)\Vert},\frac{p_j}{\Vert (x_{j}^{(2)},p_{j},1)\Vert},%
\frac{\delta_j}{\Vert (x_{j}^{(2)},p_{j},1)\Vert}\Bigg) \\
&=\tilde\Phi_j(x_{j}^{(2)},p_{j},\delta _{j}),~~(x_{j}^{(2)},p_{j},\delta
_{j})\in \mathrm{supp}\,(X_{jt}^{(2)},P_{jt},D_{jt}),  
\end{split}
\end{align}
where the second equality holds because normalizing the scale of $%
(x_{j}^{(2)},p_{j},\delta _{j})$ does not change the value of $\tilde\Phi_j$%
. $\Phi$ then satisfies 
\begin{align}\label{eq:sum1}
\begin{split}
\Phi(w,&u )=-\int 1\{w^{\prime }\vartheta_j<-u \}f_{\vartheta_j
}(b^{(2)},a,e_j)d\vartheta_j \\
&=-\int_{-\infty }^{-u}\int_{P_{w,r}}f_{\vartheta_j }(b^{(2)},a,e_j)d\mu _{{%
w,r}}(b^{(2)},a,e_j)dr=-\int_{-\infty }^{-u}\mathcal{R}[f_{\vartheta_j
}](w,r)dr~,
\end{split}
\end{align}
Hence, by taking a derivative with respect to $u$, we may relate $\Phi$ to
the random coefficient density through the Radon transform: 
\begin{equation}
\frac{\partial \Phi(w,u)}{\partial u}=\mathcal{R}[f_{\vartheta_j }](w,u).
\label{eq:radon1}
\end{equation}%
Note that since the structural demand $\phi $ is identified by Theorem \ref%
{thm:npiv1}, $\Phi$ is nonparametrically identified as well. Hence, Eq. %
\eqref{eq:radon1} gives an operator that maps the random coefficient density
to an object identified by the moment condition studied in the previous
section. To construct $\Phi$ described above and to invert the Radon
transform, we formally make the following assumptions. Below, for each $%
1\le,j,k\le J$, we let $V_{jk} =(X_{jt}^{(2)}-X_{kt}^{(2)})^{\prime }{\beta}%
^{
(2)}_{it}+\alpha_{it}(P_{jt}-P_{kt})+(\epsilon_{ijt}-\epsilon_{ikt})-D_{jt}$
and make the following assumptions on the support of the product
characteristics.\footnote{$V_{jk}$ is a random variable that varies across
individuals and markets and hence should be denoted as $V_{ijkt}$ in
principle. For conciseness, we drop subscripts $i$ and $t$ below.}

\begin{ass}
\label{as:rc_blp} Let $\mathcal{J}$ be a nonempty subset of $\{1,\cdots,J\}.$
For each $j\in\mathcal{J}$, let $\mathrm{supp}\,(V_{jk},k\ne j)\subset \mathrm{%
supp}\,(D_{kt},k\ne j)$.
\end{ass}

Assumption \ref{as:rc_blp} requires that one may vary the vertical
characteristics of the alternative products $\{D_{kt},k\neq j\}$ on a large
enough support so that the demand for product $j$ is determined through its
choice between product $j$ and the outside good. This identification
argument therefore uses a \textquotedblleft thin\textquotedblright\
(lower-dimensional) subset of the support of the covariates, which is due to
the presence of the tastes for products. This is in remarkable contrast with
the identification of the random coefficients density in the PCM (analyzed in the next section) which does
not rely on thin sets. 

\begin{ass}
\label{as:analytic} One of the following conditions hold

\begin{enumerate}
\item[(i)] $\bigcup_{j\in \mathcal{J}}\mathrm{supp}\,(X_{jt}^{(2)},P_{jt},D_{jt})$
has full support in $\mathbb{R}^{d_X-1}\times\mathbb{R}\times\mathbb{R}$.

\item[(ii)] $\bigcup_{j\in \mathcal{J}}\mathrm{supp}\,(X_{jt}^{(2)},P_{jt})$
contains an open ball $B_\mathcal{J} \subset \mathbb{R}^{d_X-1}\times\mathbb{%
R}$. For every $(x,p) \in B_\mathcal{J}$ and every $(b^{(2)},a,e_j) \in 
\mathrm{supp}\,(\vartheta_j)$ it holds that 
\begin{align}
\big(x,p,-x^{\prime} b^{(2)} - ap - e_j\big) \in \bigcup_{j\in \mathcal{J}}%
\mathrm{supp}\,\left(X_{jt}^{(2)},P_{jt},D_{jt}\right).  \label{eq:supp_blp}
\end{align}
Furthermore, all the absolute moments of each component of $\theta_{it}$ are
finite, and for any fixed $z\in \mathbb{R}_+$, $\lim_{l\to\infty}\frac{z^l%
}{l!} E[(|\theta_{it}^{(1)}|+\cdots+|\theta^{(d_\theta)}_{it}|)^l]=0$. 
\end{enumerate}
\end{ass}

Assumption \ref{as:analytic} (i) is our benchmark assumption. Under this
assumption, no restrictions on $\theta_{it}$ are necessary for
identification. In fact, the identification strategy would be valid for
arbitrary Borel measures and may also be applied to settings where $%
\theta_{it}$ does not have a density.\footnote{%
More precisely, the Radon transform $\mathcal{R}[f_{\vartheta _{j}}](w,u)$
gives $f_{\vartheta _{j}}$'s integral along each hyperplane $P_{w,u}=\{v\in 
\mathbb{R}^{d_{\theta }}:v^{\prime }w=u\}$ defined by the \emph{angle} $%
w=(x_{j}^{(2)},p_{j},1)/\Vert (x_{j}^{(2)},p_{j},1)\Vert $ and \emph{offset} 
$u=\delta _{j}/\Vert (x_{j}^{(2)},p_{j},1)\Vert $. For recovering $%
f_{\vartheta _{j}}$ from its Radon transform, one needs exogenous variations
in both. Our proof uses the fact that varying $w$ over the hemisphere $%
\mathbb{H}_{+}\equiv \{w=(w_{1},w_{2},\cdots ,w_{d_{\vartheta _{j}}})\in 
\mathbb{S}^{d_{\vartheta _{j}}-1}:w_{d_{\vartheta _{j}}}\geq 0\}$ and $u$
over $\mathbb{R}$ suffices to recover $f_{\vartheta _{j}}$.} However, this
large support assumption is stringent and may be violated by various product
characteristics and prices used in practice. Hence, it should be viewed as a
benchmark to understand what the model requires to identify the distribution 
$f_\theta$ of $\theta_{it}$ if one does not impose any restriction on it.

Assumption \ref{as:analytic} (ii) is an alternative condition, which relaxes
the support requirement significantly. Instead of a large support, it is
enough for the product characteristics to have a properly combined support
that contains a (possibly small) open ball $B_{\mathcal{J}}$ in it. This
includes as a special case where a single product's characteristics $%
(X_{jt}^{(2)},P_{jt})$ 
contains an open ball, which can be met in various applications. Even if
such a product does not exist, identification of the random coefficient
density is possible as long as the required support condition is met by
combining the supports of multiple products belonging to $\mathcal{J}$. This
means that our identification strategy may use variations of $%
(X_{jt}^{(2)},P_{jt})$ across products. To illustrate, consider three
products $J=3$. If $(D_{2t},D_{3t})$ have a large support in the sense of
Assumption \ref{as:rc_blp} ($\mathcal{J}=\{1\}$ in this case),
identification of the random coefficient density is possible as long as the
characteristics of good 1 contains an open ball. If all $\{D_{jt}\}_{j=1}^3$
jointly have a large support (this implies $\mathcal{J}=\{1,2,3\}$), our
requirement on $(X_{jt}^{(2)},P_{jt})$ becomes even milder as we only need
to construct an open ball by combining the characteristics of all three products.

The condition in \eqref{eq:supp_blp} allows for a bounded support of $%
(X_{jt}^{(2)},P_{jt})$. Further, if $\vartheta _{j}$ has a bounded
support, Assumption \ref{as:analytic} (ii) will allow for a bounded support of $D_{j}$. The price to pay for this relaxation of the
support requirement is a regularity assumption on the moments of $\theta
_{it}$. This rules out heavy tailed distributions that are not determined by
their moments. A sufficient, yet stronger than necessary, condition for this
assumption is a compact support of $f_{\theta }$. Under Assumption \ref%
{as:analytic} (ii), the characteristic function $w\mapsto \varphi
_{\vartheta _{j}}(tw)$ of $\vartheta _{ijt}$ (a key element of the Radon
inversion) is analytic and thereby uniquely determined by its restriction to
a non-empty full dimensional subset of its domain.\footnote{%
This type of moment condition on $\theta _{it}$ is common in the recent
literature. See, for example, Hoderlein, Holzmann, and Meister (2014) and
Masten (2014).} 
Hence, $f_{\vartheta _{j}}$ can be identified if one varies $%
(X_{jt}^{(2)},P_{jt})$ on a full dimensional subset.

Under the conditions given in the theorem below, the Radon inversion
identifies $f_{\vartheta_j}$. If one is interested in the joint density of
the coefficients on the product characteristics $(\beta^{(2)}_{it},%
\alpha_{it})$, one may stop here as marginalizing $f_{\vartheta_j}$ gives
the desired density. The joint distribution of the coefficients including
the tastes for products can be identified under an additional independence
assumption. We state this result in the following theorem.

\begin{theorem}
\label{thm:fbeta} Suppose Assumptions \ref{as:index}-\ref{as:analytic} hold.
Suppose the conditional distribution of $\epsilon_{ijt}$ given $%
(\beta^{(2)}_{it},\alpha_{it})$ is identical for all $j\in\mathcal{J}$.
Then, (i) for each $j\in\mathcal{J}$, the density $f_{\vartheta_j}$ is
identified, where $\vartheta_{ijt}=(\beta^{(2)}_{it},\alpha_{it},%
\epsilon_{ijt})$; (ii) If, in addition, $\{\epsilon_{ijt},j\in\mathcal{J}\}$
are independently distributed (across $j$) conditional on $%
(\beta^{(2)}_{it},\alpha_{it})$, the joint density $f_{\theta_{\mathcal{J}}}$
of $\theta_{\mathcal{J}}=(\beta^{(2)}_{it},\alpha_{it},\{\epsilon_{ijt}\}_{j%
\in\mathcal{J}})$ is identified.
\end{theorem}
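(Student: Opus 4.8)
The plan is to invert the Radon-transform identity \eqref{eq:radon1} to recover $f_{\vartheta_j}$, and then to reconstruct the full joint density $f_{\theta_{\mathcal{J}}}$ from its conditionals using the additional independence hypothesis.

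First, since the conditional law of $\epsilon_{ijt}$ given $(\beta_{it}^{(2)},\alpha_{it})$ is the same for every $j\in\mathcal{J}$, the density $f_{\vartheta_j}$ of $\vartheta_{ijt}=(\beta_{it}^{(2)},\alpha_{it},\epsilon_{ijt})$ does not depend on $j$; write $f_{\vartheta}$ for this common density and $q\equiv d_X+1$ for its dimension. Fix $j\in\mathcal{J}$. By Theorem \ref{thm:npiv1} the structural demand $\phi_j$ is identified on $\mathrm{supp}(X_t^{(2)},P_t,D_t)$, and Assumption \ref{as:rc_blp} guarantees that $\tilde\Phi_j$ in \eqref{eq:blp_phitilde} is obtained from $\phi_j$ evaluated at product configurations lying inside that support (pushing the rival vertical characteristics $\{D_{kt}\}_{k\ne j}$ so that product $j$'s demand reduces to its binary choice against the outside good); hence $\tilde\Phi_j$, and therefore $\Phi$ in \eqref{eq:Phi}, is identified on its domain. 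By \eqref{eq:radon1}, $\mathcal{R}[f_{\vartheta}](w,u)$ is then identified at every $(w,u)$ of the form $w=(x_j^{(2)},p_j,1)/\|(x_j^{(2)},p_j,1)\|$, $u=\delta_j/\|(x_j^{(2)},p_j,1)\|$ with $(x_j^{(2)},p_j,\delta_j)\in\mathrm{supp}(X_{jt}^{(2)},P_{jt},D_{jt})$ and $j$ ranging over $\mathcal{J}$; the common-conditional-law hypothesis is precisely what lets us pool these across $j$ as evaluations of the single transform $\mathcal{R}[f_\vartheta]$.

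Next I carry out the inversion. Under Assumption \ref{as:analytic}(i) the directions $w$ so obtained fill the hemisphere $\mathbb{H}_+$ of unit vectors with nonnegative last coordinate, and for each such $w$ the offset $u$ ranges over all of $\mathbb{R}$; since $\mathcal{R}[f_{\vartheta}](w,u)=\mathcal{R}[f_{\vartheta}](-w,-u)$, the Radon transform of $f_{\vartheta}$ is identified on the whole of $\mathbb{S}^q\times\mathbb{R}$, and injectivity of the Radon transform (Helgason, 1999) determines $f_{\vartheta}$. Under Assumption \ref{as:analytic}(ii), the map $(x,p)\mapsto(x,p,1)/\|(x,p,1)\|$ carries the open ball $B_{\mathcal{J}}$ onto an open subset $W\subset\mathbb{S}^q$, and condition \eqref{eq:supp_blp} ensures that each such $w\in W$ is paired with the full range of offsets at which $\mathcal{R}[f_{\vartheta}](w,\cdot)$ can be nonzero, the transform vanishing identically at all other offsets; hence $\mathcal{R}[f_{\vartheta}](w,\cdot)$ is identified for every $w\in W$. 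By the Fourier slice theorem the one-dimensional Fourier transform of $u\mapsto\mathcal{R}[f_{\vartheta}](w,u)$ is $t\mapsto\varphi_{\vartheta}(tw)$, so the characteristic function $\varphi_{\vartheta}$ of $\vartheta_{ijt}$ is identified on the open cone $\{tw:t\in\mathbb{R},\,w\in W\}$. The moment-growth condition in Assumption \ref{as:analytic}(ii) forces $\xi\mapsto\varphi_{\vartheta}(\xi)=E[e^{i\xi'\vartheta_{ijt}}]$ to extend to an entire function on $\mathbb{C}^q$, and an entire function is determined by its restriction to any open subset of $\mathbb{R}^q$; hence $\varphi_{\vartheta}$ is identified everywhere and $f_{\vartheta}$ follows by Fourier inversion. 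This establishes part (i).

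For part (ii), marginalizing the now-identified $f_{\vartheta_j}$ over $\epsilon_{ijt}$ yields the joint density $f_{(\beta^{(2)},\alpha)}$ of $(\beta_{it}^{(2)},\alpha_{it})$, hence the conditional density $f_{\epsilon_{ijt}\mid(\beta^{(2)},\alpha)}=f_{\vartheta_j}/f_{(\beta^{(2)},\alpha)}$ for each $j\in\mathcal{J}$. Under the added hypothesis that $\{\epsilon_{ijt}\}_{j\in\mathcal{J}}$ are independent conditionally on $(\beta_{it}^{(2)},\alpha_{it})$, the joint conditional density of $\{\epsilon_{ijt}\}_{j\in\mathcal{J}}$ given $(\beta_{it}^{(2)},\alpha_{it})$ factors as $\prod_{j\in\mathcal{J}}f_{\epsilon_{ijt}\mid(\beta^{(2)},\alpha)}$, which is identified; multiplying by $f_{(\beta^{(2)},\alpha)}$ gives $f_{\theta_{\mathcal{J}}}$. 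The main obstacle is the inversion step under Assumption \ref{as:analytic}(ii): one must show that the lower-dimensional (``thin'') variation in $(X_{jt}^{(2)},P_{jt},D_{jt})$, pooled across $j\in\mathcal{J}$, still pins down $\mathcal{R}[f_{\vartheta}](w,\cdot)$ on its full support of offsets for an open set of directions, and then verify that the stated moment bound is exactly what makes $\varphi_{\vartheta}$ entire (not merely smooth), so that analytic continuation from the open cone is legitimate; the remaining steps are routine.
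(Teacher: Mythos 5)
Your argument is correct and follows essentially the same route as the paper: identify demand via Theorem \ref{thm:npiv1}, take limits under Assumption \ref{as:rc_blp} to obtain $\tilde\Phi_j$ and hence the Radon-transform identity, pool across $j\in\mathcal{J}$ using the common conditional law, and invert either by global injectivity of the Radon transform (case (i)) or by the limited-angle argument via the Fourier slice theorem and analyticity of the characteristic function under the moment condition (case (ii), which is exactly the content of the paper's Lemma \ref{lem:limited_angle}); part (ii) by factorization is likewise identical. The step you flag as the "main obstacle" is precisely what that lemma establishes, so no gap remains.
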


An immediate corollary is the following.

\begin{corollary}
\label{cor:fbeta}  Suppose Assumptions \ref{as:index}-\ref{as:analytic}
hold. Let $\{\epsilon_{ijt}\}_{j=1}^J$ be i.i.d. (across $j$)
conditional on $(\beta^{(2)}_{it},\alpha_{it})$. Then, the joint density $%
f_\theta$ of all random coefficients $\theta_{it}=(\beta^{(2)}_{it},%
\alpha_{it},\{\epsilon_{ijt}\}_{j=1}^J)$ is identified.
\end{corollary}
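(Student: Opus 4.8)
The plan is to obtain the corollary as a short consequence of Theorem \ref{thm:fbeta}(i) together with the factorization of $f_\theta$ forced by the conditional i.i.d.\ structure. First I would check that the hypotheses of Theorem \ref{thm:fbeta} are met: since $\{\epsilon_{ijt}\}_{j=1}^J$ are i.i.d.\ conditional on $(\beta^{(2)}_{it},\alpha_{it})$, they are in particular identically distributed conditional on $(\beta^{(2)}_{it},\alpha_{it})$, so the conditional law of $\epsilon_{ijt}$ given $(\beta^{(2)}_{it},\alpha_{it})$ does not depend on $j$, and a fortiori is the same for all $j\in\mathcal{J}$. Part (i) of the theorem then identifies $f_{\vartheta_j}$, the joint density of $\vartheta_{ijt}=(\beta^{(2)}_{it},\alpha_{it},\epsilon_{ijt})$, for every $j\in\mathcal{J}$; since $\mathcal{J}$ is nonempty (Assumption \ref{as:rc_blp}), fix one such index $j_0$.

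Next I would peel off the two ingredients of $f_\theta$ from $f_{\vartheta_{j_0}}$. Integrating out the last coordinate identifies the marginal density $f_{\beta^{(2)},\alpha}(b,a)=\int f_{\vartheta_{j_0}}(b,a,e)\,de$ of $(\beta^{(2)}_{it},\alpha_{it})$, and on the set $\{f_{\beta^{(2)},\alpha}>0\}$ the conditional density
\begin{equation*}
g(e\mid b,a)\equiv\frac{f_{\vartheta_{j_0}}(b,a,e)}{f_{\beta^{(2)},\alpha}(b,a)}
\end{equation*}
is identified, $f_{\beta^{(2)},\alpha}$-almost everywhere. By the conditional i.i.d.\ assumption this same $g$ is the conditional density of $\epsilon_{ijt}$ given $(\beta^{(2)}_{it},\alpha_{it})=(b,a)$ for every $j=1,\dots,J$. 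Finally, mutual independence of $\{\epsilon_{ijt}\}_{j=1}^J$ conditional on $(\beta^{(2)}_{it},\alpha_{it})$ yields
\begin{equation*}
f_\theta(b,a,e_1,\dots,e_J)=f_{\beta^{(2)},\alpha}(b,a)\prod_{j=1}^J g(e_j\mid b,a),
\end{equation*}
and each factor on the right has just been shown to be identified; hence $f_\theta$ is identified.

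I do not expect a substantive obstacle — the statement is essentially a bookkeeping corollary of the theorem — but two small points deserve a line of care. First, $g$ is determined only $f_{\beta^{(2)},\alpha}$-a.e., which is harmless: that null set is exactly where the product on the right-hand side above vanishes, so the displayed factorization still pins down $f_\theta$ Lebesgue-a.e. Second, one should state explicitly that ``i.i.d.\ conditional on $(\beta^{(2)}_{it},\alpha_{it})$'' implies the weaker ``identical conditional distribution'' hypothesis used to invoke Theorem \ref{thm:fbeta}, so nothing beyond the theorem needs to be verified. (When the index set $\mathcal{J}$ of Assumptions \ref{as:rc_blp}--\ref{as:analytic} happens to equal $\{1,\dots,J\}$, one could instead quote Theorem \ref{thm:fbeta}(ii) verbatim; the argument above covers the general case.)
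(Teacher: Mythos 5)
Your proposal is correct and follows exactly the route the paper intends: the paper states the corollary as ``immediate'' from Theorem \ref{thm:fbeta}, and its proof of part (ii) already contains the same factorization $f_{\theta}=f_{\beta^{(2)},\alpha}\prod_{j} f_{\epsilon_j\mid\beta^{(2)},\alpha}$ that you derive, with the conditional i.i.d.\ hypothesis supplying both the identical-distribution condition needed to invoke the theorem and the extension of the identified conditional density $g$ from $j_0\in\mathcal{J}$ to all $j=1,\dots,J$. Your added remarks on the almost-everywhere caveat and on the case $\mathcal{J}\subsetneq\{1,\dots,J\}$ are accurate bookkeeping, not a departure from the paper's argument.
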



\begin{remark}\rm
\textrm{Theorems \ref{thm:npiv1} and \ref{thm:fbeta} shed light on the roles
played by the key features of the BLP-type demand model: the invertibility
of the demand system, instrumental variables, and the linear random
coefficients specification. In Theorem \ref{thm:npiv1}, the invertibility
and instrumental variables play key roles in identifying the demand. Once
the demand is identified, one may ``observe'' the vector $%
(X^{(2)}_t,P_t,D_t) $ of product characteristics. This is possible because
the invertibility of demand allows one to recover the unobserved product
characteristics $\Xi_t$ from the market shares $S_t$ (together with other
covariates). One may then vary $(X^{(2)}_t,P_t,D_t)$ across markets in a
manner that is exogenous to the individual heterogeneity $\theta_{it}$.
Theorem \ref{thm:fbeta} and Corollary \ref{cor:fbeta} show that this
exogenous variation combined with the linear random coefficients
specification allows to trace out the distribution of $\theta_{it}$. }
\end{remark}

\begin{remark}\rm
\textrm{\label{rem:iid} The identical distribution assumption on the tastes
for products  in Theorem \ref{thm:fbeta} is compatible with commonly used
utility specifications and can also be relaxed at the cost of a stronger
support condition on the product characteristics. In applications, it is
often assumed that the utility of product $j$ is 
\begin{align}
U_{ijt}^{\ast }=\beta^0_{it}+\tilde X_{jt}^{\prime }\beta _{it}+\alpha
_{it}P_{jt}+\Xi _{jt}+\tilde \epsilon_{ijt},
\end{align}
where $\tilde X_{jt}$ is a vector of non-constant product characteristics, $%
\beta^0_{it}$ is an individual specific intercept, which measures the
utility difference between inside goods and the outside good, and $%
\tilde\epsilon_{ijt}$ is a mean zero error that follows the Type-I extreme
value distribution. The requirement that $\epsilon_{ijt}=\beta^0_{it}+\tilde%
\epsilon_{ijt}$ are i.i.d. across $j$ (conditional on $(\beta_{it},%
\alpha_{it})$) can be met if $\tilde\epsilon_{ijt}$ are i.i.d. across $j$. }

\textrm{If for each $j$, $(X_{jt}^{(2)},P_{jt},D_{jt})$ fulfills the support
condition in Assumption \ref{as:analytic} (i) or Assumption \ref{as:analytic}
(ii), one can drop the identical distribution assumption. This is because
one can identify $f_{\vartheta _{j}}$ for all $j$ by inverting the Radon
transform in \eqref{eq:radon1} repeatedly. This in turn implies that the
distribution of $\epsilon _{ijt}$ conditional on $(\beta _{it}^{(2)},\alpha
_{it})$ is identified for each $j$. If the tastes for products $%
\{\epsilon_{ijt}\}_{j=1}^J$ are mutually independent (conditional on $(\beta
_{it}^{(2)},\alpha _{it})$), as is commonly assumed in BLP, the joint density $f_{\theta }$ is identified. }

\textrm{Finally, we comment on what an additional parametric assumption may
add to our result. If one assumes that the tastes for products are i.i.d.
and follows a parametric distribution, Eq \eqref{eq:multnom} reduces to $%
\phi _{j}(x^{(2)},p,\delta )=\int L(x^{(2)}{}^{\prime
(2)}+ap+\delta)f_{(\beta,\alpha)}(b,a)dbda,$ for some function $L$, e.g. $L$
is the logit function when $\{\epsilon_{ijt}\}$ follows a Type-I extreme
value distribution. This type of integral equation is considered in Fox,
Kim, Ryan, and Bajari (2012) in the context of individual-level demand model
without endogeneity. Given that $\phi$ is identified, we believe that it is
possible to extend their framework to the market-level demand model with
endogeneity and identify $f_{(\beta,\alpha)}$ semiparametrically.  This
approach may allow us to relax some of the support conditions. To keep a
tight focus on nonparametric identification, we leave this extension
for future work.}
\end{remark}

\begin{remark}
\label{rem:support} \rm Our identification result reveals the nature of
the BLP-type demand model. A positive aspect of our result is that the
preference is nonparametrically identified if one observes full dimensional
variations in the consumers' choice sets (represented by $%
(X^{(2)}_{jt},P_{jt},D_{jt})$) across markets. The identifying power is
quite strong, if the product characteristics jointly span a full support,
i.e. $\bigcup_{j\in\mathcal{J}}(X^{(2)}_{jt},P_{jt},D_{jt})=\mathbb{R}%
^{d_X-1}\times\mathbb{R}\times\mathbb{R}.$ On the other hand, if the product
characteristics have limited variations, the identifying power of the model
on the distribution of preferences may be limited. In particular,
identification is not achieved only with discrete covariates. Hence, for
such settings, one needs to augment the model structure with a parametric
specification. Another interesting direction would be to conduct partial
identification analysis on functionals of $f_\theta$, while imposing weak
support restrictions. We leave this possibility for future research. 
\end{remark}

\subsection{Pure Characteristics Demand Models}\label{sec:PCM} 

Throughout this section, we consider the following utility
specification where each product's utility is fully determined by the tastes
for the product characteristics: 
\begin{equation}
U_{ijt}^{\ast }\equiv X_{jt}^{\prime }\beta _{it}+\alpha _{it}P_{jt}+\Xi
_{jt},~j=1,\cdots ,J~.  \label{eq:randomu_pure}
\end{equation}%
In other words, we set $\sigma _{\epsilon }=0$ in \eqref{eq:randomu}. For
this model, we employ a different, and arguably less restrictive, strategy
from the one adopted in the previous section to construct $\Phi $ in %
\eqref{eq:radonrep}. Below, we maintain Assumptions \ref{as:index}-\ref%
{as:npiv1}, which ensure the identification of demand by Theorem \ref%
{thm:npiv1}. The demand for good $j$ with the product characteristics $(X_{t},P_{t},\Xi
_{t})$ is as given in \eqref{eq:blp1} but with $\sigma_\epsilon=0$. Since $%
D_{t}=X_{t}^{(1)}+\Xi _{t}$, the demand in market $t$ with $%
(X_{t}^{(2)},P_{t},D_{t})=(x^{(2)},p,\delta )$ is given by: 
\begin{multline}
\phi _{j}(x^{(2)},p,\delta )=\int 1\{x_{j}^{(2)}{}^{\prime }{b}%
^{(2)}+ap_{j}>-\delta _{j}\}1\{(x_{j}^{(2)}-x_{1}^{(2)}){}^{\prime }{b}%
^{(2)}+a(p_{j}-p_{1})>-(\delta _{j}-\delta _{1})\} \\
\cdots 1\{(x_{j}^{(2)}-x_{J}^{(2)}){}^{\prime }{b}^{(2)}+a(p_{j}-p_{J})>-(%
\delta _{j}-\delta _{J})\}f_{\theta }(b^{(2)},a)d\theta ~.
\end{multline}

For any subset $\mathcal{J}$ of $\{1,\cdots ,J\}\setminus \{j\}$, let $%
\mathcal{M}_{\mathcal{J}}$ denote the map $(x^{(2)},p,\delta )\mapsto (%
\acute{x}^{(2)},\acute{p},\acute{\delta})$ that is uniquely defined by the
following properties: 
\begin{align*}
(\acute{x}_{j}^{(2)}-\acute{x}_{i}^{(2)},\acute{p}_{j}-\acute{p}_{i},\acute{%
\delta}_{j}-\acute{\delta}_{i})&
=-(x_{j}^{(2)}-x_{i}^{(2)},p_{j}-p_{i},\delta _{j}-\delta _{i}),~\forall
i\in \mathcal{J}~, \\
(\acute{x}_{i}^{(2)},\acute{p}_{i},\acute{\delta}_{i})&
=(x_{i}^{(2)},p_{i},\delta _{i}),~\forall i\notin \mathcal{J}~.
\end{align*}%
In words, for a given product $j$ and product characteristics $(x^{(2)},p,\delta )$, this map finds another value $(%
\acute{x}^{(2)},\acute{p},\acute{\delta})$ of  product characteristics such that, for products $i$ belonging to $\mathcal J$, the difference in the product characteristics (e.g. $\acute{x}_{j}^{(2)}-\acute{x}_{i}^{(2)}$) coincides with the original value (e.g., $x_{j}^{(2)}-x_{i}^{(2)}$) in terms of magnitude but has an opposite sign. For products $i$ not belonging to $\mathcal J$, the map sets their product characteristics the original value $(x^{(2)}_i,p_i,\delta_i)$.

Consider the
composition $\phi _{j}\circ \mathcal{M}_{\mathcal{J}}(x^{(2)},p,\delta )$.
If $( \acute{x}^{(2)},\acute{p},\acute{\delta})$ is in the support, this
corresponds to the demand of product $j$ in some market (say $t^{\prime }$)
with $(X_{t^{\prime }}^{(2)},P_{t^{\prime }},D_{t^{\prime }})=( \acute{x}%
^{(2)},\acute{p},\acute{\delta})$. We then define 
\begin{equation}
\tilde{\Phi}_{j}(x_{j}^{(2)},p_{j},\delta _{j})\equiv -\sum_{\mathcal{J}%
\subseteq \{1,\cdots J\}\setminus \{j\}}\phi _{j}\circ \mathcal{M}_{\mathcal{%
J}}(x^{(2)},p,\delta )~.  \label{eq:sum_pure}
\end{equation}%
Eq \eqref{eq:sum_pure} aggregates the structural demand function for good $j$
in different markets to define a function, which can be related to the
random coefficient density in a simple way. This operation can be easily
understood when $J=2$, where for example demand for product 1 is given by 
\begin{multline*}
\phi _{1}(x^{(2)},p,\delta )=\int 1\{x_{1}^{(2)}{}^{\prime }{b}%
^{(2)}+ap_{1}<-\delta _{1}\} \\
\times 1\{(x_{1}^{(2)}{}-x_{2}^{(2)}){}{}^{\prime }{b}%
^{(2)}+a(p_{1}-p_{2})<-(\delta _{1}-\delta _{2})\}f_{\theta
}(b^{(2)},a)d\theta ~.
\end{multline*}%
Then, $\tilde{\Phi}_{1}$ is given by 
\begin{align}\label{eq:sum1_pure}
\begin{split}
&\tilde{\Phi}_{1}(x_{1}^{(2)},p_{1},\delta _{1}) =-\phi _{1}\circ \mathcal{M}%
_{\emptyset }(x^{(2)},p,\delta)-\phi _{1}\circ \mathcal{M}%
_{\{2\}}(x^{(2)},p,\delta) \\
& =-\int 1\{x_{1}^{(2)}{}^{\prime }b^{(2)}+ap_{1}<-\delta _{1}\}\Big(%
1\{(x_{1}^{(2)}-x_{2}^{(2)})^{\prime }b^{(2)}+a(p_{1}-p_{2})<-(\delta
_{1}-\delta _{2})\}\\
& \hspace{76pt} +1\{(x_{1}^{(2)}{}-x_{2}^{(2)})^{\prime
}b^{(2)}+a(p_{1}-p_{2})>-(\delta _{1}-\delta _{2})\}\Big)f_{\theta
}(b^{(2)},a)d\theta \\
& =-\int 1\{x_{1}^{(2)}{}^{\prime }b^{(2)}+ap_{1}<-\delta _{1}\}f_{\theta
}(b^{(2)},a)d\theta  
\end{split}
\end{align}%
This shows that aggregating the demand in the two markets with $%
(X_{t}^{(2)},P_{t},D_{t})=(x^{(2)},p,\delta )$ and $(X_{t^{\prime
}}^{(2)},P_{t^{\prime }},D_{t^{\prime }})=(\acute{x}^{(2)},\acute{p},\acute{%
\delta})$ yields a function $\tilde{\Phi}_{1}$ that depends only on product
1's characteristic $(x_{1}^{(2)},p_{1},\delta _{1})$ through a single index
in \eqref{eq:sum1_pure}. This then allows us to trace out the random
coefficients density by varying product 1's characteristic as done in the
BLP model. Since the operation above yields a function that depends only on
the characteristic of a single product, we call it \emph{marginalization of
demand}.\footnote{%
Note however that this marginalized demand still depends on the joint
distribution of the entire random coefficient vector.}

Eq. \eqref{eq:sum_pure} generalizes this argument to settings with $J\geq 2$%
. For the marginalization of demand to work, the product characteristic $(%
\acute{x}^{(2)},\acute{p},\acute{\delta})=\mathcal{M}_{\mathcal{J}}(x^{(2)},
p,\delta)$ needs to be an observable value, meaning it must be in the
support. Formally, a value of the product characteristic $(x^{(2)},
p,\delta)\in \text{supp}( X^{(2)}_t, P_{t}, D_{t})$ is said to permit \emph{%
marginalization of demand with respect to product $j$} if 
\begin{align}
\mathcal{M}_{\mathcal{J}}( x^{(2)}, p,\delta)\in \text{supp}( X^{(2)}_t,
P_{t}, D_{t}),~\forall \mathcal{J}\subseteq\{1,\cdots, J\}\setminus\{j\}.
\end{align}

As done in the BLP setting, we will only require that a rich enough set to
recover $f_\theta$ can be constructed by combining the supports of multiple
products' characteristics. Toward this end, for each $j\in \{1,\cdots,J\}$,
let $\pi_j$ be the projection map such that $(x^{(2)}_{j},p_j,\delta_j)=%
\pi_j(x^{(2)}, p,\delta)$, and define the following sets: 
\begin{align*}
\mathcal{H}_j&\equiv \{(x^{(2)}, p,\delta)\in \text{supp}( X^{(2)}_t,
P_{t},D_{t}): \mathcal{M}_{\mathcal{J}}( x^{(2)}, p,\delta)\in \text{supp}%
(X^{(2)}_t, P_{t}, D_{t}), \\
&\hspace{0.3in}\text{ for all }\mathcal{J}\subseteq\{1,\cdots,J\}\setminus%
\{j\}\}, \\
\mathcal{S}_j&\equiv\{(x^{(2)}_{j},p_j,\delta_j)\in \text{supp}(
X^{(2)}_{jt}, P_{jt},D_{jt}):(x^{(2)}_{j},p_j,\delta_j)=\pi_j(x^{(2)},
p,\delta),\\
&\hspace{0.3in}~\text{ for some }(x^{(2)}, p,\delta)\in \mathcal{H}_j\}.
\end{align*}
In words, $\mathcal{H}_j$ is the set of the entire product characteristic
vectors for which marginalization with respect to product $j$ is permitted. $%
\mathcal{S}_j$ is the coordinate projection of $\mathcal{H}_j$ onto the
space of product $j$'s characteristics. We then make the following
assumption.

\begin{ass}
\label{as:support_pure} One of the following conditions hold:
\begin{enumerate}
\item[(i)] $\bigcup_{j=1}^J\mathcal{S}_j=\mathbb{R}^{d_X-1}\times\mathbb{R}\times 
\mathbb{R}$;

\item[(ii)] $\bigcup_{j=1}^J\mathcal{S}_j=\mathbb{E}\times \mathbb{D}$, where $%
\mathbb{E}$ contains an open ball $B \subset \mathbb{R}^{d_X-1}\times\mathbb{%
R}$, and $\mathbb{D}\subseteq \mathbb{R}$. For every $(x,p) \in B$ and every 
$(b^{(2)},a) \in \mathrm{supp}\,(\theta_{it})$, it holds that $%
(x,p,-x^{\prime} b^{(2)} - ap) \in \bigcup_{j=1}^J\mathcal{S}_j$.

Furthermore, all the absolute moments of each component of $\theta_{it}$ are
finite, and for any fixed $z\in \mathbb{R}_+$, $0=\lim_{l\to\infty}\frac{z^l%
}{l!}(E[|\theta_{it}^{(1)}|^l]+\cdots+E[|\theta^{(d_\theta)}_{it}|^l])$.
\end{enumerate}
\end{ass}

The idea behind Assumption \ref{as:support_pure} is as follows. For the
moment, suppose we don't impose any moment condition on the random
coefficient density. Also, fix a benchmark product $j$. For any $%
(x^{(2)}_{j},p_j,\delta_j)\in\mathcal{S}_j$, one may find a vector $%
(x^{(2)}, p,\delta)$ of all product characteristics for which
marginalization of demand is allowed. Then, one would wish to vary $%
(x^{(2)}_{j},p_j,\delta_j)$ to trace out the random coefficient density. 
This is possible, of course, if $\mathcal{S}_j=\mathbb{R}^{d_X-1}\times%
\mathbb{R}\times\mathbb{R}$, meaning that marginalization is possible
everywhere with respect to product $j.$ However, this assumption may be too
strong in empirical applications. One may not be able to find any single
product, for which this condition is satisfied.  Assumption \ref%
{as:support_pure} (i) relaxes this requirement substantially using the
structure of the model. Observe that the identification argument is
symmetric across products because only the characteristics matter. Hence,
the argument is valid as long as, for each $(\mathbf{x}^{(2)},\mathbf{p},%
\mathbf{d})\in \mathbb{R}^{d_X-1}\times\mathbb{R}\times \mathbb{R}$, one can
find \emph{some} product for which marginalization is permitted. This is the
reason why it is enough to ``patch'' $\mathcal{S}_j$s together to $%
\mathbb{R}^{d_X-1}\times\mathbb{R}\times \mathbb{R}$ in Assumption \ref%
{as:support_pure} (i). This condition can be made even weaker with the help
of an additional moment condition. In Assumption \ref{as:support_pure} (ii),
we only require that $\mathcal{S}_j$s combined together contain an open ball (in terms of $(\mathbf{x}^{(2)},\mathbf{p})$).
This support requirement is quite mild, and hence it can be satisfied even
if each product's characteristic has limited variation across markets. Note
also that, if $\mathrm{supp}\,(\theta_{it})$ is compact, the support of $%
D_{jt}$ can be compact as well.

It is important to note that we construct $\tilde\Phi_j$ without relying on
any ``thin'' (lower-dimensional) subset of the support of the product
characteristics as done in the BLP model. Instead, we construct $\tilde\Phi_j
$ in \eqref{eq:sum_pure} by combining the demand in different markets. 
This is desirable as estimators that rely on thin or irregular
identification may have a slow rate of convergence (Khan and Tamer, 2010).
In the pure characteristics model, the individuals have varying tastes
(random coefficients) over the product characteristics but not over the
products themselves. This is the key feature of the model that allows us to
identify the random coefficients through the variation of the product
characteristics $( X^{(2)}_t, P_{t}, D_{t})$. In contrast, in the BLP model,
there was an additional taste for the product itself, which was the main
reason for using the thin set to isolate the demand for each product.

Given Assumption \ref{as:support_pure}, we now construct $\Phi$ in Eq. %
\eqref{eq:radonrep}. For each $(\mathbf{x}^{(2)},\mathbf{p},\mathbf{d})\in
\bigcup_{j=1}^J\mathcal{S}_j$, let $w\equiv (\mathbf{x}^{(2)},\mathbf{p}%
)/\Vert (\mathbf{x}^{(2)},\mathbf{p})\Vert $ and $u\equiv \mathbf{d}/\Vert (%
\mathbf{x}^{(2)},\mathbf{p})\Vert $. Define 
\begin{align}
\Phi(w,u)\equiv\tilde\Phi_j\Big(\frac{\mathbf{x}^{(2)}}{\Vert (\mathbf{x}%
^{(2)},\mathbf{p})\Vert},\frac{\mathbf{p}}{\Vert (\mathbf{x}^{(2)},\mathbf{p}%
)\Vert},\frac{\mathbf{d}}{\Vert (\mathbf{x}^{(2)},\mathbf{p})\Vert}\Big),~%
\text{where}~(\mathbf{x}^{(2)},\mathbf{p},\mathbf{d})\in \mathcal{S}_j.
\label{eq:Phi_pure}
\end{align}
Here, for each $(\mathbf{x}^{(2)},\mathbf{p},\mathbf{d})$, any $j$ can be
used to construct $\tilde\Phi_j$ through marginalization as long as $%
\mathcal{S}_j$ contains $(\mathbf{x}^{(2)},\mathbf{p},\mathbf{d})$. Then $%
\Phi$ is defined on a set that is rich enough to invert the Radon (or
limited angle Radon) transform. The rest of the analysis parallels our
analysis of the BLP model.\footnote{%
Note that the additional independence (or i.i.d.) assumptions on $%
(\epsilon_{i1t},\cdots,\epsilon_{iJt})$ is not needed in the pure
characteristics model.} We therefore obtain the following point
identification result.

\begin{theorem}
\label{thm:fbeta_pure} Suppose Assumptions \ref{as:index}-\ref{as:covariates}%
, and \ref{as:support_pure} hold. Then, $f_\theta$ is identified in the pure
characteristics demand model, where $\theta_{it}=(\beta^{(2)}_{it},%
\alpha_{it}).$
\end{theorem}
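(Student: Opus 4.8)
The plan is to upgrade the $J=2$ computation in \eqref{eq:sum1_pure} to a combinatorial identity valid for all $J$, convert it into the Radon representation \eqref{eq:radonrep}, and then invert. First I would fix $j\in\{1,\dots,J\}$ together with a point $(x^{(2)},p,\delta)$ whose $\pi_j$-projection lies in $\mathcal{S}_j$; by the very definition of $\mathcal{H}_j$ and $\mathcal{S}_j$, every composition $\phi_j\circ\mathcal{M}_{\mathcal{J}}(x^{(2)},p,\delta)$, $\mathcal{J}\subseteq\{1,\dots,J\}\setminus\{j\}$, is then evaluated at a point of $\mathrm{supp}(X_t^{(2)},P_t,D_t)$ and is therefore identified by Theorem \ref{thm:npiv1}. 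Since $j\notin\mathcal{J}$, the map $\mathcal{M}_{\mathcal{J}}$ leaves $(x^{(2)}_j,p_j,\delta_j)$ fixed and negates each difference $(x^{(2)}_j-x^{(2)}_k,p_j-p_k,\delta_j-\delta_k)$ with $k\in\mathcal{J}$; consequently $\phi_j\circ\mathcal{M}_{\mathcal{J}}$ is the integral over $\theta$ of the ``product $j$ beats the outside good'' indicator times, for each $k\in\mathcal{J}$, a ``beats product $k$'' indicator with its inequality reversed, times, for each $k\notin\mathcal{J}\cup\{j\}$, the original ``beats product $k$'' indicator. Writing $C_k$ for the affine form in $(b^{(2)},a)$ appearing in the $k$-th indicator, summing over all $\mathcal{J}$ and exchanging sum and integral turns the $\theta$-integrand into the ``beats outside good'' indicator times $\prod_{k\ne j}\bigl(1\{C_k<0\}+1\{C_k>0\}\bigr)$, which equals $1$ off the Lebesgue-null set $\bigcup_{k\ne j}\{C_k=0\}$ and hence is $f_\theta$-a.e.\ equal to $1$. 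This delivers, for every such point, the single-index identity $\tilde\Phi_j(x^{(2)}_j,p_j,\delta_j)=-\int 1\{(x_j^{(2)})'b^{(2)}+ap_j<-\delta_j\}f_\theta(b^{(2)},a)\,d\theta$, i.e.\ the general-$J$ version of \eqref{eq:sum1_pure}.

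Next I would pass to the Radon transform exactly as in the BLP case. By scale invariance of the indicator, with $w$ and $u$ as in \eqref{eq:Phi_pure} the identity reads $\Phi(w,u)=-\int 1\{w'v<-u\}f_\theta(v)\,dv=-\int_{-\infty}^{-u}\mathcal{R}[f_\theta](w,r)\,dr$, so differentiating in $u$ gives $\frac{\partial \Phi(w,u)}{\partial u}=\mathcal{R}[f_\theta](w,u)$ up to the reflection $\mathcal{R}[f](w,u)=\mathcal{R}[f](-w,-u)$, which is immaterial after inversion. Because $\phi$ is identified (Theorem \ref{thm:npiv1}) and each $\mathcal{M}_{\mathcal{J}}$ is a known map, $\Phi$ --- and therefore $\mathcal{R}[f_\theta](w,u)$ --- is identified at every $(w,u)$ arising from a point of $\bigcup_{j}\mathcal{S}_j$. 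It then remains to invert $\mathcal{R}$ on the resulting collection of $(w,u)$. Under Assumption \ref{as:support_pure}(i) this collection is all of $\mathbb{S}^{d_X}\times\mathbb{R}$, and injectivity of the Radon transform (Helgason, 1999) recovers $f_\theta$. Under Assumption \ref{as:support_pure}(ii) the directions $w$ only range over the open set cut out by the ball $B$, but the support inclusion $(x,p,-x'b^{(2)}-ap)\in\bigcup_j\mathcal{S}_j$ for all $(x,p)\in B$ and $(b^{(2)},a)\in\mathrm{supp}(\theta_{it})$ guarantees that, for each such $w$, $\mathcal{R}[f_\theta](w,\cdot)$ is known on all offsets meeting $\mathrm{supp}(f_\theta)$; by the projection--slice (Fourier slice) theorem this determines $t\mapsto\varphi_\theta(tw)$ on an open cone, and the moment bound in Assumption \ref{as:support_pure}(ii) makes $w\mapsto\varphi_\theta(w)$ real-analytic, so analytic continuation pins it down on all of $\mathbb{R}^{d_X}$ and Fourier inversion returns $f_\theta$. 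Either way $f_\theta$, the joint density of $\theta_{it}=(\beta^{(2)}_{it},\alpha_{it})$, is identified; and since $\sigma_\epsilon=0$ here, no independence restriction on tastes for products is needed, so the argument does not borrow Assumption \ref{as:analytic}.

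The step I expect to be the main obstacle is the combinatorial collapse in the first paragraph: one must check carefully that across $\mathcal{J}$ and its complement the ``beats product $k$'' indicators really do pair up into $1\{C_k<0\}+1\{C_k>0\}=1$ a.e., and at the same time that every summand $\phi_j\circ\mathcal{M}_{\mathcal{J}}$ is legitimately evaluated inside $\mathrm{supp}(X_t^{(2)},P_t,D_t)$ --- this is precisely what the bookkeeping behind $\mathcal{H}_j$ and $\mathcal{S}_j$ is designed to handle. A secondary technical point is the analytic-continuation argument under Assumption \ref{as:support_pure}(ii), where one must confirm that the moment condition forces analyticity of $w\mapsto\varphi_\theta(w)$ and that the cone generated by $B$ is a uniqueness set for analytic functions; this parallels the corresponding argument used for the BLP model under Assumption \ref{as:analytic}(ii) and can be carried out the same way.
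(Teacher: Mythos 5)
Your proposal is correct and follows essentially the same route as the paper's proof: identify $\phi$ via Theorem \ref{thm:npiv1}, use the combinatorial collapse of $\sum_{\mathcal{J}}\phi_j\circ\mathcal{M}_{\mathcal{J}}$ into a single-index integral (justified a.e.\ by continuity of the coefficient distribution, with the support bookkeeping handled by $\mathcal{H}_j$ and $\mathcal{S}_j$), pass to the Radon representation, and invert either by global injectivity under Assumption \ref{as:support_pure}(i) or by the Fourier-slice/analyticity argument under (ii), which is exactly the content of the paper's Lemma \ref{lem:limited_angle}. The two points you flag as potential obstacles are precisely the ones the paper addresses, and your treatment of them matches.
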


\subsection{Bundle choice (Example \protect\ref{ex:bundles})}\label{sec:bundles}

In this section, we consider $\sigma_\epsilon=1$, but it is also possible to analyze the case without the tastes for products.\footnote{For the setting without the tastes for products, we refer to Dunker, Hoderlein, and Kaido (2013), an earlier version of the paper.} We consider an alternative procedure for inverting the demand in Example \ref{ex:bundles}.  This is because this example (and also the example in the next section)
has a specific structure. We note that the inversion of Berry, Gandhi, and
Haile (2013) can still be applied to bundles if one treats each bundle as a
separate good and recast the bundle choice problem into a standard
multinomial choice problem. However, as can be seen from %
\eqref{eq:bundleutil}, Example \ref{ex:bundles} has the additional structure
that the utility of a bundle is the combination of the utilities for each
good and extra utilities, and hence the model does not involve any bundle
specific unobserved characteristic. This structure in turn implies that the
dimension of the unobservable product characteristic $\Xi_t$ equals the
number of goods $J$, while the econometrician in general observes $\dim(S)=%
2^J$ aggregate choice probabilities over bundles, which leads to a system of equations
 whose number of restrictions exceeds the number of unknown quantities. This suggests that (i) using only a part
of the demand system is sufficient for obtaining an inversion, which can be
used to identify $f_\theta$ and (ii) using additional subcomponents of $S$,
one may potentially overidentify the parameter of interest. We therefore
consider an inversion that exploits a monotonicity property of the demand
system that follows from this structure.\footnote{%
The additional structure can potentially be tested. In Example 2, one may
identify the demand for bundles (1,0) and (1,1) using the inversion
described below under the hypothesis that eq. \eqref{eq:bundleutil} holds.
Further, treating (1,0), (0,1), and (1,1) as three separate goods (and (0,0)
as an outside good) and applying the inversion of Berry, Gandhi, and Haile
(2013), one may identify the demand for bundles (1,0) and (1,1) without
imposing \eqref{eq:bundleutil}. The specification can then be tested by
comparing the demand functions obtained from these distinct inversions. We
are indebted to Phil Haile for this point.} For this, we assume that the
following condition is met.

\begin{condition}
\label{cond:bundles} The random coefficient density $f_{\theta}$ is
continuously differentiable. In addition, $(\epsilon_{i1t},\epsilon_{i2t})$ and $%
(D_{1t},D_{2t})$ have full supports in $\mathbb{R}^2$ respectively.
\end{condition}

Let $\tilde{\mathbb{L}}=\{(1,0),(1,1)\}.$ From \eqref{eq:10agdemand}, it is
straightforward to show that $\varphi_{(1,0)}$ is strictly increasing in $%
D_{1t}$ but is strictly decreasing in $D_{2t}$, while $\varphi_{(1,1)}$ is
strictly increasing both in $D_{1t}$ and $D_{2t}$. Hence, the Jacobian
matrix is non-degenerate. Together with a mild support condition on $%
(D_{1t},D_{2t})$, this allows to invert the demand (sub)system and write $%
\Xi_{jt}=\psi_j(X^{(2)}_t,P_t,\tilde S_t)-X^{(1)}_{jt},$ where $\tilde
S_t=(S_{(1,0),t},S_{(1,1),t})$. This ensures Assumption \ref{as:inv1} in
this example (see Lemma \ref{lem:inv1} given in the appendix). By Theorem %
\ref{thm:npiv1}, one can then nonparametrically identify subcomponents $%
(\varphi_{(1,0)},\varphi_{(1,1)})$ of the demand function $\varphi$.

One may alternatively choose $\tilde{\mathbb{L}}=\{(0,0),(0,1)\}$, and the
argument is similar, which then identifies $(\varphi_{(0,0)},%
\varphi_{(0,1)}) $, and hence all components of the demand function $\varphi$
are identified. This inversion is valid even if the two goods are
complements. This is because the inversion uses the monotonicity property of
the aggregate choice probabilities on bundles (e.g. $\phi_{(1,0)}$ and $%
\phi_{(1,1)}$) with respect to $(D_{1t}, D_{2t})$. Hence, even if the
aggregate share of each good (e.g. aggregate share on good 1: $%
\sigma_1=\phi_{(1,0)}+\phi_{(1,1)}$) is not invertible in the price $P_t$
due to the presence of complementary goods, one can still obtain a useful
inversion provided that aggregate choice probabilities on bundles are
observed.

Given the demand for bundles, we now analyze identification of the random
coefficient density. By \eqref{eq:bundleutil}, the demand for bundle (0,0)
is given by 
\begin{align}  \label{eq:dem00}
&\phi_{(0,0)}(x^{(2)},p,\delta) =\int 1\{x^{(2)}_1{}^{\prime
}b^{(2)}+ap_1+e_1<-\delta_1\}1\{x^{(2)}_2{}^{\prime
}b^{(2)}+ap_2+e_2<-\delta_2\}   \\
&\qquad\times1\{(x_1^{(2)}+x_2^{(2)})^{\prime
}b^{(2)}+a(p_1+p_2)+(e_1+e_2)+\Delta<-\delta_1-\delta_2\}f_{%
\theta}(b^{(2)},a,e,\Delta)d\theta.\notag
\end{align}
Given product $j\in\{1,2\}$, let $-j$ denote the other product. We then
define $\tilde\Phi_l$ with $l=(0,0)$ as in the BLP example by letting $%
D_{-jt}$ take a large negative value. For each $(x^{(2)},p,\delta)$, let 
\begin{align}
\tilde\Phi_{(0,0)}(x_{j}^{(2)},p_{j},\delta_{j})\equiv
-\lim_{\delta_{-j}\to-\infty}\phi_{(0,0)}(x^{(2)},p,\delta),~j=1,2.
\label{eq:sumbund}
\end{align}
We then define $\Phi_{(0,0)}$ as in \eqref{eq:Phi}.\footnote{%
In the BLP example, we invert a Radon transform only once. Hence $\Phi$ in %
\eqref{eq:Phi} does not have any subscript. In Examples 2 and 3, we invert
Radon transforms multiple times, and to make this point clear we add
subscripts to $\Phi$ (e.g. $\Phi_{(0,0)}$ and $\Phi_{(1,1)}$).} Consider for
the moment $j=1$ in \eqref{eq:sumbund}. Then, $\Phi_{(0,0)}$ is related to
the joint density $f_{\vartheta_1}$ of $\vartheta_{i1t}\equiv(%
\beta^{(2)}_{it},\alpha_{it},\epsilon_{i1t})$ through a Radon transform.%
\footnote{%
Since the bundle effect $\Delta_{it}$ does not appear in \eqref{eq:dem00},
one may only identify the joint density of the subvector $%
(\beta^{(2)}_{it},\alpha_{it},\epsilon_{i1t})$ from the demand for bundle
(0,0).} Arguing as in \eqref{eq:sum1}, it is straightforward to show that $%
\partial\Phi_{(0,0)}(w,u)/\partial u=\mathcal{R}[f_{\vartheta_1}](w,u)~$
with $w\equiv (x^{(2)}_1,p_1,1)/\|(x^{(2)}_1,p_1,1)\|$ and $u\equiv
\delta_1/\|(x^{(2)}_1,p_1,1)\|$. Hence, one may identify $f_{\vartheta_1}$
by inverting the Radon transform under Assumptions \ref{as:covariates} and %
\ref{as:rc_blp} with $J=2$.

If the researcher is only interested in the distribution of $%
(\beta^{(2)}_{it},\alpha_{it},\epsilon_{ijt})$ but not in the bundle effect,
the demand for $(0,0)$ is enough for recovering their density. However, $%
\Delta_{it}$ is often of primary interest. The demand on (1,1) can be used
to recover its distribution by the following argument. 

The demand for bundle (1,1) is given by 
\begin{align}\label{eq:bundle11}
&\phi_{(1,1)}(x^{(2)},p,\delta) \notag \\
&=\int 1\{x^{(2)}_1{}^{\prime
}b^{(2)}+ap_1+e_1+\Delta>-\delta_1\}1\{x^{(2)}_2{}^{\prime
}b^{(2)}+ap_2+e_2+\Delta>-\delta_2\} \\
&\qquad \times1\{(x_1^{(2)}+x_2^{(2)})^{\prime
}b^{(2)}+a(p_1+p_2)+(e_1+e_2)+\Delta>-\delta_1-\delta_2\}f_{%
\theta}(b^{(2)},a,e,\Delta)d\theta. \notag
\end{align}
Note that $\Delta_{it}$ can be viewed as an additional random coefficient on
the constant whose sign is fixed. Hence, the set of covariates includes a
constant. Again, conditioning on an event where $D_{-jt}$ takes a large
negative value and normalizing the arguments by the norm of $(
x^{(2)}_j,p_j,1)$ yield a function $\Phi_{(1,1)}$ that is related to the
density of $\eta_{ijt}\equiv(\beta^{(2)}_{it},\alpha_{it},\Delta_{it}+%
\epsilon_{ijt})$ through the Radon transform in \eqref{eq:radon}. Note that
the last component of $\eta_j$ and $\vartheta_j$ differ only in the bundle
effect $\Delta_{it}$. Hence, if $\epsilon_{ijt}$ is independent of $%
\Delta_{it}$ conditional on $(\beta^{(2)}_{it},\alpha_{it})$, the
distribution of $\Delta_{it}$ can be identified via deconvolution. For this,
let $\Psi_{\epsilon_{j}|(\beta^{(2)},\alpha)}$ denote the characteristic
function of $\epsilon_{ijt}$ conditional on $(\beta^{(2)}_{it},\alpha_{it})$%
. We summarize these results in the following theorems.

\begin{theorem}
\label{thm:idsub} Suppose Assumptions \ref{as:index}-\ref{as:rc_blp}, \ref%
{as:support_pure} and Condition \ref{cond:bundles} hold with $J=2$ and $%
\theta_{it}=(\beta^{(2)}_{it},\alpha_{it},\Delta_{it},\epsilon_{i1t},%
\epsilon_{i2t})$. Suppose the conditional distribution of $\epsilon_{ijt}$
given $(\beta^{(2)}_{it},\alpha_{it})$ is identical for $j=1,2$.

Then, (a) $f_{\vartheta_j},f_{\eta_j}$ are nonparametrically identified in
Example 2; (b) If, in addition, $\Delta_{it}\perp
\epsilon_{ijt}|(\beta^{(2)}_{it},\alpha_{it})$ and $\Psi_{\epsilon_{j}|(%
\beta^{(2)},\alpha)}(t)\ne 0$ for almost all $t\in\mathbb{R}$ and for some $%
j $, and $\epsilon_{ijt},j=1,2$ are independently distributed (across $j$)
conditional on $(\beta^{(2)}_{it},\alpha_{it})$, then $f_\theta$ is
nonparametrically identified in Example 2.
\end{theorem}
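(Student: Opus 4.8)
The plan is to build on the demand-inversion result (Theorem \ref{thm:npiv1}, whose hypotheses are secured here by Condition \ref{cond:bundles} and Lemma \ref{lem:inv1}) so that the sub-demand functions $\phi_{(0,0)}$ and $\phi_{(1,1)}$ may be treated as known, and then to run the BLP-style Radon argument twice — once on $(0,0)$ and once on $(1,1)$ — followed by a deconvolution step. For part (a), fix $j\in\{1,2\}$ and let $D_{-jt}\to-\infty$ in \eqref{eq:dem00}. Because $(\epsilon_{i1t},\epsilon_{i2t})$ and $(D_{1t},D_{2t})$ have full support in $\mathbb{R}^2$ (Condition \ref{cond:bundles}), the two indicators involving product $-j$ both converge to $1$ almost everywhere in $\theta$, so the limit equals $-\int 1\{x_j^{(2)\prime}b^{(2)}+ap_j+e_j<-\delta_j\}f_{\vartheta_j}(b^{(2)},a,e_j)\,d\vartheta_j$, exactly the object $\tilde\Phi_{(0,0)}$ in \eqref{eq:sumbund}. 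Setting $w\equiv(x^{(2)}_1,p_1,1)/\|(x^{(2)}_1,p_1,1)\|$ and $u\equiv\delta_1/\|(x^{(2)}_1,p_1,1)\|$ and repeating the computation in \eqref{eq:sum1} gives $\partial\Phi_{(0,0)}(w,u)/\partial u=\mathcal{R}[f_{\vartheta_1}](w,u)$; the support hypotheses (Assumptions \ref{as:rc_blp} and \ref{as:support_pure}, with $J=2$) guarantee that $(w,u)$ ranges over enough of $\mathbb{S}^{d_\vartheta-1}\times\mathbb{R}$ to invert the (possibly limited-angle) Radon transform, identifying $f_{\vartheta_1}$, and symmetrically $f_{\vartheta_2}$. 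The identical-conditional-distribution hypothesis on $\epsilon_{ijt}$ is what lets the two inversions be combined consistently. Running the same construction on \eqref{eq:bundle11} — now $\Delta_{it}$ rides along as a coefficient on the constant regressor, and the relevant index is $x_j^{(2)\prime}b^{(2)}+ap_j+(\epsilon_{ijt}+\Delta_{it})$ — yields, after letting $D_{-jt}\to-\infty$ and normalizing, a function $\Phi_{(1,1)}$ with $\partial\Phi_{(1,1)}(w,u)/\partial u=\mathcal{R}[f_{\eta_j}](w,u)$, so $f_{\eta_j}$ is identified by the same Radon inversion. This proves (a).

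For part (b), note that $\vartheta_{ijt}=(\beta^{(2)}_{it},\alpha_{it},\epsilon_{ijt})$ and $\eta_{ijt}=(\beta^{(2)}_{it},\alpha_{it},\Delta_{it}+\epsilon_{ijt})$ share their first two coordinates and differ in the last only by the additive term $\Delta_{it}$. Under $\Delta_{it}\perp\epsilon_{ijt}\mid(\beta^{(2)}_{it},\alpha_{it})$, the conditional law of $\Delta_{it}+\epsilon_{ijt}$ given $(\beta^{(2)}_{it},\alpha_{it})$ is the convolution of the conditional laws of $\Delta_{it}$ and of $\epsilon_{ijt}$. Passing to conditional characteristic functions, $\Psi_{\Delta+\epsilon_j\mid(\beta^{(2)},\alpha)}(t)=\Psi_{\Delta\mid(\beta^{(2)},\alpha)}(t)\,\Psi_{\epsilon_j\mid(\beta^{(2)},\alpha)}(t)$; since the left factor and $\Psi_{\epsilon_j\mid(\beta^{(2)},\alpha)}$ are both recovered from $f_{\eta_j}$ and $f_{\vartheta_j}$ in part (a), and $\Psi_{\epsilon_j\mid(\beta^{(2)},\alpha)}(t)\ne0$ for a.e.\ $t$, division identifies $\Psi_{\Delta\mid(\beta^{(2)},\alpha)}$ a.e.\ and hence the conditional distribution of $\Delta_{it}$ given $(\beta^{(2)}_{it},\alpha_{it})$ by Fourier inversion — this is the deconvolution step. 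Combined with $f_{\vartheta_j}$ (which delivers the joint law of $(\beta^{(2)}_{it},\alpha_{it},\epsilon_{ijt})$) and the conditional independence of $\Delta_{it}$ from $\epsilon_{ijt}$, one obtains the joint law of $(\beta^{(2)}_{it},\alpha_{it},\Delta_{it},\epsilon_{ijt})$ for each $j$; finally the assumed conditional independence of $\epsilon_{i1t}$ and $\epsilon_{i2t}$ given $(\beta^{(2)}_{it},\alpha_{it})$ (together with the fact that $\Delta_{it}$ is already pinned down conditionally on those coordinates) lets one multiply the conditional factors to recover the full joint density $f_\theta$ of $\theta_{it}=(\beta^{(2)}_{it},\alpha_{it},\Delta_{it},\epsilon_{i1t},\epsilon_{i2t})$.

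I expect two places to need care. The first is the interchange of the limit $D_{-jt}\to-\infty$ with the integral defining $\phi_{(0,0)}$ and $\phi_{(1,1)}$: one wants dominated convergence, and the clean way is to invoke Assumption \ref{as:rc_blp} (with $J=2$) so that for suitable covariate values the $-j$ indicators are \emph{identically} $1$ on $\supp f_\theta$, making the limit exact rather than asymptotic; Condition \ref{cond:bundles}'s full-support hypotheses are exactly what make such covariate values available. The second, and the real obstacle, is the Radon-inversion step when only Assumption \ref{as:support_pure}(ii) holds rather than full support: there one only has $\mathcal{R}[f_{\vartheta_j}](w,u)$ for $(w,u)$ induced by an open ball of $(x^{(2)}_j,p_j)$-values, i.e.\ a limited-angle problem, and uniqueness of the inverse must come from the analyticity of $t\mapsto\varphi_{\vartheta_j}(tw)$ guaranteed by the moment/growth condition in Assumption \ref{as:support_pure}(ii) — this is precisely the argument already carried out for Theorem \ref{thm:fbeta}, and the main task is to check that the $\eta_j$-version (with the extra constant coordinate carrying $\Delta_{it}$) inherits the same moment bound, which it does because $\Delta_{it}$ is one of the components of $\theta_{it}$ assumed to satisfy the growth condition. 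The deconvolution step itself is then routine given $\Psi_{\epsilon_j\mid(\beta^{(2)},\alpha)}(t)\ne0$ a.e.
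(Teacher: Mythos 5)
Your proposal follows essentially the same route as the paper's proof: invert the two demand subsystems via Lemma \ref{lem:inv1} and Theorem \ref{thm:npiv1}, send $D_{-jt}$ to a limit to reduce $\phi_{(0,0)}$ and $\phi_{(1,1)}$ to single-index integrals, invert the (possibly limited-angle) Radon transform as in Theorem \ref{thm:fbeta} to obtain $f_{\vartheta_j}$ and $f_{\eta_j}$, and then deconvolve the conditional characteristic functions to extract the law of $\Delta_{it}$ and reassemble $f_\theta$ under the conditional independence hypotheses. The one slip --- which the paper's own proof shares --- is the direction of the limit for bundle $(1,1)$: its indicators involving product $-j$ are of the form $1\{\cdots>-\delta_{-j}\}$, so you need $\delta_{-j}\to+\infty$ (not $-\infty$) for them to tend to one; with $\delta_{-j}\to-\infty$ the limit of $\phi_{(1,1)}$ is identically zero.
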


The identification of the distribution of the bundle effect requires the
characteristic function of $\epsilon_{ijt}$ to have isolated zeros (see e.g.
Devroye, 1989, Carrasco and Florens, 2010). This condition can be satisfied
by various distributions including the Type-I extreme value distribution and
normal distribution.

\begin{remark}\rm
\textrm{Note that the conditions of Theorem \ref{thm:idsub} do not impose
any sign restriction on $\Delta_{it}$. Hence, the two goods can be
substitutes $(\Delta_{it}<0)$ for some individuals and complements ($%
\Delta_{it}>0$) for others. This feature, therefore, can be useful for
analyzing bundles of goods whose substitution pattern can significantly
differ across individuals (e.g. E-books and print books). }
\end{remark}

\begin{remark}\rm
\textrm{We note that the utility specification adopted in the pure
characteristics model can also be combined with the bundle choice and
multiple units of consumption studied in Section \ref{sec:mult} in the 
online supplement. The identification of the random
coefficients can be achieved using arguments similar to the ones in Section %
\ref{sec:PCM}
.}
\end{remark}

\section{Outlook}

This paper is concerned with the nonparametric identification of models of
market demand. It provides a general framework that nests several important
models, including the workhorse BLP model, and provides conditions under
which these models are point identified. Important conclusions include that
the assumption necessary to recover various objects differ; in particular,
it is easier to identify demand elasticities and more difficult to identify
the individual specific random coefficient densities. Moreover, the data
requirements are also shown to vary with the model considered. The
identification analysis is constructive, extends the classical nonparametric
BLP identification as analyzed in BH to other models, and opens up the way
for future research on sample counterpart estimation. A particularly
intriguing part hereby is the estimation of the demand elasticities, as the
moment condition is different from the one used in nonparametric IV.
Understanding the properties of these estimators, and evaluating their
usefulness in an application, is an open research question that we hope this
paper stimulates.

\ifx\undefined\BySame
\fi
\ifx\undefined \let\tmpsmall{\small \renewcommand{\small}{\tmpsmall\sc} \fi

\appendix

\section{Supplement}

In this supplement we consider three extensions to the identification analysis (Section \ref{sec:extensions}), we outline a nonparametric and a parametric estimation procedure (Section \ref{sec:estimation}) and provide the proofs of the theorems in the paper and the supplement (Section \ref{sec:proofs}). The following is a list of notations and definitions used throughout the appendix.

\begin{table}[h]
\begin{center}
\begin{tabular}{rl}
$\mathbb{S}^{q-1}:$ & The unit sphere $\mathbb{S}^{q-1}\equiv%
\{v\in\mathbb{R}^{q}:\|v\|=1\}$. \\ 
$\mathbb{H}_+:$ & The hemisphere $\mathbb{H}_+\equiv\{v=(v_1,v_2,\cdots,
v_{q})\in\mathbb{S}^{q-1}:v_{q}\ge 0\}.$ \\ 
$P_{w,r}:$ & The hyperplane: $P_{w,r}\equiv\{v\in\mathbb{R}%
^{q}:v^{\prime }w=r\}.$ \\ 
$\mu_{w,r}:$ & Lebesgue measure on $P_{w,r}$. \\ 
$\mathcal{R}:$ & Radon transform: $\mathcal{R}[f](w,u)=\int_{P_{w,u}}f(v)d%
\mu _{w,u}(v).$ \\[-24pt]
\end{tabular}
\end{center}
\par
\end{table}

\subsection{Extensions}\label{sec:extensions}
We discuss three extensions to the identification analysis in the paper. The first is a model that accounts for multiple units of consumption in the bundle model. We call this Example 3. The second extension considers alternative-specific coefficients. Finally, we briefly discuss identification of demand $\psi$ with fully independent instruments.

\subsubsection{Multiple units of consumption (Example 3)}\label{sec:mult}

We consider settings where multiple units of consumption are
allowed. For simplicity, we consider the simplest setup where $J=2$ and $%
Y_{1}\in \{0,1,2\}$ and $Y_{2}\in \{0,1\}.$ The utility from consuming $%
y_{1} $ units of product $1$ and $y_{2}$ units of product 2 is specified as
follows: 
\begin{equation}
U_{i,(y_{1},y_{2}),t}^{\ast }=y_{1}U_{i1t}^{\ast }+y_{2}U_{i2t}^{\ast }+
\Delta_{i,(y_{1},y_{2}),t}~,  \label{eq:bundleutil1}
\end{equation}%
where $\Delta _{i,(y_{1},y_{2}),t}$ is the additional utility (or
disutility) from consuming the particular bundle $(y_{1},y_{2})$. This
specification allows, e.g., for decreasing marginal utility (with the number
of units), as well as interaction effects. We assume that $\Delta
_{(1,0)}=\Delta _{(0,1)}=0$ as $U_{i1t}^{\ast }$ and $U_{i2t}^{\ast }$ give
the utility from consuming a single unit of each of the two goods.
Throughout this example, we assume that $U_{i,(y_{1},y_{2}),t}^{\ast }$ is
concave in $(y_{1},y_{2})$. Then, a bundle is chosen if its utility exceeds
those of the neighboring alternatives. For example, bundle $(2,0)$ is chosen
if it is preferred to bundles (1,0), (1,1) and (2,1). That is, 
\begin{align}\label{eq:20demandmult}
&2(X_{1t}^{\prime }\beta_{it} +\alpha_{it} P_{1t}+\Xi
_{1t}+\epsilon_{i1t})+\Delta _{i,(2,0),t}>X_{1t}^{\prime }\beta_{it}
+\alpha_{it} P_{1t}+\Xi _{1t}+\epsilon_{i1t}~,  \notag \\
&2(X_{1t}^{\prime }\beta_{it} +\alpha_{it} P_{1t}+\Xi
_{1t}+\epsilon_{i1t})+\Delta _{i,(2,0),t}  \notag \\
&\hspace{30pt}>X_{1t}^{\prime }\beta_{it} +\alpha_{it} P_{1t}+\Xi
_{1t}+\epsilon_{i1t}+X_{2t}^{\prime }\beta_{it} +\alpha_{it} P_{2t}+\Xi
_{2t}+\epsilon_{i2t}+\Delta _{i,(1,1),t}  \\
&2(X_{1t}^{\prime }\beta_{it} +\alpha_{it} P_{1t}+\Xi
_{1t}+\epsilon_{i1t})+\Delta _{i,(2,0),t}~,  \notag \\
&\hspace{30pt}>2(X_{1t}^{\prime }\beta_{it} +\alpha_{it} P_{1t}+\Xi
_{1t}+\epsilon_{i1t})+X_{2t}^{\prime }\beta_{it} +\alpha_{it} P_{2t}+\Xi
_{2t}+\epsilon_{i2t}+\Delta _{i,(2,1),t}. \notag 
\end{align}%
The aggregate structural demand can be obtained as 
\begin{align}\label{eq:10agdemandmult}
&\varphi _{(2,0)}(X_{t},P_{t},\Xi _{t})=\int 1\{X_{1t}^{\prime
}b+aP_{1t}+e_1+\Delta _{(2,0)}>-\Xi _{1t}\} \notag \\
&\quad \times 1\{(X_{1t}-X_{2t})^{\prime }b+a(P_{1t}-P_{2t})+(e_1-e_2)+\Delta
_{(2,0)}-\Delta _{(1,1)}>-\Xi _{1t}+\Xi _{2t}\} \\
&\quad \times 1\{X_{2t}^{\prime }b+aP_{2t}+e_2+\Delta _{(2,1)}-\Delta _{(2,0)}<-\Xi
_{2t}\}f_{\theta }(b,a,e,\Delta)d\theta ~.\notag 
\end{align}%
The observed aggregate demand for the bundles are defined in a similar way for $%
S_{l,t}=\varphi _{l}(X_{t},P_{t},\Xi _{t})$, $l\in \mathbb{L}$ where $\mathbb{L}%
\equiv \{(0,0),(1,0),(0,1),(1,1),(2,0),(2,1)\}.$

Let $\tilde{\mathbb{L}}=\{(2,0),(2,1)\}.$ From \eqref{eq:20demandmult}, $%
\varphi_{(2,0)}$ is increasing in $D_1$ but is decreasing in $D_2$.
Similarly, $\varphi_{(2,1)}$ is increasing in both $D_1$ and $D_2$. The rest
of the argument is similar to Example 2. This ensures Assumption \ref%
{as:inv1} in this example, and by Theorem \ref{thm:npiv1}, one can then
nonparametrically identify subcomponents $\{\varphi_l,l\in\tilde{\mathbb{L}}%
\}$ of the demand function $\varphi$. One may alternatively take $\tilde{%
\mathbb{L}}=\{(0,0),(0,1)\}$ and use the same line of argument. Note,
however, that (1,0) or (1,1) cannot be included in $\tilde{\mathbb{L}}$ as $%
\phi_{(1,0)}$ and $\phi_{(1,1)}$ are not monotonic in one of $(D_1,D_2)$.
This is because increasing $D_1$ while fixing $D_2$, for example, makes good
1 more attractive and creates both an inflow of individuals who move from
(0,0) to (1,0) and an outflow of individuals who move from (1,0) to (2,0).
Hence, the demand for (1,0) does not necessarily change monotonically.

The nonparametric IV step identifies $\phi_{l}$ for $l\in%
\{(0,0),(0,1),(2,0),(2,1)\}$. Using them, we may first recover the joint
density of some of the random coefficients: $\theta_{it}=(\beta_{it}^{(2)},%
\alpha_{it},\epsilon_{i1t},\epsilon_{i2t},$ $\Delta_{i,(1,1),t},%
\Delta_{i,(2,0),t},\Delta_{i,(2,1),t})^{\prime }.$ We begin with the demand
for $(0,0)$, $(0,1)$, $(2,0)$, and $(2,1)$ given by 
\begin{allowdisplaybreaks}
\begin{align*}
&\phi_{(0,0)}(x^{(2)},p,\delta)=\int 1\{x^{(2)}_1{}^{\prime
}b^{(2)}+ap_1+e_1<-\delta_1\} \\
&\quad\times1\{x^{(2)}_2{}^{\prime }b^{(2)}+ap_2+e_2<-\delta_2\} \\
&\quad\times 1\{(x_{1}^{(2)}+x_{2}^{(2)})^{\prime }b^{(2)}+a
(p_{1}+p_{2})+(e_1+e_2)<-\delta_{1}-\delta_{2}\}
f_{\theta}(b^{(2)},a,e,\Delta)d\theta~, \\
&\phi_{(0,1)}(x^{(2)},p,\delta)=\int 1\{x^{(2)}_2{}^{\prime
}b^{(2)}+ap_2+e_2>-\delta_2\} \\
&\quad\times 1\{(x_{1}^{(2)}-x_{2}^{(2)})^{\prime }b^{(2)}+a
(p_{1}-p_{2})+(e_1-e_2)<-\delta_{1}+\delta_{2}\} \\
&\quad\times 1\{x^{(2)}_1{}^{\prime
}b^{(2)}+ap_1+e_1+\Delta_{(1,1)}>-\delta_1\}f_{\theta}(b^{(2)},a,e,\Delta)d%
\theta~, \\
&\phi_{(2,0)}(x^{(2)},p,\delta)=\int 1\{x^{(2)}_1{}^{\prime
}b^{(2)}+ap_1+e_1+\Delta_{(2,0)}>-\delta_1\} \\
&\quad\times 1\{(x_{1}^{(2)}-x_{2}^{(2)})^{\prime }b^{(2)}+a
(p_{1}-p_{2})+(e_1-e_2)+\Delta_{(2,0)}-\Delta_{(1,1)}>-\delta_{1}+\delta_{2}%
\} \\
&\quad\times 1\{x^{(2)}_2{}^{\prime
}b^{(2)}+ap_2+e_2+\Delta_{(2,1)}-\Delta_{(2,0)}<-\delta_2\}f_{%
\theta}(b^{(2)},a,e,\Delta)d\theta~, \\
&\phi_{(2,1)}(x^{(2)},p,\delta)=\int 1\{x^{(2)}_1{}^{\prime
}b^{(2)}+ap_1+e_1+\Delta_{(2,1)}-\Delta_{(1,1)}>-\delta_1\} \\
&\quad\times 1\{x_{1}^{(2)}{}^{\prime }b^{(2)}+a
p_{1}+e_1+\Delta_{(2,1)}-\Delta_{(2,0)}>-\delta_{2}\} \\
&\quad\times 1\{(x^{(2)}_1+x^{(2)}_2)^{\prime
}b^{(2)}+a(p_1+p_2)+(e_1+e_2)+\Delta_{(2,1)}>-\delta_1-\delta_2\}f_{%
\theta}(b^{(2)},a,e,\Delta)d\theta~.
\end{align*}
Hence, if $D_{2t}$ has a large support, by taking $\delta_2$ sufficiently
small or sufficiently large, we may define 
\begin{align}
\tilde\Phi_{(0,0)}&(x^{(2)}_1,p_1,\delta_1)\equiv-\lim_{\delta_2\to-\infty}%
\phi_{(0,0)}(x^{(2)},p, \delta)  \notag \\
&=-\int 1\{x^{(2)}_1{}^{\prime
}b^{(2)}+ap_1+e_1<-\delta_1\}f_{\theta}(b^{(2)},a,e,\Delta)d\theta~,
\label{eq:mult0} \\
\tilde\Phi_{(0,1)}&(x^{(2)}_1,p_1,\delta_1)\equiv-\lim_{\delta_2\to\infty}%
\phi_{(0,1)}(x^{(2)},p,\delta)  \notag \\
&=-\int 1\{x^{(2)}_1{}^{\prime
}b^{(2)}+ap_1+e_1+\Delta_{(1,1)}>-\delta_1\}f_{\theta}(b^{(2)},a,e,\Delta)d%
\theta~,  \label{eq:mult1} \\
\tilde\Phi_{(2,0)}&(x^{(2)}_1,p_1,\delta_1)\equiv-\lim_{\delta_2\to-\infty}%
\phi_{(2,0)}(x^{(2)},p,\delta)  \notag \\
&=-\int 1\{x^{(2)}_1{}^{\prime
}b^{(2)}+ap_1+e_1+\Delta_{(2,0)}>-\delta_1\}f_{\theta}(b^{(2)},a,e,\Delta)d%
\theta~,  \label{eq:mul2} \\
\tilde\Phi_{(2,1)}&(x^{(2)}_1,p_1,\delta_1)\equiv-\lim_{\delta_2\to\infty}%
\phi_{(2,1)}(x^{(2)},p,\delta)  \notag \\
&=\int 1\{x^{(2)}_1{}^{\prime
}b^{(2)}+ap_1+e_1+\Delta_{(2,1)}-\Delta_{(1,1)}>-\delta_1\}f_{%
\theta}(b^{(2)},a,e,\Delta)d\theta~.  \label{eq:mul3}
\end{align}
For each $l\in\{(0,0),(0,1),(2,0),(2,1)\}$, define $\Phi_{l}$ as in %
\eqref{eq:Phi}. Arguing as in Example 2, $\Phi_l$ is then related to the
random coefficient densities by 
\begin{align*}
\frac{\partial\Phi_{l}(w,u)}{\partial u}&=\mathcal{R}[f_{\vartheta_l}](
w,u),~~~l\in\{(0,0),(0,1),(2,0),(2,1)\},
\end{align*}
\end{allowdisplaybreaks}
where $w\equiv -(x^{(2)}_1,p_1,1)/\|(x^{(2)}_1,p_1,1)\|$ and $%
u\equiv\delta_1/\|(x^{(2)}_1,p_1,1)\|$. Here, for each $l$, $f_{\vartheta_l}$
is the joint density of a subvector $\vartheta_{i,l,t}$ of $\theta_{it}$,
which is given by\footnote{%
Alternative assumptions can be made to identify the joint density of
different components of the random coefficient vector. For example, a large
support assumption on $D_{1t}$ would allow one to recover the joint density
of $(\beta_{it}^{(2)},\alpha_{it},\epsilon_{i2t}+\Delta_{i,(2,1),t}-%
\Delta_{i,(2,0),t})$ from the demand for bundle (2,0).} 
\begin{align}
\begin{split}
\vartheta_{i,(0,0),t}&=(\beta_{it}^{(2)},\alpha_{it},\epsilon_{i1t}),~
\vartheta_{i,(0,1),t}=(\beta_{it}^{(2)},\alpha_{it},\epsilon_{i1t}+%
\Delta_{i,(1,1),t}),~ \\
\vartheta_{i,(2,0),t}&=(\beta_{it}^{(2)},\alpha_{it},\epsilon_{i1t}+%
\Delta_{i,(2,0),t}),~
\vartheta_{i,(2,1),t}\\
&=(\beta_{it}^{(2)},\alpha_{it},\epsilon_{i1t}+%
\Delta_{i,(2,1),t}-\Delta_{i,(1,1),t}).  \label{eq:fbetas}
\end{split}
\end{align}
The joint density of $\theta_{it}$ is identified by making the following
assumption.

\begin{ass}
\label{as:mult_ci} (i) Assume that $(\Delta_{i,(1,1),t},\Delta_{i,(2,0),t},
\Delta_{i,(2,1),t})\perp\epsilon_{ijt}|(\beta^{(2)}_{it},\alpha_{it})$ and $%
\Psi_{\epsilon_{j}|(\beta^{(2)},\alpha)}(t)\ne 0$ for almost all $t\in%
\mathbb{R}$ and for some $j\in\{1,2\}$; (ii) $\epsilon_{ijt},j=1,2$ are
independently and identically distributed (across $j$) conditional on $%
(\beta^{(2)}_{it},\alpha_{it})$; (iii) $(\Delta_{i,(1,1),t},%
\Delta_{i,(2,0),t},\Delta_{i,(2,1),t})$ are independent of each other
conditional on $(\beta^{(2)}_{it},\alpha_{it})$ and $\Psi_{\Delta_{(1,1)}|(%
\beta^{(2)},\alpha)}(t)\ne 0$ for almost all $t\in\mathbb{R}$.
\end{ass}

\textrm{Assumption \ref{as:mult_ci} (iii) means that, relative to the
benchmark utility given as an index function of $(X^{(2)}_t,P_t,D_t)$, the
additional utilities from the bundles are independent of each other.
Assumption \ref{as:mult_ci} (iii) also adds a regularity condition for
recovering the distribution of $\Delta_{i,(2,1),t}$ from those of $%
\Delta_{i,(2,1),t}-\Delta_{i,(1,1),t}$ and $\Delta_{i,(1,1),t}$ through
deconvolution. }

\textrm{Identification of the joint density $f_{\theta}$ allows one to
recover the demand for the middle alternative: (1,0), which remained
unidentified in our analysis in the nonparametric IV step. To see this, we
note that the demand for this bundle is given by 
\begin{multline}
\phi_{(1,0)}(x^{(2)},p,\delta)=\int 1\{0<x_1^{(2)}{}^{\prime
(2)}+ap_1+e_1+\delta_1<-\Delta_{(2,0)}\} \\
\times 1\{x_2^{(2)}{}^{\prime
(2)}+ap_2+e_2+\delta_2<-\Delta_{(1,1)}\}1\{(x_1^{(2)}-x_2^{(2)}){}^{\prime
(2)}+a(p_1-p_2)+(e_1-e_2)<-(\delta_1-\delta_2)\} \\
\times1\{(x_1^{(2)}+x_2^{(2)}){}^\prime
b^{(2)}+a(p_1+p_2)+(e_1+e_2)+\Delta_{(2,1)}<-(\delta_1+\delta_2)\}
f_{\theta}(b^{(2)},a,e,\Delta)d\theta.
\end{multline}
Since the previously unknown density $f_{\theta}$ is identified, this demand
function is identified. This and $\phi_{(1,1)}=1-\sum_{l\in\mathbb{L}%
\setminus \{(1,1)\}}\phi_l$ further imply that all components of $\phi$ are
now identified. We summarize these results below as a theorem.\footnote{%
For simplicity, we only consider the case where $\delta_{2}\to -\infty$ or $%
\infty$ in \eqref{eq:mult0}-\eqref{eq:mult1}. This requires a full support
condition on $D_{1t}$. It is possible to replace this assumption with an
analog of Assumption \ref{as:analytic} by also considering the case where $%
\delta_{1}\to -\infty$ or $\infty$ and imposing an additional restriction on
the distribution of $(\epsilon_{i1t},\epsilon_{i2t},\Delta_{i,(1,1),t},%
\Delta_{i,(2,0),t},\Delta_{i,(2,1),t})$.} }

\begin{theorem}
\label{thm:mult1} Suppose $U_{(y_1,y_2),t}$ is concave in $(y_1,y_2)$. Furthermore, we set $\theta_{it}=\linebreak(\beta^{(2)}_{it},\alpha_{it},\epsilon_{i1t},\epsilon_{i2t},%
\Delta_{i,(1,1),t},\Delta_{i,(2,0),t},\Delta_{i,(2,1),t})$.
Suppose Condition \ref{cond:bundles} and Assumptions \ref{as:index}, \ref%
{as:npiv1}-\ref{as:covariates}, \ref{as:support_pure} hold with $J=2$ and $%
\theta_{it}=(\beta^{(2)}_{it},\alpha_{it},\epsilon_{i1t},\epsilon_{i2t},%
\Delta_{i,(1,1),t},\Delta_{i,(2,0),t},\Delta_{i,(2,1),t})$. Suppose that $%
(X_{1t},P_{1t},D_{1t})$ has a full support. Then, (a) all densities $f_{\vartheta_l}$ for $l\in%
\{(0,0),(0,1),(2,0),(2,1)\}$ are nonparametrically identified in Example 3;
(b) Suppose further that Assumption \ref{as:mult_ci} holds. Then, $%
f_{\theta} $ is identified in Example 3. Further, all components of the
structural demand $\phi$ are identified.
\end{theorem}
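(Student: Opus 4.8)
The plan is to combine the nonparametric IV identification of structural demand (Theorem~\ref{thm:npiv1}), the Radon inversion already used for the BLP model, and a chain of deconvolutions exploiting the conditional independence restrictions in Assumption~\ref{as:mult_ci}. First I would verify that Assumption~\ref{as:inv1} holds for the subsystems $\tilde{\mathbb{L}}=\{(2,0),(2,1)\}$ and $\tilde{\mathbb{L}}=\{(0,0),(0,1)\}$: by \eqref{eq:20demandmult}, $\varphi_{(2,0)}$ is strictly increasing in $D_{1t}$ and strictly decreasing in $D_{2t}$ while $\varphi_{(2,1)}$ is strictly increasing in both, so the relevant Jacobian is non-degenerate, and together with the full support of $(D_{1t},D_{2t})$ from Condition~\ref{cond:bundles} the demand subsystem is invertible (the argument parallels Example~2). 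Theorem~\ref{thm:npiv1} then identifies $\phi_l$ for all $l\in\{(0,0),(0,1),(2,0),(2,1)\}$.

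For part (a), I would marginalize out $D_{2t}$: using its full support, pass to the limit $\delta_2\to-\infty$ (or $\delta_2\to+\infty$) inside the integral representations of $\phi_l$. Dominated convergence applies because the integrands are bounded by $1$ and $f_\theta$ is integrable, and Assumption~\ref{as:covariates}(ii) keeps $f_\theta$ fixed while the covariates vary; this yields the functions $\tilde\Phi_l$ in \eqref{eq:mult0}--\eqref{eq:mul3}, which are identified since the $\phi_l$ are. After the normalization defining $\Phi_l$ as in \eqref{eq:Phi}, the computation leading to \eqref{eq:sum1} gives $\partial\Phi_l(w,u)/\partial u=\mathcal{R}[f_{\vartheta_l}](w,u)$ with $w$ and $u$ as stated after \eqref{eq:mul3}. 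Because $(X_{1t}^{(2)},P_{1t},D_{1t})$ has full support, $(w,u)$ sweeps out a hemisphere times $\mathbb{R}$, which by the symmetry $\mathcal{R}[f](-w,-u)=\mathcal{R}[f](w,u)$ is as good as the whole sphere; injectivity of the Radon transform (Helgason, 1999; Hoderlein, Klemel\"{a}, and Mammen, 2010) then identifies each $f_{\vartheta_l}$, $l\in\{(0,0),(0,1),(2,0),(2,1)\}$, giving part (a).

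For part (b), I would first marginalize $f_{\vartheta_{(0,0)}}$ to obtain the density of $(\beta_{it}^{(2)},\alpha_{it})$ and the conditional density of $\epsilon_{i1t}$ given $(\beta_{it}^{(2)},\alpha_{it})$. By \eqref{eq:fbetas} and Assumption~\ref{as:mult_ci}(i), conditional on $(\beta_{it}^{(2)},\alpha_{it})$ the densities read off from $f_{\vartheta_{(0,1)}}$, $f_{\vartheta_{(2,0)}}$, $f_{\vartheta_{(2,1)}}$ are the convolutions of $f_{\epsilon_{i1t}\mid\cdot}$ with, respectively, the conditional densities of $\Delta_{i,(1,1),t}$, $\Delta_{i,(2,0),t}$, and $\Delta_{i,(2,1),t}-\Delta_{i,(1,1),t}$. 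Since $\Psi_{\epsilon_1\mid(\beta^{(2)},\alpha)}$ is non-vanishing a.e.\ (non-vanishing for some $j$ by Assumption~\ref{as:mult_ci}(i) and equal across $j$ by Assumption~\ref{as:mult_ci}(ii)), deconvolution recovers those three conditional $\Delta$-densities; a further deconvolution, using the mutual conditional independence and the non-vanishing of $\Psi_{\Delta_{(1,1)}\mid(\beta^{(2)},\alpha)}$ in Assumption~\ref{as:mult_ci}(iii), recovers the conditional density of $\Delta_{i,(2,1),t}$ from those of $\Delta_{i,(2,1),t}-\Delta_{i,(1,1),t}$ and $\Delta_{i,(1,1),t}$. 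Under the conditional independence structure of Assumption~\ref{as:mult_ci}, the joint conditional density of $(\epsilon_{i1t},\epsilon_{i2t},\Delta_{i,(1,1),t},\Delta_{i,(2,0),t},\Delta_{i,(2,1),t})$ given $(\beta_{it}^{(2)},\alpha_{it})$ factors into the product of these marginals, so $f_\theta$ is identified. Plugging the identified $f_\theta$ into the integral formula for $\phi_{(1,0)}$ would identify that demand, and $\phi_{(1,1)}=1-\sum_{l\ne(1,1)}\phi_l$ then identifies the last component of $\phi$.

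The main obstacle is the bookkeeping in part (b): tracking exactly which conditional characteristic functions must be non-vanishing, checking that each relevant pair is conditionally independent so the convolution representations are valid, and ordering the deconvolutions so that $\Delta_{i,(2,1),t}$ is obtained only after $\Delta_{i,(1,1),t}$. The limit interchange in part (a) is routine given Condition~\ref{cond:bundles}, though one should confirm that the required values of $\delta_2$ lie in the conditional support of $D_{2t}$ jointly with the other covariates.
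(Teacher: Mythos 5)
Your proposal is correct and follows essentially the same route as the paper's proof: invert the two demand subsystems $\{(2,0),(2,1)\}$ and $\{(0,0),(0,1)\}$ via the monotonicity/Gale--Nikaido argument and Theorem \ref{thm:npiv1}, send $\delta_2\to\mp\infty$ to obtain the four $\tilde\Phi_l$ and invert the Radon transforms using the full support of $(X_{1t}^{(2)},P_{1t},D_{1t})$, then run the sequential deconvolutions (recovering $\Delta_{(2,1)}$ from $\Delta_{(2,1)}-\Delta_{(1,1)}$ only after $\Delta_{(1,1)}$) and factor $f_\theta$ by conditional independence, finally recovering $\phi_{(1,0)}$ and $\phi_{(1,1)}$. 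The bookkeeping you flag as the main obstacle is exactly what the paper's proof does, in the same order.
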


\subsubsection{Alternative specific coefficients}

\label{sec:tensor} So far, we have maintained the assumption that $(\beta
_{ijt},\alpha _{ijt})=(\beta_{it} ,\alpha_{it} )$ for all $j$ almost surely.
This excludes alternative specific random coefficients. However, this is not
essential in our analysis. One may allow some or all components of $(\beta
_{ijt},\alpha _{ijt})$ to be different random variables across $j$ and
identify their joint distribution under an extended support condition on the
product characteristics.

We first note that the aggregate demand is identified as long as Assumptions %
\ref{as:index}-\ref{as:npiv1} hold. In the BLP model, the marginal density $%
f_{\vartheta_j}$ of $\vartheta_{ijt}=(\beta^{(2)}_{ijt},\alpha_{ijt},%
\epsilon_{ijt})$ can be identified for any $j$ as long as the corresponding
product characteristics $(X_{jt}^{(2)},P_{jt},D_{jt})$ has a full support
using the same identification strategy in Section \ref{sec:rcid} (see Remark %
\ref{rem:iid}). For the pure characteristics demand model, we note that the
maps $\mathcal{M}_{\mathcal{J}}$ cannot be used because the use of this map is
justified when $(\beta _{ijt},\alpha _{ijt})=(\beta_{it} ,\alpha_{it}
),\forall j$. However, the large support assumption $\mathrm{supp}\,(D_{kt})
= \mathbb{R}$ for $k\ne j$  can still be used to construct $\Phi$. Hence,
the analysis of this case becomes similar to the BLP model. In both models,
the joint density $f_\theta$ of $\theta_{it}=(\vartheta_{i1t},\cdots,%
\vartheta_{iJt})$ can be recovered under the assumption that $\vartheta_{ijt}
$ are independent across $j$.

When the covariates $(X^{(2)}_t,P_t,D_t)$ have rich variations jointly, it
is also possible to identify the joint density $f_\theta$ without the
independence assumption invoked above. This requires us to extend our
identification strategy. To see this, we take Example 2 as an illustration
below. Consider identifying the joint density of $\theta_{it}=(%
\beta^{(2)}_{i1t},\beta^{(2)}_{i2t},\alpha_{i1t},\alpha_{i2t},%
\epsilon_{i1t},\epsilon_{i2t}$ $+\Delta_{it})$ under the assumption that the
two goods are complements, i.e. $\Delta_{it}>0, a.s$. In this setting, we
may use the demand for bundle $(1,0)$, which can be written as 
\begin{align}\label{eq:tensor1}
\begin{split}
\phi_{(1,0)}(x^{(2)},p,\delta)=&\int 1\{x^{(2)}_1{}^{\prime
}b^{(2)}_1+a_1p_1+e_1>-\delta_1\} \\
&\times 1\{x^{(2)}_2{}^{\prime
}b^{(2)}_2+a_2p_2+e_2+\Delta<-\delta_2\}f_{%
\theta}(b^{(2)}_1,b^{(2)}_2,a_1,a_2,\Delta)d\theta.  
\end{split}
\end{align}
To recover the joint density, one has to directly work with this demand
function without simplifying it further. A key feature of \eqref{eq:tensor1}
is that it involves multiple indicator functions and that distinct subsets
of $\theta$ show up in each of these indicator functions. For example, the
first indicator function in \eqref{eq:tensor1} involves $(\beta^{(2)}_{i1t},%
\alpha_{i1t},\epsilon_{i1t})$, while the second indicator function involves $%
(\beta^{(2)}_{i2t},\alpha_{i2t},\epsilon_{i2t}+\Delta_{it})$. Integral
transforms of this form are studied in Dunker, Hoderlein, Kaido, and Sherman (2018)
in their analysis of random coefficients discrete game models. They use
tensor products of integral transforms to study nonparametric identification
of random coefficient densities. Using their framework, one may show that 
\begin{align}
\frac{\partial^2 \phi_{(1,0)}(w_1,w_2,u_1,u_2)}{\partial u_1 \partial u_2}=(%
\mathcal{R}\otimes \mathcal{R})[f_{\theta}](w_1,w_2,u_1,-u_2),
\label{eq:radontensor}
\end{align}
where $w_1=-(x_1^{(2)},p_1,1)/\|(x_1^{(2)},p_1,1)\|$, $%
w_2=(x_2^{(2)},p_2,1)/\|(x_2^{(2)},p_2,1)\|$, and $u_1=\linebreak-\delta_1/%
\|(x_1^{(2)},p_1,1)\|$, $u_2=\delta_2/\|(x_2^{(2)},p_2,1)\|$, and $\mathcal{R%
}\otimes \mathcal{R}$ is the tensor product of Radon transforms, which can
be inverted to identify $f_\theta$. The main principle of our identification
strategy is therefore the same as before. Inverting the transform in %
\eqref{eq:radontensor} to identify $f_\theta$ requires Assumption \ref%
{as:analytic} (i) to be strengthened as follows.

\begin{ass}
\label{as:tensor_support} $( X^{(2)}_{1t}, P_{1t},D_{1t},X^{(2)}_{2t},
P_{2t},D_{2t})$ has a full support.
\end{ass}

This is a stronger support condition than Assumption \ref{as:analytic} (i) as it
requires a joint full support condition for the characteristics of both
goods. This condition is violated, for example, when there is a common
covariate that enters the characteristics of both goods. This is in line
with the previous findings in the literature that identifying the joint
distribution of potentially correlated unobservable tastes for products
(e.g. $\epsilon_{1} $ and $\epsilon_2$) requires variables that are excluded
from one or more goods, see e.g. Keane (1992) and Gentzkow (2007).
Identification of $f_\theta$ is then established by the following theorem.%
\footnote{%
We omit the proof of this result for brevity. Similar to Theorem \ref{thm:fbeta}, it is also possible to establish identification using an analog of Assumption \ref{as:analytic} (ii), which relaxes the support requirement at the cost of an additional moment condition. We also note that one may 
disentangle the distribution of $\Delta_{it}$ from that of $%
\epsilon_{i2t}+\Delta_{it}$ using a deconvolution argument as done in
Theorem \ref{thm:idsub}.}

\begin{theorem}
\label{thm:tensor} In Example 2, let $\theta_{it}=(\beta^{(2)}_{i1t},%
\beta^{(2)}_{i2t},\alpha_{i1t},\alpha_{i2t},\epsilon_{i1t},\epsilon_{i2t}+%
\Delta_{it})$. Suppose that Assumptions \ref{as:index}-\ref{as:npiv1}, \ref%
{as:covariates}, and \ref{as:tensor_support} hold. Suppose further that $%
\Delta_{it}>0,~a.s.$ Then, $f_\theta$ is identified.
\end{theorem}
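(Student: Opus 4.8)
The plan is to identify the structural demand for the bundle $(1,0)$, recognize it as a tensor product of two Radon transforms applied to $f_\theta$, and then invert that transform using the full support hypothesis. First I would invoke Theorem~\ref{thm:npiv1}: since Assumptions~\ref{as:index}--\ref{as:npiv1} are in force, the inversion $\Xi_{jt}=\psi_j(X^{(2)}_t,P_t,\tilde S_t)-X^{(1)}_{jt}$ together with the conditional moment restrictions identifies $\psi$, and hence (via the equivalence relation in the Example~2 continued discussion, with $\tilde{\mathbb L}=\{(1,0),(1,1)\}$) the subcomponent $\phi_{(1,0)}$ of the structural demand, as a function of $(X^{(2)}_t,P_t,D_t)$; by Assumption~\ref{as:tensor_support} this function is identified on all of $\mathbb{R}^{d_X+1}\times\mathbb{R}^{d_X+1}$. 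Next I would use the complementarity restriction $\Delta_{it}>0$ a.s.\ to simplify the demand: in the system \eqref{eq:10demand} the inequality ``$(1,0)$ preferred to $(0,1)$'' is implied by the other two (beating $(1,1)$ forces the product-$2$ index to lie below $-\Delta_{it}<0$, while beating $(0,0)$ forces the product-$1$ index above $0$), so $\phi_{(1,0)}$ collapses to \eqref{eq:tensor1}, i.e.\ the integral of $f_\theta$ against a product of two indicator functions, the first depending on $\theta_{it}$ only through $\vartheta_{1}\equiv(\beta^{(2)}_{i1t},\alpha_{i1t},\epsilon_{i1t})$ and the second only through $\vartheta_{2}\equiv(\beta^{(2)}_{i2t},\alpha_{i2t},\epsilon_{i2t}+\Delta_{it})$, which together are a reordering of the coordinates of $\theta_{it}$.

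I would then write $\phi_{(1,0)}$, after normalizing the two linear forms by $\|(x^{(2)}_1,p_1,1)\|$ and $\|(x^{(2)}_2,p_2,1)\|$, as the iterated half-space integral $\int 1\{w_1'\vartheta_1<\cdot\}\,1\{w_2'\vartheta_2<\cdot\}\,f_\theta\,d\vartheta_1\,d\vartheta_2$ (here Assumption~\ref{as:covariates}(ii) is what makes this representation legitimate: the covariates are independent of $\theta_{it}$, so varying $(X^{(2)}_t,P_t,D_t)$ does not perturb $f_\theta$). Differentiating once in the first offset turns the first half-space integral into an integration over the hyperplane $w_1'\vartheta_1=\mathrm{const}$, i.e.\ the Radon transform in the first block of coordinates; differentiating again in the second offset does the same in the second block, which yields exactly \eqref{eq:radontensor},
\[
\frac{\partial^2 \phi_{(1,0)}(w_1,w_2,u_1,u_2)}{\partial u_1 \partial u_2}=(\mathcal{R}\otimes \mathcal{R})[f_{\theta}](w_1,w_2,u_1,-u_2),
\]
with $w_1,w_2,u_1,u_2$ as defined in the text preceding the theorem. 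Since $\phi_{(1,0)}$ is identified and the derivative of a half-space integral of an $L^1$ density in its offset equals the associated Radon transform at almost every offset (and the tensor structure lets one iterate this), the left-hand side, hence $(\mathcal{R}\otimes\mathcal{R})[f_\theta]$, is identified at every $(w_1,w_2,u_1,u_2)$ for which the underlying $(x^{(2)},p,\delta)$ lies in the support.

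Finally I would invert. Under Assumption~\ref{as:tensor_support} the pairs $(x^{(2)}_1,p_1,\delta_1)$ and $(x^{(2)}_2,p_2,\delta_2)$ vary independently over $\mathbb{R}^{d_X+1}\times\mathbb{R}^{d_X+1}$, so after normalization $w_1$ and $w_2$ sweep out full hemispheres of the unit sphere in the $(d_X+1)$-dimensional coefficient space and $u_1,u_2$ sweep out all of $\mathbb{R}$; since $\mathcal{R}[g](w,u)=\mathcal{R}[g](-w,-u)$, a hemisphere of directions with all offsets already pins down the full single-block Radon transform, so $(\mathcal{R}\otimes\mathcal{R})[f_\theta]$ is identified on its entire domain. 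Applying the classical Radon inversion of Helgason (1999) in each block---equivalently, invoking the injectivity of tensor products of Radon transforms established in Dunker, Hoderlein, Kaido, and Sherman (2018) by disintegrating $\mathcal{R}\otimes\mathcal{R}$ into its single factors and inverting each---then recovers $f_\theta$ uniquely from $(\mathcal{R}\otimes\mathcal{R})[f_\theta]$, which is the claim.

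The main obstacle is this last step. One must verify that hemispherical directional coverage together with all offsets genuinely suffices to invert the \emph{tensor} transform (this requires the Fubini-type disintegration of $\mathcal{R}\otimes\mathcal{R}$ into single-factor Radon transforms, each of which is injective with a known inversion formula), and one must track the regularity allowing $\phi_{(1,0)}$ to be differentiated twice and the order of integration and differentiation to be interchanged. These are precisely the functional-analytic points handled in the tensor-Radon framework of DHKS (2018), which is why that machinery is cited rather than re-derived here; the complementarity assumption $\Delta_{it}>0$ plays the role of guaranteeing the clean two-factor structure in \eqref{eq:tensor1} that makes the tensor argument applicable in the first place.
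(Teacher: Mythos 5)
Your proposal is correct and follows exactly the route the paper itself sketches in the text preceding the theorem (the paper explicitly omits a formal proof in a footnote): identify $\phi_{(1,0)}$ via Theorem \ref{thm:npiv1}, use $\Delta_{it}>0$ to collapse the three choice inequalities in \eqref{eq:10demand} to the two-indicator form \eqref{eq:tensor1}, differentiate twice in the offsets to obtain the tensor Radon transform \eqref{eq:radontensor}, and invert it under the joint full-support Assumption \ref{as:tensor_support} using the tensor-Radon injectivity of Dunker, Hoderlein, Kaido, and Sherman (2018). Your explicit argument for why the ``$(1,0)$ beats $(0,1)$'' inequality is redundant once $\Delta_{it}>0$ --- which is what delivers the clean two-factor structure --- is a detail the paper leaves implicit, and it is correct.
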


\subsection{Nonparametric identification of $\protect\psi$ with full
independence}\label{sec:fullind} 

In Section \ref{sec:inversion}, we discussed the 
nonparametric identification of the structural functions $\psi_j$ in the equation $%
\Xi_{jt}=\psi_j(X_t^{(2)},P_t,\tilde S_t)-X^{(1)}_{jt}$. Following BH
(2013), we proposed to identify the structural functions by the conditional
moment equations 
\begin{align*}
E\Big[\psi_j\left(X_t^{(2)},P_t,S_t\right)\Big|Z_t=z_t,X_t=\left(x_t^{(1)},
x_t^{(2)}\right)\Big]= x_{jt}^{(1)},~~j=1,\cdots,J.
\end{align*}
with instrumental variables $Z_t$. The identification relies on the
assumption that the unobservable $\Xi_{jt}$ is mean independent of the
instruments. However, in many applications researchers choose instruments by
arguing that they are independent of the unobservable. Using only mean
independence means using only parts of the available information. Thereby,
the identifying power is weakened. Adding the stronger independence
assumption when it is justified will improve identification as well as
estimation. Therefore, we propose an approach similar to Dunker et al.
(2014) by formally assuming 
\begin{equation*}
\Xi_{jt} \mathpalette{\protect\independenT}{\perp} (Z_t, X_t) \quad %
\mbox{and } E[\Xi_{jt}] = 0 \qquad \mbox{for all } j,t.
\end{equation*}
This leads to the nonlinear equation 
\begin{equation*}
0 \!=\! \left(\!\!%
\begin{array}{c}
P[\psi_j(X_t^{(2)}, S_t,P_t) - X_{jt}^{(1)} \le \xi] - P[\psi_j(X_t^{(2)},
S_t, P_t) - X_{jt}^{(1)} \le \xi| Z_t=z_t, X_t=x_t] \\ 
E[\psi_j(X_t^{(2)}, S_t,P_t) - X_{jt}^{(1)}]%
\end{array}%
\!\right)
\end{equation*}
for all $\xi,z_t,x_t$. Nonparametric estimation of problems involving this
type of nonlinear restrictions are studied in 
Dunker et al. (2014). To give sufficient conditions for identification, we
define the operator 
\begin{align*}
&F\left(\varphi\right)(\xi,z_t,x_t) :=\\
&\left(%
\begin{array}{c}
P[\varphi(X_t^{(2)}, S_t,P_t) - X_{jt}^{(1)} \le \xi] - P[\varphi(X_t^{(2)},
S_t, P_t) - X_{jt}^{(1)} \le \xi| Z_t=z_t, X_t=x_t] \\ 
E[\varphi(X_t^{(2)}, S_t,P_t) - X_{jt}^{(1)}]%
\end{array}%
\right).
\end{align*}
The function $\psi_j$ is a root of the operator $F$. It is, therefore,
globally identified under the following assumption.

\begin{ass}
\label{ass:global} The operator $F$ has a unique root.
\end{ass}

On first sight this may appear as a strong assumption due to the complexity
of the operator. It is, however, weaker than the usual completeness
assumption for the mean independence assumption. This is because, if $\Xi
_{jt}\mathpalette{\protect\independenT}{\perp}(Z_{t},X_{t})$ and the usual
completeness assumption hold, then $F$ has only one root. On the other hand,
completeness is not necessary for $F$ to have a unique root. Hence, when $%
\Xi _{jt}\mathpalette{\protect\independenT}{\perp}(Z_{t},X_{t})$, Assumption %
\ref{ass:global} is weaker than Assumption \ref{as:npiv1}. Another important
advantage of this method is that because the $D_{j}$ do not vanish, we have
a close analog to nonparametric IV with full independence, see, Dunker et
al. (2014) and Dunker (2021), where $D_{j}$ now plays the role of the
dependent variable.

\subsection{Suggested estimation methods}\label{sec:estimation}

\subsubsection{Nonparametric estimator}

The structure of the nonparametric identification suggests a nonparametric
estimation strategy in a natural way. It consists of three steps. The first
step is the estimation of the structural function $\psi_j$. The second step
is to derive the function $\Phi$ from the estimated $\widehat \psi_j$. The
last step of the estimation is the inversion of a Radon transform.

The mathematical structure of the first step is similar to nonparametric IV.
The conditional expectation operator on the left hand side of the equation 
\begin{equation*}
E[\psi
_{j}(x_{t}^{(2)},P_{t},S_{t})|Z_{t}=z_{t},X_{t}=x_{t}]=x_{jt}^{(1)}\qquad 
\mbox{for
all }x_{t},z_{t}
\end{equation*}%
has to be inverted. Let us denote this linear operator by $T$ and rewrite
the problem as $(T\psi _{j})(z_{t},x_{t})=x_{jt}^{(1)}$. Here $x_{jt}^{(1)}$
should be interpreted as a function in $x_{t}$ and $z_{t}$ which is constant
in $x_{t}^{(2)}$, $z_{t}$, and $x_{it}^{(1)}$ for $i\neq j$. The operator
depends on the joint density of $(X_{t},P_{t},S_{t},Z_{t})$ which has to be
estimated nonparametrically, e.g. by kernel density estimation. This gives
an estimator $\widehat{T}$. As in nonparametric IV the operator equation is
usually ill-posed, and regularized inversion schemes must be applied. We
propose Tikhonov regularization for this purpose: 
\begin{equation}
\widehat{\psi }_{j}:=\min_{\psi }\Vert \widehat{T}\psi -x_{jt}^{(1)}\Vert
_{L^{2}(X_{t},Z_{t})}^{2}+\alpha \mathfrak{R}(\psi ).\label{eq:tikhonov}
\end{equation}%
Here, $\alpha \geq 0$ is a regularization parameter and $\mathfrak{R}$ a
regularization functional. A common choice is $\mathfrak{R}(\psi )=\Vert
\psi \Vert _{L^{2}}^{2}$, however, if more smoothness is expected, this can
be a squared Sobolev norm or some other norm. In the case of bundles and
multiple goods we know that $\psi $ must be monotonically increasing or
decreasing in $S_{t}$. One may incorporate this a priori knowledge by
setting $\mathfrak{R}(\psi )=\infty $ for all functions $\psi $ not having
this property. Since monotonicity is a convex constraint, even with this
choice of $\mathfrak{R}$, equation \eqref{eq:tikhonov} is a convex
minimization problem. Solving the problem is computationally feasible, see
Eggermont (1993), Burger and Osher (2004), and Resmerita (2005) for
regularization with general convex regularization functional. Furthermore,
we refer to Newey and Powell (2003) for the related nonparametric IV problem.

In the second step $\widehat{\psi}_j(X_t^{(2)},P_t,S_t)$ is inverted in $S_t$
to get an estimate $\widehat{\phi}_j$ for the demand function $\phi_j$. In
the BLP model, we approximate the limit of $\widehat\psi_j(X_t^{(2)},P_t,S_t)
$ for $D_{kt} \rightarrow -\infty$ to construct an estimate for $\tilde
\Phi_j$ as in \eqref{eq:blp_phitilde}. When $\epsilon_{ijt}$ is iid across $j
$, one may improve efficiency by repeating this process for all products and
averaging $\tilde \Phi_j$ across $j=1,\cdots,J$. For the pure
characteristics model an estimate of $\tilde \Phi_j$ is computed from $%
\widehat{\phi}_j$ by a sum over permutations as in \eqref{eq:sum_pure}.
Similar constructions can be carried out for the models of bundel choices %
\eqref{eq:sumbund} and multiple unites of consumption \eqref{eq:mult0} -- %
\eqref{eq:mul3}. From an estimator of $\tilde \Phi_j$ we get an estimate $%
\widehat{\Phi}$ of $\Phi$ by normalization as in \eqref{eq:Phi} or %
\eqref{eq:Phi_pure}.

The third step of our nonparametric estimation strategy is the inversion of
a Radon transform.
A popular and efficient method for the problem is the filtered back
projection 
\begin{equation*}
\widehat f_\theta(\vartheta) = \mathcal{R}^*\left( \Omega_r *_\delta \frac{%
\partial \Phi _{j}(x_{j}^{(2)},p_{j},\delta _{j})}{\partial \delta _{j}}
\right)(\vartheta).
\end{equation*}
Here $\vartheta = (b,a,e)$ in the BLP model, $\vartheta = (b,a)$ in the PCM,
or $\vartheta = (b,a,\Delta)$ in other models. The operator $(R^*g)(x):=
\int_{\|w\| = 1} g(w,w^{\prime }x)dw$ is the adjoint of the Radon transform,
and $*_\delta$ denotes the convolution with respect to the last variable $%
\delta_j$, and $\Omega_r$ is the function 
\begin{equation*}
\Omega_r(s) := \frac{1}{4\pi^2} 
\begin{cases}
(\cos(rs)-1)/s^2 \qquad & \mbox{for } s \neq 0, \\ 
r^2/2 \qquad & \mbox{for } s = 0.%
\end{cases}%
\end{equation*}
For more details on this algorithm in a deterministic setting we refer to
Natterer (2001). Alternative estimator for random coefficients are proposed and
analyzed in Hoderlein, Klemel\"a, and Mammen (2010) and Dunker, Mendoza, and Reale (2021).

\subsubsection{Parametric estimators for bundle choice models}

Our nonparametric identification analysis shows that the choice of bundles
and multiple units of consumption can be studied very much in the same way
as the standard BLP model (or the pure characteristic model). This suggests
that one may construct parametric estimators for these models by extending
standard estimation methods, given appropriate data. Below, we take Example
2 and illustrate this idea.

Let $\theta_{it}=(\beta^{(2)}_{it},\alpha_{it},\Delta_{it},\epsilon_{1it},%
\epsilon_{2it})$ be random coefficients and let $f_\theta(\cdot;\gamma)$ be
a parametric density function, where $\gamma$ belongs to a finite
dimensional parameter space $\Gamma\subset\mathbb{R}^{d_\gamma}$. The
estimation procedure consists of the following steps:

\begin{description}
\item[Step 1] : Compute the aggregate share of bundles as a function of
parameter $\gamma$ conditional on the set of covariates.

\item[Step 2] : Use numerical methods to solve demand systems for $%
(D_{1t},D_{2t})$, where $D_{jt}=\Xi_{jt}+X^{(1)}_{jt},j=1,2$ and obtain the
inversion in eq. \eqref{eq:invxi}.

\item[Step 3] : Form a GMM criterion function using instruments and minimize
it with respect to $\gamma$ over the parameter space.
\end{description}

The first step is to compute the aggregate share. One may approximate the
aggregate share of each bundle such as the one in \eqref{eq:10agdemand} by
simulating $\theta $ from $f_{\theta }(\cdot ;\gamma )$ for each $\gamma .$
Specifically, if the conditional CDF of $\epsilon_{ijt}$ given $(\beta
^{(2)}_{it},\alpha_{it},\Delta_{it})$ has an analytic form, the two-step
method in BLP and Berry and Pakes (2007) can be employed. We take the demand
for bundle (0,0) in eq. \eqref{eq:dem00} as an example. Conditional on the
product characteristics $y\equiv (x^{(2)},p,\delta )$ and the rest of the
random coefficients $(\beta ^{(2)}_{it},\alpha_{it},\Delta_{it} )$, bundle
(0,0) is chosen when 
\begin{align}
\sigma_\epsilon\epsilon_{i1t}<h_1(y,b ^{(2)},a,\Delta )& ~~\text{ and }%
~~\sigma_\epsilon\epsilon_{i2t}<h_2(y,b ^{(2)},a,\Delta ),~\text{ if }%
~\Delta<0  \label{eq:alphabds} \\
\sigma_\epsilon\epsilon_{i1t}<h_2(y,b ^{(2)},a,\Delta )& ~~\text{ and }%
~~\sigma_\epsilon(\epsilon_{i1t}+\epsilon_{i2t})<h_3(y,b ^{(2)},a,\Delta ),~%
\text{ if }~\Delta\ge 0,  \label{eq:alphabds1}
\end{align}%
where 
\begin{multline}
h_1(y,\beta ^{(2)},a,\Delta )\equiv -x_{1}^{(2)}{}^{\prime}
b^{(2)}-ap_1-\delta _{1},~~ h_2(y,\beta ^{(2)},a,\Delta ) \equiv
-x_{2}^{(2)}{}^{\prime }b^{(2)}-ap_2-\delta _{2}, \\
h_3(y,\beta^{(2)},a,\Delta)\equiv -(x_{1}^{(2)}+x_{2}^{(2)})^{\prime
(2)}-a(p_1+p_2)-(\delta_1-\delta_2).
\end{multline}%
In what follows, we consider the BLP setting where $\sigma_\epsilon=1$.%
\footnote{%
In the PCM, one may adopt a similar approach by letting one of the remaining
random coefficients play the role of $\epsilon_{ijt}$. For example, replace %
\eqref{eq:alphabds}-\eqref{eq:alphabds1} with 
\begin{align}
a<h_1(y,b^{(2)},\Delta),&\text{ and } a<h_2(y,b^{(2)},\Delta),\text{ if }%
\Delta<0 \\
a<h_2(y,b^{(2)},\Delta),&\text{ and } a<h_3(y,b^{(2)},\Delta),\text{ if }%
\Delta\ge0,
\end{align}
where $h_1(y,b^{(2)},\Delta)=(-x_1^{(2)}{}^{\prime (2)}-\delta_1)/p_1$, and $%
h_2,h_3$ are defined similarly. Specify the conditional distribution of $%
\alpha_{it}$ so that an analog of \eqref{eq:integrand} can be calculated.
The rest of the estimation procedure is similar.} Specify the conditional
distribution of $(\epsilon_{i1t},\epsilon_{i2t})$ given $(\beta^{(2)}_{it},%
\alpha_{it},\Delta_{it})$. For each $(y,b^{(2)},a,\Delta)$, define 
\begin{align}\label{eq:integrand}
\begin{split}
&G(y,b^{(2)},a,\Delta)\equiv\\
&\;\;\begin{cases}
Pr(\epsilon_{i1t}<h_1(y,b ^{(2)},a,\Delta ),~\epsilon_{i2t}<h_2(y,b
^{(2)},a,\Delta )|y,b^{(2)},a,\Delta) & \Delta<0 \\ 
Pr(\epsilon_{i1t}<h_2(y,b ^{(2)},a,\Delta
),~\epsilon_{i1t}+\epsilon_{i2t}<h_3(y,b ^{(2)},a,\Delta
)|y,b^{(2)},a,\Delta) & \Delta>0.%
\end{cases}
\end{split}
\end{align}
The value of $G(y,b^{(2)},a,\Delta)$ can be calculated analytically, for
example, if one specifies the joint distribution of $(\epsilon_{i1t},%
\epsilon_{i2t})$ as normal. Eq. \eqref{eq:alphabds}-\eqref{eq:alphabds1}
then imply that the aggregate share of bundle (0,0) is given by 
\begin{align}
\phi _{(0,0)}(x^{(2)},p,\delta ;\gamma )=\int G(y,b^{(2)},a,\Delta) f_{\beta
^{(2)},a,\Delta }(b,a,\Delta ;\gamma )d\theta .
\end{align}%
This can be approximated by the simulated moment: 
\begin{align}\label{eq:share}
\hat{\phi}_{(0,0)}(x^{(2)},p,\delta ;\gamma )=\frac{1}{n_{S}}%
\sum_{i=1}^{n_{S}}G\left(y,b_{i}^{(2)},a_i,\Delta _{i}\right), 
\end{align}%
where the sample $\{(b_{i}^{(2)},a_i,\Delta _{i}),i=1,\cdots
,n_{S}\}$ is generated from $f_{\beta ^{(2)},a,\Delta }(\cdot ;\gamma )$.%
\footnote{%
One may also use an importance sampling method.} Computation of the
aggregate demand for other bundles is similar. This step therefore gives the
model predicted aggregate demand $\hat{\phi}_{l}$ for all bundles under a
chosen parameter value $\gamma $.

The next step is then to invert subsystems of demand and obtain $\psi$
numerically. Given $\hat\phi_l,l\in\mathbb{L}$ from Step 1, this step can be
carried out by numerically calculating inverse mappings. For example, take $%
\tilde{\mathbb{L}}=\{(0,0),(0,1)\}$. Then, $(\delta_1,\delta_2)\mapsto
\big(\hat\phi_{(0,0)}(x^{(2)},p,\delta;\gamma),$ $\hat\phi_{(0,1)}(x^{(2)},p,%
\delta;\gamma)\big)$ defines a mapping from $\mathbb{R}^2$ to $[0,1]^2$.
Standard numerical methods such as the Newton-Raphson method or the homotopy
method (see Berry and Pakes, 2007) can then be employed to calculate the
inverse of this mapping\footnote{%
Whether the demand subsystems admit an analog of BLP's contraction mapping
method is an interesting open question, which we leave for future research.}%
, which then yields $\hat\psi(\cdot;\gamma)\equiv(\hat
\psi_{1}(\cdot;\gamma),\hat\psi_{2}(\cdot;\gamma))$ such that 
\begin{align}\label{eq:oid1}
\begin{split}
\Xi_{1,t}&=\hat\psi_1(X^{(2)}_t,P_t,S_{(0,0),t},S_{(0,1),t};%
\gamma)-X^{(1)}_{1t}\\
\Xi_{2,t}&=\hat\psi_2(X^{(2)}_t,P_t,S_{(0,0),t},S_{(0,1),t};\gamma)-X^{(1)}_{2t}
\end{split}
\end{align}
where $(S_{(0,0),t},S_{(0,1),t})$ are observed shares of bundles. One may
further repeat this step with $\tilde{\mathbb{L}}=\{(1,0),(1,1)\}$, which
yields 
\begin{align}\label{eq:oid2}
\begin{split}
\Xi_{1,t}&=\hat\psi_3(X^{(2)}_t,P_t,S_{(1,0),t},S_{(1,1),t};%
\gamma)-X^{(1)}_{1t}\\
\Xi_{2,t}&=\hat\psi_4(X^{(2)}_t,P_t,S_{(1,0),t},S_{(1,1),t};\gamma)-X^{(1)}_{2t}
\end{split}
\end{align}
This helps to generate additional moment restrictions in the next step.

The third step is to use \eqref{eq:oid1}-\eqref{eq:oid2} to generate moment
conditions and estimate $\gamma$ by GMM. There are four equations in total,
while because the shares sum up to 1 one equation is redundant. Hence, by
multiplying instruments to the residuals from the first three equations, we
define the sample moment: 
\begin{align*}
g_n(X_t,P_t,S_t,Z_t;\gamma)\equiv\frac{1}{n}\sum_{t=1}^n%
\begin{pmatrix}
\hat\psi_1(X^{(2)}_t,P_t,S_{(0,0),t},S_{(0,1),t};\gamma)-X^{(1)}_{1t} \\ 
\hat\psi_2(X^{(2)}_t,P_t,S_{(0,0),t},S_{(0,1),t};\gamma)-X^{(1)}_{2t} \\ 
\hat\psi_3(X^{(2)}_t,P_t,S_{(1,0),t},S_{(1,1),t};\gamma)-X^{(1)}_{1t}%
\end{pmatrix}
\otimes 
\begin{pmatrix}
Z_t \\ 
X_t%
\end{pmatrix}%
.
\end{align*}
Letting $W_n(\gamma)$ be a (possibly data dependent) positive definite
matrix, define the GMM criterion function by 
\begin{align*}
Q_n(\gamma)\equiv g_n(X_t,P_t,S_t,Z_t;\gamma)^{\prime }W_n(\gamma)
g_n(X_t,P_t,S_t,Z_t;\gamma).
\end{align*}
The GMM estimator $\hat\gamma$ of $\gamma$ can then be computed by
minimizing $Q_n$ over the parameter space. A key feature of this method is
that it uses the familiar BLP methodology (simulation, inversion \& GMM) but
yet allows one to estimate models that do not fall in the class of
multinomial choice models. Employing our procedure may, for example, allow
one to estimate bundle choices (e.g. print newspaper, online newspaper, or
both) or platform choices using market level data.

\subsection{Proofs}\label{sec:proofs}

\begin{proof}[\rm \bf Proof of Theorem \ref{thm:npiv1}]
The proof of the theorem is immediate from Theorem 1 in BH (2014). We therefore give a brief sketch.
By Assumptions \ref{as:index} and \ref{as:inv1}, we note that there exists a function $\psi:\mathbb R^{Jk_2}\times\mathbb R^J\times\mathbb R^J\to\mathbb R^J$ such that for some subvector $\tilde S_t$ of $S_t$,
\begin{align*}
\Xi_{jt}=\psi_j(X_t^{(2)},P_t,\tilde S_t)-X_{jt}^{(1)}~,~j=1,\cdots,J,
\end{align*}
and by Assumption \ref{as:npiv1}, the following moment condition holds:
\begin{align*}
	E[\psi_j(X_t^{(2)},P_t,\tilde S_t)-X_{jt}^{(1)}|Z_t,X_t]=0~.
\end{align*}
Identification of $\psi$ then follows from applying the completeness argument in the proof of Theorem 1 in BH (2014).
\end{proof}

\begin{lemma}\label{lem:limited_angle}
Let $\theta = (\theta_1,\ldots;\theta_d)$ be a $d$-dimensional random vector with density $f_\theta$. Assume that the moments of all components are finite $\E[|\theta_d|^l] < \infty$ for all $i = 1,\ldots, d$ and $l = \mathbb{N}$. In addition, let for any $z > 0$
\[
\lim_{p \rightarrow \infty} \frac{z^l}{l!} E\left[\left(|\theta_1| + |\theta_2| + \ldots + |\theta_d|\right)^l\right]=0.
\]
For any open neighborhood $\mathcal{U} \subset \mathbb{S}^{d-1}$ it holds that if the Radon transform of $\mathcal{R}[f_\theta](w,\delta)$ is known for all $(w,\delta) \in \big\{(w,w't)|w\in\mathcal{U}, t \in \supp(\theta)\big\}$, the density $f_\theta$ is identified.
\end{lemma}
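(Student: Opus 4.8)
The plan is to move to the characteristic–function domain via the projection–slice (Fourier slice) theorem, use the moment hypothesis to promote the characteristic function of $\theta$ to a real-analytic function on all of $\mathbb{R}^d$, and then recover it everywhere by analytic continuation from the full-dimensional cone of directions generated by $\mathcal{U}$.

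First I would note that the stated data is equivalent to knowing the entire section $\delta\mapsto\mathcal{R}[f_\theta](w,\delta)$ on $\mathbb{R}$ for each $w\in\mathcal{U}$: if $\delta\notin\{w't:t\in\supp(\theta)\}$ then the hyperplane $P_{w,\delta}$ is disjoint from $\supp(\theta)$, so $\mathcal{R}[f_\theta](w,\delta)=0$ and nothing new is learned. Since $f_\theta\in L^1(\mathbb{R}^d)$, for each such $w$ the section $\delta\mapsto\mathcal{R}[f_\theta](w,\delta)$ is integrable and its one-dimensional Fourier transform equals $s\mapsto\varphi_\theta(sw)$, where $\varphi_\theta(\xi)\equiv\E[e^{i\xi'\theta}]$ is the characteristic function of $\theta$ (Fourier slice theorem; see Helgason (1999)). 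Consequently $\varphi_\theta$ is identified on $C_\mathcal{U}\equiv\{sw:s\in\mathbb{R},\,w\in\mathcal{U}\}$, which, since $\mathcal{U}$ is open in $\mathbb{S}^{d-1}$, contains the nonempty open subset $\{sw:s>0,\,w\in\mathcal{U}\}$ of $\mathbb{R}^d$.

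Next I would establish that the moment condition makes $\varphi_\theta$ real-analytic on $\mathbb{R}^d$. Writing $m_l\equiv\E[(|\theta_1|+\cdots+|\theta_d|)^l]$, expand for $\xi_0,h\in\mathbb{R}^d$, $\varphi_\theta(\xi_0+h)=\sum_{l\ge0}\frac{i^l}{l!}\E[e^{i\xi_0'\theta}(h'\theta)^l]$, whose $l$-th term is bounded in modulus by $\frac{\|h\|_\infty^l}{l!}m_l$. The hypothesis $\frac{z^l}{l!}m_l\to0$ for every $z>0$, applied with $z=2\|h\|_\infty$, shows $\frac{(2\|h\|_\infty)^l}{l!}m_l$ is bounded, so $\frac{\|h\|_\infty^l}{l!}m_l$ decays geometrically and the series converges absolutely and uniformly on bounded sets; thus $\varphi_\theta$ is real-analytic (indeed the restriction of an entire function) on $\mathbb{R}^d$. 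This is the standard consequence of this type of moment condition, used e.g. in Hoderlein, Holzmann, and Meister (2014) and Masten (2014).

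Finally, suppose $f_\theta$ and $\tilde f_\theta$ are two densities producing the same Radon data on $\{(w,w't):w\in\mathcal{U},\,t\in\supp(\theta)\}$, both satisfying the moment hypothesis. By the first two steps their characteristic functions agree on the nonempty open set $\{sw:s>0,\,w\in\mathcal{U}\}$ and are both real-analytic on the connected set $\mathbb{R}^d$, so by the identity theorem for real-analytic functions they coincide on all of $\mathbb{R}^d$; uniqueness of characteristic functions then gives $f_\theta=\tilde f_\theta$, which is the claimed identification. The main obstacle in writing this out carefully is the analyticity step: converting the bare ``terms tend to zero'' moment hypothesis into genuine \emph{global} real-analyticity of $\varphi_\theta$, with enough uniformity of the power-series estimates to legitimately invoke the identity theorem. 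The support bookkeeping in the first step and the integrability needed for the slice theorem are routine, but do require invoking the $L^1$ structure of $f_\theta$ explicitly.
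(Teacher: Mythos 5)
Your proposal is correct and follows essentially the same route as the paper's proof: both observe that the Radon data extend to all offsets because the transform vanishes off the support's projections, invoke the Fourier slice theorem to identify the characteristic function on the open cone generated by $\mathcal{U}$, use the moment condition to show the characteristic function is (real-)analytic on all of $\mathbb{R}^d$, and conclude by analytic continuation plus injectivity of the Fourier transform. The only cosmetic difference is that you control the power series around an arbitrary point with a geometric-decay argument, while the paper bounds the Taylor remainder directly in multi-index notation; these are equivalent.
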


\begin{proof}[\rm \bf Proof of Lemma \ref{lem:limited_angle}]
We first show that $\mathcal{F}f_\theta$ the Fourier transform of $f_\theta$ is analytic. The Fourier transform can be approximated by the $p$-th Taylor polynomial for some point $b_0 \in \R^d$. The Taylor remainder for some point $b \in \R^d$ is bounded by
\[
R_p (\mathcal{F}f_\theta)(b;b_0) \le \sum_{\alpha \in \mathbb{N}^d, |\alpha|=p+1} \frac{(b-b_0)^\alpha}{\alpha!} \left\|D^\alpha \mathcal{F}f_\theta \right\|_\infty.
\]
In this formula the multi-index notation is used with respect to $\alpha$. This means 
\newline$\alpha = (\alpha_1, \alpha_2, \ldots, \alpha_d) \in \mathbb{N}^d$, $|\alpha| := \sum_{i = 1}^d \alpha_i$, $\alpha! := \prod_ {i = 1}^d \alpha_i!$, and
\[
D^\alpha\mathcal{F}f_\theta = \frac{\partial^{|\alpha|}\mathcal{F}f_\theta}{\partial b_1^{\alpha_1} \partial b_2^{\alpha_2} \ldots \partial b_k^{\alpha_k} }. 
\]
Note that
\begin{align*}
\left\|D^\alpha \mathcal{F}f_\theta\right\|_\infty &\le \int_{\R^d}|v_1^{\alpha_1} v_2^{\alpha_2} \ldots v_d^{\alpha_d}| f_\theta(v_1,v_2,\ldots,v_d)dv\\
& \le \int_{\R^d}|v_1|^{\alpha_1} |v_2|^{\alpha_2} \ldots |v_d|^{\alpha_d} f_\theta(v_1,v_2,\ldots,v_d)dv\\
&= E [|\theta_2|^{\alpha_1} |\theta_2|^{\alpha_2} \ldots |\theta_d|^{\alpha_k}|].
\end{align*}
This yields
\begin{align*}
R_p (\mathcal{F}f_\theta)(b;b_0) &\le \|b-b_0\|_\infty^p E\left[ \sum_{\alpha \in \mathbb{N}^d, |\alpha|=p+1} \frac{|\theta_1|^{\alpha_1} |\theta_2|^{\alpha_2} \ldots |\theta_d|^{\alpha_d}}{\alpha!}\right]\\
& \le \|b-b_0\|_\infty^p E\left[ (p!)^{-1}\left(|\theta_1| + |\theta_2| + \ldots + |\theta_d|\right)^p \right]\\
& \le \frac{\|b-b_0\|_\infty^p}{p!} E\left[ \left(|\theta_1| + |\theta_2| + \ldots + |\theta_d|\right)^p \right].
\end{align*}
Hence, the Taylor approximation converges point-wise to $\mathcal{F}f_\theta$ on $\R^d$. Consequently, if $\mathcal{F}f_\theta$ is know on some neighborhood around $b_0$, $\mathcal{F}f_\theta$ is identified. This makes $\mathcal{F}f_\theta$ an analytic function. Since the Fourier transform is bijective, this identifies $f_\theta$ as well.

It remains to show that $\mathcal{F}f_\theta$ is known in some open neighborhood. By the Fourier slice theorem for the Radon transform $(\mathcal{F}f_\theta)(w\eta) = \mathcal{F}_1(\mathcal{R}f_\theta[w,\cdot])(\eta)$. Here $\mathcal{F}_1$ denotes the one-dimensional Fourier transform that acts on the free variable denoted by ``$~\cdot~$''. Note that $\mathcal{R}f_\theta[w,\delta] = 0$ if $w \in \mathcal{U}$ but $(w,\delta) \notin \big\{(w,w't)|w\in\mathcal{U}, t \in \supp(\theta)\big\}$. Thus, if $\mathcal{R}f_\theta[w,\delta]$ is known for all $(w,\delta) \in \big\{(w,w't)|w\in\mathcal{U}, t \in \supp(\theta)\big\}$, it is known for all
$w \in \mathcal{U}$ and all $\delta \in \R$. It follows that $\mathcal{F}f_\theta$ is known on some open neighborhood. This identifies $f_\theta$.
\end{proof}

\begin{proof}[\rm \bf Proof of Theorem \ref{thm:fbeta}]
(i) First, under the linear random coefficient specification, the connected substitutes assumption in Berry, Gandhi, and Haile (2013) is satisfied. By Theorem 1 in Berry, Gandhi, and Haile (2013), Assumption \ref{as:inv1} is satisfied. Then,
by Assumptions \ref{as:index}-\ref{as:npiv1} and Theorem \ref{thm:npiv1}, $\psi$ is identified. Further, the aggregate demand $\phi$ is identified by \eqref{eq:blpeq} and the identity $\phi_0=1-\sum_{j=1}^J\phi_j$. 

For any product $j$ and product characteristics $(x_j^{(2)},p_j,\delta_j)$ define the new function
\[
\tilde \Phi_j(x_j^{(2)},p_j,\delta_j) = - \lim_{\delta_1,\ldots,\delta_{j-1},\delta_{j+1},\ldots, \delta_J \rightarrow -\infty} \phi_j(x^{(2)},p,\delta)
\]
point wise. Here $\phi_j(x^{(2)},p,\delta)$ can be any fixed vector of product characteristics where $(x_j^{(2)},p_j,\delta_j)$ coincide with the values on the l.h.s. of the equation. The limit on the r.h.s. exists and is unique. This can be seen by using the definition of $\phi_j$, Lebesgue's theorem, and Assumption \ref{as:rc_blp}. Consequently,
\[
\tilde \Phi(x_j^{(2)},p_j,\delta_j) = -\int 1\{x_j^{(2)}{}'b^{(2)}+ap_j + \epsilon_j < -\delta_j\}f_{\vartheta_j}(b^{(2)},a,e_j)d\vartheta_j.
\]

Now define $\Phi$ as in \eqref{eq:Phi} and conclude
\begin{multline}
\Phi(w,u )=-\int 1\{w'\theta<-u \}f_{\vartheta_j
}(b^{(2)},a,e_j)d\theta \\
=-\int_{-\infty }^{-u}\int_{P_{w,r}}f_{\vartheta_j}(b^{(2)},a,e_j)d\mu _{{%
w,r}}(b^{(2)},a,e_j)dr=-\int_{-\infty }^{-u}\mathcal{R}[f_{\vartheta_j
}](w,r)dr~.\label{eq:radonap}
\end{multline}%
Taking a derivative with respect to $u$ yields \eqref{eq:radon1}.
By the assumption that the conditional distribution of $\epsilon_{ijt}$ given $(\beta^{(2)}_{it},\alpha_{it})$ is identical for $j=1,\cdots,J$, it follows that $f_{\vartheta_j}=f_{\vartheta},\forall j$ for some common density $f_{\vartheta}$. Hence, we may rewrite  \eqref{eq:radon1} as
\begin{equation}
\frac{\partial \Phi(w,u)}{\partial u}=\mathcal{R}[f_{\vartheta }](w,u).
\end{equation}
Note that by Assumptions \ref{as:covariates} (i) and \ref{as:rc_blp}, 
$\partial\Phi(w,u)/\partial u$ is well-defined for some $(w,u)\in\mathbb H_+\times\mathbb R$. 
By Assumption \ref{as:analytic} $\partial\Phi(w,u)/\partial u$ is either identified for all $(w,u)\in\mathbb H_+\times\mathbb R$ or only for $w$ in some open neighborhood of $\mathbb H_+$.
In the first case the identification of $f_{\vartheta}$ follows from the injectivity of the Radon transform (Theorem I in Cram\'er and Wold, 1936). In the second case the the identification of $f_{\vartheta}$ follows from Lemma \ref{lem:limited_angle}.

(ii) In the first part of the proof $f_{\vartheta_j}$, $j=1,2,\ldots,J$ were identified (as $f_\vartheta$). Hence, the conditional distribution $f_{\epsilon_{j}|\beta^{(2)},\alpha}$ of $\epsilon_{ijt}$ given $(\beta^{(2)}_{it},\alpha_{it})$ and the marginal distribution $f_{\beta^{(2)},\alpha}$ of $(\beta^{(2)}_{it},\alpha_{it})$ are identified for any $j$.
Under the additional assumption that $\epsilon_{i1t}, \epsilon_{i2t}, \ldots, \epsilon_{iJt}$ are independent conditional on $(\beta^{(2)}_{it},\alpha_{it})$, we get the joint distribution of $\theta_{it}$ by
\begin{align}
f_{\theta}(b^{(2)},\alpha,e_1, \ldots, e_J) = \prod_{j=1}^J {f_{\epsilon_{j}|\beta^{(2)},\alpha}} (e_j|b^{(2)},\alpha)\times f_{\beta^{(2)},\alpha}(b^{(2)},\alpha).	\label{eq:ftheta_id}
\end{align}
Hence, $f_\theta$ is identified.
\end{proof}

\begin{proof}[\rm \bf Proof of Theorem \ref{thm:fbeta_pure}]
	First, under the linear random coefficient specification, the connected substitutes assumption in Berry, Gandhi, and Haile (2013) is satisfied. By Theorem 1 in Berry, Gandhi, and Haile (2013), Assumption \ref{as:inv1} is satisfied. Then,
by Assumptions \ref{as:index}-\ref{as:npiv1} and Theorem \ref{thm:npiv1}, $\psi$ is identified. Further, the aggregate demand $\phi$ is identified by \eqref{eq:blpeq} and the identity $\phi_0=1-\sum_{j=1}^J\phi_j$. By Assumption \ref{as:support_pure}, for each  $(\mathbf x^{(2)},\mathbf p,\mathbf d)\in \mathbb R^{d_X-1}\times\mathbb R\times \mathbb R$, there is a product (say $j$), with respect to which the marginalization of the demand is permitted. Therefore, there is $(x^{(2)},p,\delta)\in\mathcal H_j$ whose coordinate projection is $(\mathbf x^{(2)},\mathbf p,\mathbf d)$. Hence, one may construct
\begin{align}\label{eq:MJsum}
\begin{split}
\tilde \Phi_j(\mathbf x^{(2)},\mathbf p,\mathbf d)&=\sum_{\mathcal J\subseteq\{1,\cdots,J\}\setminus\{j\}}  \phi_j\circ \mathcal M_{\mathcal J}(x^{(2)},p,\delta)\\
&=\int 1\{x_j^{(2)}{}'b^{(2)}+ap_j<-\delta_j\}f_\theta(b^{(2)},a)d\theta,
\end{split}
\end{align}
where the second equality follows because of the following. First, $\mathcal M_{\mathcal J}$ replaces the indicators in $\phi_j$ of the form $1\{(x_j^{(2)}-x_i^{(2)})'b^{(2)}+a(p_j-p_i)<-(\delta_j-\delta_i)\}$ with $1\{(x_j^{(2)}-x_i^{(2)})'b^{(2)}+a(p_j-p_i)>-(\delta_j-\delta_i)\}$ for $i\in\mathcal J$. The random coefficients are assumed to be  continuously distributed. We therefore have
\begin{multline*}
	1\{(x_j^{(2)}-x_i^{(2)})'b^{(2)}+a(p_j-p_i)<-(\delta_j-\delta_i)\}\\+1\{(x_j^{(2)}-x_i^{(2)})'b^{(2)}+a(p_j-p_i)>-(\delta_j-\delta_i)\}=1,~a.s.
\end{multline*}
Therefore, $\sum_{\mathcal J\subseteq\{1,\cdots,J\}}\phi_j\circ\mathcal M_{\mathcal J}(x^{(2)},p,\delta)=1$.
Since $\tilde \Phi_j$ is constructed by summing $\phi_j\circ \mathcal M_{\mathcal J}$ over subsets of $\{1,\cdots,J\}$ except $\mathcal \{j\}$, we are left with the integral of the single indicator function $1\{x_j^{(2)}{}'b^{(2)}+ap_j<-\delta_j\}$ with respect to $f_\theta$. This ensures \eqref{eq:MJsum}. 

Now  define $\Phi$ as in \eqref{eq:Phi_pure}. Then, it follows that
\begin{align*}
\Phi(w,u )&=-\int 1\{w'\theta<-u \}f_{\theta
}(b^{(2)},a)d\theta \\
&=-\int_{-\infty }^{-u}\int_{P_{w,r}}f_{\theta }(b^{(2)},a)d\mu _{{%
w,r}}(b^{(2)},a)dr=-\int_{-\infty }^{-u}\mathcal{R}[f_{\theta
}](w,r)dr~.
\end{align*}%
Taking a derivative with respect to $u$ then yields
\begin{align}
	\frac{\partial \Phi(w,u)}{\partial u}=\mathcal{R}[f_{\theta }](w,u).
\end{align}
 Note that by Assumption \ref{as:support_pure} $\partial\Phi(w,u)/\partial u$ is either well-defined for all $(w,u)\in\mathbb H_+\times\mathbb R$ or only for $w$ in some open neighborhood. In the first case the theorem follows from the injectivity of the Radon transform. In the second case it follow from Lemma \ref{lem:limited_angle}.
\end{proof}

The following lemma is used in the proof of Theorem \ref{thm:idsub}.
\begin{lemma}\label{lem:inv1} 
Suppose the Assumptions \ref{as:index} and
Condition \ref{cond:bundles} hold and that $\phi_l$ is given as in Example 2
or Example 3 with $l \in \tilde{\mathbb{L}} = \{(0,1),(0,0)\}$. Then for all 
$(x^{(2)},p) = \big(x_1^{(2)}, x_2^{(2)}, p_1, p_2\big) \in \mathbb{R}%
^{2k}$ with $(x_1^{(2)},p_1) \neq (x_2^{(2)},p_2)$ the function $\phi:%
\mathbb{R}^{2k}\times\mathbb{R}^2\to[0,1]^2$ defined as 
\begin{align*}
\phi(x^{(2)}_1,x^{(2)}_2,p_1,p_2,d_1,&d_2)\\ &\equiv \left[\phi_{(0,0)}%
\left(x^{(2)}_1,x^{(2)}_2,p_1,p_2,d_1,d_2\right),\phi_{(0,1)}%
\left(x^{(2)}_1,x^{(2)}_2,p_1,p_2,d_1,d_2\right)\right]
\end{align*}
is invertible in $(d_1,d_2)$ on any bounded subset of $\mathbb{R}^2$. This
holds for other appropriate choices of $\tilde{\mathbb{L}}$ as well (e.g. $\tilde {\mathbb L}=\{(1,0),(1,1)\}$).
\end{lemma}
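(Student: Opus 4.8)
The plan is to bring the map $(d_1,d_2)\mapsto \phi(x_1^{(2)},x_2^{(2)},p_1,p_2,d_1,d_2)$ within the scope of the Gale--Nikaido global univalence theorem (Gale and Nikaido, 1965), after a harmless componentwise sign change that turns its Jacobian into a P-matrix. First I would record the explicit forms of the two components. Starting from the utility specification \eqref{eq:bundleutil} (and its analogue in Example 3), bundle $(0,0)$ is chosen exactly when $x_1^{(2)\prime}b^{(2)}+ap_1+e_1<-d_1$, $x_2^{(2)\prime}b^{(2)}+ap_2+e_2<-d_2$ and $(x_1^{(2)}+x_2^{(2)})^{\prime}b^{(2)}+a(p_1+p_2)+(e_1+e_2)+\Delta<-d_1-d_2$, while bundle $(0,1)$ is chosen exactly when $x_2^{(2)\prime}b^{(2)}+ap_2+e_2>-d_2$, $(x_1^{(2)}-x_2^{(2)})^{\prime}b^{(2)}+a(p_1-p_2)+(e_1-e_2)<d_2-d_1$ and $x_1^{(2)\prime}b^{(2)}+ap_1+e_1+\Delta<-d_1$; integrating the corresponding indicators against $f_\theta$ gives $\phi_{(0,0)}$ and $\phi_{(0,1)}$.

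Next I would differentiate these in $(d_1,d_2)$ under the integral sign, which is legitimate by the smoothness in Condition \ref{cond:bundles} (the derivatives are boundary integrals of sections of the continuous density $f_\theta$ over affine, $(d_1,d_2)$-dependent hyperplanes). Reading off which constraints contain $d_1$ and which contain $d_2$, and with what orientation, shows that $\phi_{(0,0)}$ is decreasing in both $d_1$ and $d_2$, whereas $\phi_{(0,1)}$ is decreasing in $d_1$ and increasing in $d_2$. Moreover, full support of $(\epsilon_{i1t},\epsilon_{i2t})$ on $\mathbb{R}^2$ (Condition \ref{cond:bundles}) forces the two diagonal boundary integrals to be strictly signed, since the relevant boundary segments are non-degenerate half-lines on which the conditional law of $(\epsilon_1,\epsilon_2)$ places positive mass. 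Hence the Jacobian of $(\phi_{(0,0)},\phi_{(0,1)})$ in $(d_1,d_2)$ has sign pattern $\left(\begin{smallmatrix}-&-\\-&+\end{smallmatrix}\right)$, with strictly negative, resp.\ strictly positive, diagonal entries.

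Now set $H\equiv(-\phi_{(0,0)},\phi_{(0,1)})$. Its Jacobian then has sign pattern $\left(\begin{smallmatrix}+&+\\-&+\end{smallmatrix}\right)$, i.e.\ $(DH)_{11}>0$, $(DH)_{22}>0$, $(DH)_{12}\ge 0$, $(DH)_{21}\le 0$, so $\det DH=(DH)_{11}(DH)_{22}-(DH)_{12}(DH)_{21}\ge (DH)_{11}(DH)_{22}>0$; thus every principal minor of $DH$ is positive and $DH$ is a P-matrix at every point of $\mathbb{R}^2$. Since $H$ is $C^1$, the Gale--Nikaido theorem gives that $H$ is injective on every closed rectangle, hence on every bounded $B\subset\mathbb{R}^2$ (embed $B$ in a rectangle). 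As $H$ and $\phi=(\phi_{(0,0)},\phi_{(0,1)})$ differ only by a sign flip in the first coordinate, $\phi$ is injective on $B$, which is the claim. For $\tilde{\mathbb{L}}=\{(1,0),(1,1)\}$ the argument is identical, except that the monotonicities already established in the body ($\varphi_{(1,0)}$ strictly increasing in $D_1$ and strictly decreasing in $D_2$, $\varphi_{(1,1)}$ strictly increasing in both) yield the sign pattern $\left(\begin{smallmatrix}+&-\\+&+\end{smallmatrix}\right)$, which is a P-matrix with no re-signing needed.

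The step I expect to be the real work is the second one: showing that the diagonal partials are strictly, not merely weakly, signed. Concretely, one must convert "full support of $(\epsilon_{i1t},\epsilon_{i2t})$" into the statement that, for a positive-measure set of $(b^{(2)},a,\Delta)$, the hyperplane cross-sections defining $\partial\phi_{(0,0)}/\partial d_1$ and $\partial\phi_{(0,1)}/\partial d_2$ are genuinely lower-dimensional and carry positive $f_\theta$-mass, so that the boundary integrals do not collapse; the appeal to Gale--Nikaido itself, and the observation that componentwise monotone re-parametrizations preserve injectivity, are routine.
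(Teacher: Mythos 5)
Your proposal is correct and follows essentially the same route as the paper's own proof: read off the monotonicity of $\phi_{(0,0)}$ and $\phi_{(0,1)}$ in $(d_1,d_2)$ from the indicator representations, use the full support of $(\epsilon_{i1t},\epsilon_{i2t})$ in Condition \ref{cond:bundles} to make those monotonicities strict, and apply Gale--Nikaido on a closed rectangle containing the given bounded set. Your explicit re-signing $H=(-\phi_{(0,0)},\phi_{(0,1)})$ to produce a genuine P-matrix Jacobian is in fact slightly more careful than the paper's version, which invokes Gale--Nikaido directly on the Jacobian with sign pattern $\left(\begin{smallmatrix}-&-\\-&+\end{smallmatrix}\right)$ while asserting that its principal minors are all strictly negative (one diagonal entry is positive), so your variant closes that small imprecision.
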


\begin{proof}[\rm \bf Proof of Lemma \ref{lem:inv1}]
We start with the observation that $\phi_{(0,0)}(x^{(2)},p,d)$ is monotonically decreasing in $d_1$ and also in $d_2$ while $\phi_{(0,1)}(x^{(2)},p,d)$ is monotonically decreasing in $d_1$ and monotonically increasing in $d_2$ by definition. 
Furthermore, the full support of $\epsilon_1$ and $\epsilon_2$ implies that $\phi_{(0,0)}$ and $\phi_{(0,1)}$ are strictly increasing or decreasing in $d_1$ and $d_2$
\begin{align*}
\frac{\partial \phi_{(0,0)}(x^{(2)},p,d)}{\partial d_1} < 0,\quad \frac{\partial \phi_{(0,0)}(x^{(2)},p,d)}{\partial d_2} < 0,\\
\frac{\partial \phi_{(0,1)}(x^{(2)},p,d)}{\partial d_1} < 0, \quad \frac{\partial \phi_{(0,1)}(x^{(2)},p,d)}{\partial d_2} > 0.
\end{align*}
Hence, the determinant of the Jacobian of $d \mapsto \phi(x^{(2)},p,d)$ as well as their principle minors are strictly negative for all $d \in \supp(D)$
\begin{align*}
\det(J_\phi)(x,d) &= \frac{\partial \phi_{(0,0)}(x^{(2)},p,d)}{\partial d_1} \frac{\partial \phi_{(0,1)}(x^{(2)},p,d)}{\partial d_2} - \frac{\partial \phi_{(0,1)}(x^{(2)},p,d)}{\partial d_1} \frac{\partial \phi_{(0,0)}(x^{(2)},p,d)}{\partial d_2}\\
&< 0.
\end{align*}
Thus, on every rectangular domain in $\R^2$ the assumptions of the Gale-Nikaido theorem are fulfilled. Since any bounded subset in $\R^2$ is contained in some rectangular domain, $\phi$ is invertible on any bounded subset of $\R^2$.
\end{proof}

\begin{proof}[\rm \bf Proof of Theorem \ref{thm:idsub}]
(a) First, let $\tilde{\mathbb L}=\{(1,0),(1,1)\}.$ By Condition \ref{cond:bundles} and Lemma \ref{lem:inv1}, Assumption \ref{as:inv1} is satisfied.
By Assumptions \ref{as:index}-\ref{as:npiv1} and Theorem \ref{thm:npiv1}, $\psi$ is identified. Further, the aggregate demand $\{\phi_l,l=(1,0),(1,1)\}$ is identified by Lemma \ref{lem:inv1}. Second, take $\tilde{\mathbb L}=\{(0,0),(0,1)\}.$ Then by the same argument, the aggregate demand $\{\phi_l,l=(0,0),(0,1)\}$ is identified as well. Hence, the entire aggregate demand vector $\phi$ is identified.

Recall that the demand for bundle (0,0) satisfies \eqref{eq:dem00}. Together with Assumption \ref{as:rc_blp} and Lebesgue's theorem the limits
\begin{align*}
\tilde\Phi_{(0,0),1}(x^{(2)}_1,p_1,\delta_1)&= -\lim_{\delta_2 \rightarrow -\infty} \phi_{(0,0)}(x^{(2)},p,\delta) \\
&= -\int 1\{x^{(2)}_1{}^{\prime}b^{(2)}+ap_1+e_1<-\delta_1\}f_{\theta}(b^{(2)},a,e,\Delta)d\theta\\
&= -\int 1\{x^{(2)}_1{}^{\prime}b^{(2)}+ap_1+e_1 <-\delta_1\}f_{\vartheta_1}(b^{(2)},a,e_1)d\vartheta_1\\
\tilde\Phi_{(0,0),2}(x^{(2)}_2,p_2,\delta_2)&= -\lim_{\delta_1 \rightarrow -\infty} \phi_{(0,0)}(x^{(2)},p,\delta) \\
&= -\int 1\{x^{(2)}_2{}^{\prime}b^{(2)}+ap_2+e_2<-\delta_2\}f_{\theta}(b^{(2)},a,e,\Delta)d\theta\\
&= -\int 1\{x^{(2)}_2{}^{\prime}b^{(2)}+ap_2+e_2 <-\delta_2\}f_{\vartheta_2}(b^{(2)},a,e_2)d\vartheta_2
\end{align*}
exist and are unique. Note that in both equations $\Delta$ and $e_1$ or $e_2$ are integrated out. Hence, the first equation connects $f_{\vartheta_1}$ to $\tilde\Phi_{(0,0),1}$ and the second equation connects $f_{\vartheta_2}$ to $\tilde\Phi_{(0,0),2}$. Following the argumentation in the proof of Theorem \ref{thm:fbeta} yields that $f_{\vartheta_1}$ and $f_{\vartheta_2}$ are identified.

As a second step we repeat the argument for $\phi_{(1,1)}$. The demand for bundle (1,1) can be written as \eqref{eq:bundle11}. By taking the limits
\begin{allowdisplaybreaks}
\begin{align*}
\tilde\Phi_{(1,1),1}(x^{(2)}_1,p_1,\delta_1) &= -\lim_{\delta_2 \rightarrow -\infty} \phi_{(1,1)}(x^{(2)},p,\delta)\\
&= - \int 1\{x^{(2)}_1{}^{\prime}b^{(2)}+ap_1+e_1 + \Delta <-\delta_1\}f_{\theta}(b^{(2)},a,e,\Delta)d\theta\\
&= -\int 1\{x^{(2)}_1{}^{\prime}b^{(2)}+ap_1+e_1 + \Delta <-\delta_1\}f_{\eta_1}(b^{(2)},a,e_1+\Delta)d\eta_1\\
\tilde\Phi_{(1,1),2}(x^{(2)}_2,p_2,\delta_2) &= -\lim_{\delta_1 \rightarrow -\infty} \phi_{(1,1)}(x^{(2)},p,\delta)\\
&= -\int 1\{x^{(2)}_2{}^{\prime}b^{(2)}+ap_2+e_2 + \Delta <-\delta_2\}f_{\theta}(b^{(2)},a,e,\Delta)d\theta\\
&= - \int 1\{x^{(2)}_2{}^{\prime}b^{(2)}+ap_2+e_2 + \Delta <-\delta_2\}f_{\vartheta_2}(b^{(2)},a,e_2 + \Delta)d\eta_2
\end{align*}
\end{allowdisplaybreaks}
and following the argument in the proof of Theorem \ref{thm:fbeta} the identification of $f_{\eta_1}$ and $f_{\eta_1}$ is proven.

(b) With $f_{\eta_j}$ for $j = 1,2$ the characteristic function $\Psi_{\Delta + \epsilon_{j}|(\beta^{(2)},\alpha)}$ of $(\Delta_{it}+\epsilon_{ijt})$ conditional on $(\beta^{(2)}_{it}, \alpha_{it})$ is identified as well. With the conditional independence assumption $\Delta_{it}\perp \epsilon_{ijt}|(\beta^{(2)}_{it},\alpha_{it})$ and $\Psi_{\epsilon_{j}|(\beta^{(2)},\alpha)}(t)\ne 0$ for almost all $t\in\mathbb R$ the densities $f_{\eta_j}$ and $f_{\vartheta_j}$ can be disentangled by the deconvolution:
\[
f_{\Delta | \beta^{(2)},\alpha} = \mathcal{F}^{-1} \left(\frac{\Psi_{\Delta + \epsilon_{j}|(\beta^{(2)},\alpha)}}{\Psi_{\epsilon_{j}|(\beta^{(2)},\alpha)}}\right),
\]
where $\mathcal{F}$ denotes the Fourier transform with respect to $\Delta$. This obviously identifies $f_{\beta^{(2)},\alpha,\Delta}$ as well. If in addition $\epsilon_{i1t}$ and $\epsilon_{i2t}$ are independent conditional on $(\beta^{(2)}_{it}, \alpha_{it})$, the density of $f_\theta$ is identified by
\[
f_\theta(b^{(2)},a,e,\Delta) = f_{\epsilon_1|\beta^{(2)},\alpha}(e_1|b^{(2)},a) \, f_{\epsilon_2|\beta^{(2)},\alpha}(e_2|\beta^{(2)},\alpha) \, f_{\beta^{(2)},\alpha,\Delta}(b^{(2)},a,\Delta)
\]
This completes the proof of the theorem.
\end{proof}

\begin{proof}[\rm \bf Proof of Theorem \ref{thm:mult1}]
First, let $\tilde{\mathbb L}=\{(2,0),(2,1)\}.$ By Condition \ref{cond:bundles} and Lemma \ref{lem:inv1}, Assumption \ref{as:inv1} is satisfied.
By Assumptions \ref{as:index}-\ref{as:npiv1} and Theorem \ref{thm:npiv1}, $\psi$ is identified. This implies that the aggregate demand $\{\phi_l,l=(2,0),(2,1)\}$ is identified. Second, take $\tilde{\mathbb L}=\{(0,0),(0,1)\}.$ Then by the same argument, the aggregate demand $\{\phi_l,l=(0,0),(0,1)\}$ is identified as well. Again by Condition \ref{cond:bundles}, we can take the limits
\begin{allowdisplaybreaks}
\begin{align*}
&\tilde\Phi_{(0,0)}(x^{(2)}_1,p_1,\delta_1) = -\lim_{\delta_2 \rightarrow -\infty} \phi_{(0,0)}(x^{(2)},p,\delta)\\
&\quad=-\int 1\{x^{(2)}_1{}^{\prime}b^{(2)}+ap_1+e_1 < -\delta_1\}f_{\theta}(b^{(2)},a,e,\Delta)d\theta\\
&\quad=-\int 1\{x^{(2)}_1{}^{\prime}b^{(2)}+ap_1+e_1<-\delta_1\}f_{(\beta^{(2)},\alpha,\epsilon_1)}(b^{(2)},a,e_1)d\theta\\
&\tilde\Phi_{(0,1)}(x^{(2)}_1,p_1,\delta_1) = -\lim_{\delta_2 \rightarrow \infty} \phi_{(0,1)}(x^{(2)},p,\delta)\\
&\quad=-\int 1\{x^{(2)}_1{}^{\prime}b^{(2)}+ap_1+e_1+\Delta_{(1,1)}>-\delta_1\}f_{\theta}(b^{(2)},a,e,\Delta_{(1,1)})d\theta\\
&\quad=-\int 1\{x^{(2)}_1{}^{\prime}b^{(2)}+ap_1+e_1+\Delta_{(1,1)}>-\delta_1\}f_{(\beta^{(2)},\alpha,\epsilon_1+\Delta_{(1,1)})}(b^{(2)},a,e+\Delta_{(1,1)})d\theta\\
&\tilde\Phi_{(2,0)}(x^{(2)}_1,p_1,\delta_1) = -\lim_{\delta_2 \rightarrow -\infty} \phi_{(2,0)}(x^{(2)},p,\delta)\\
&\quad=-\int 1\{x^{(2)}_1{}^{\prime}b^{(2)}+ap_1+e_1+\Delta_{(2,0)}>-\delta_1\}f_{\theta}(b^{(2)},a,e,\Delta_{(2,0)})d\theta\\
&\quad=-\int 1\{x^{(2)}_1{}^{\prime}b^{(2)}+ap_1+e_1+\Delta_{(2,0)}>-\delta_1\}f_{(\beta^{(2)},\alpha,\epsilon_1+\Delta_{(2,0)})}(b^{(2)},a,e_1+\Delta_{(2,0)})d\theta\\
&\tilde\Phi_{(2,1)}(x^{(2)}_1,p_1,\delta_1) = -\lim_{\delta_2 \rightarrow \infty} \phi_{(2,1)}(x^{(2)},p,\delta)\\
&\quad=\int 1\{x^{(2)}_1{}^{\prime}b^{(2)}+ap_1+e_1+\Delta_{(2,1)}-\Delta_{(1,1)}>-\delta_1\}f_{\theta}(b^{(2)},a,e,\Delta_{(1,1)})d\theta\\
&\quad=\int 1\{x^{(2)}_1{}^{\prime}b^{(2)}+ap_1+ e_1+\Delta_{(2,1)}-\Delta_{(1,1)}>-\delta_1\}\\
&\hspace{33pt}\times f_{(\beta^{(2)},\alpha,\epsilon_1+\Delta_{(2,1)}-\Delta_{(1,1)})}(b^{(2)},a,e_1+\Delta_{(2,1)}-\Delta_{(1,1)})d\theta~.
\end{align*}
\end{allowdisplaybreaks}
By the argument in the proof of Theorem \ref{thm:fbeta} and the assumption that $(X_{1t}^{(2)},P_{1t},D_{1t})$ has a full support, this identifies the joint densities of $(\beta^{(2)}_{it},\alpha_{it},\epsilon_{i1t})$, $(\beta^{(2)}_{it},\alpha_{it},\epsilon_{i1t}+\Delta_{i,(1,1),t})$, $(\beta^{(2)}_{it},\alpha_{it},\epsilon_{i1t}+\Delta_{i,(2,0),t})$, and $(\beta^{(2)}_{it},\alpha_{it},\epsilon_{i1t} + \Delta_{i,(2,1),t}-\Delta_{i,(1,1),t})$ respectively.

In what follows, the arguments are made conditional on $(\beta^{(2)}_{it},\alpha_{it})$ unless otherwise noted. By Assumption \ref{as:mult_ci} (i), we may disentangle the distribution of  $\epsilon_{i1t}$ with that of $\Delta_{i,(1,1),t}$, $\Delta_{i,(2,0),t}$, and $\Delta_{i,(2,1),t}-\Delta_{i,(1,1),t}$ respectively by deconvolution as done in the proof of Theorem \ref{thm:idsub}. Thus, the marginal densities of $\Delta_{i,(2,1),t}-\Delta_{i,(1,1),t}$ and $\Delta_{i,(1,1),t}$ are identified. Further, we note that $\Delta_{i,(2,1),t}-\Delta_{i,(1,1),t}$ is a convolution of $\Delta_{i,(2,1),t}$ and $-\Delta_{i,(1,1),t}$.
By Assumption \ref{as:mult_ci} (ii), 
Proposition 8 of Carrasco and Florens (2010) applies. Hence, the marginal density of $\Delta_{i,(2,1),t}$ is identified. By Assumption \ref{as:mult_ci} (i), $\Delta_{i,(1,1),t}\perp\Delta_{i,(2,0),t}\perp\Delta_{i,(2,1),t}$ conditional on $(\beta^{(2)}_{it},\alpha_{it},\epsilon_{i1t})$, and each of the marginal densities was identified in the previous step. Therefore, the joint density $f_{(\Delta_{(1,1)},\Delta_{(2,0)},\Delta_{(2,1)})|(\beta^{(2)}_{it},\alpha_{it},\epsilon_{i1t})}$ is identified as the product of the marginal densities. Since the density of $(\beta^{(2)}_{it},\alpha_{it},\epsilon_{i1t})$ is identified as well, we may identify the joint density $f_{\vartheta_1}$ as $f_{\vartheta_1}=f_{(\Delta_{(1,1)},\Delta_{(2,0)},\Delta_{(2,1)})|(\beta^{(2)},\alpha,\epsilon_1)}f_{(\beta^{(2)},\alpha,\epsilon_1)}$.  $f_{\vartheta_2}$ is identified as $f_{\vartheta_1}$ by Assumption \ref{as:mult_ci} (ii). By Assumption \ref{as:mult_ci} (ii) and arguing as in \eqref{eq:ftheta_id}, $f_\theta$ is identified. Given $f_\theta$, all components of $\phi$ is identified. This completes the proof of the theorem.
\end{proof}



\end{document}